\documentclass[11pt, letterpaper]{article}
\usepackage{standalone}
\usepackage{etoolbox}

\usepackage{nicefrac}
\usepackage[letterpaper,margin=1in]{geometry}
\usepackage{amsthm}
\usepackage{amsmath}
\usepackage{amssymb}
\usepackage{thmtools}
\usepackage{thm-restate}
\usepackage{mathtools}
\usepackage{mdframed}

\usepackage{setspace}

%\doublespacing
%\onehalfspacing

\usepackage{placeins}

\usepackage[table]{xcolor}

\definecolor{linkcol}{rgb}{0,0,0.38}
\definecolor{citecol}{rgb}{0.8,0,0}
\definecolor{urlcol}{rgb}{0.1,0.35,0}

\usepackage[pdfpagelabels,bookmarks=false,hyperfootnotes=false,hypertexnames=false]{hyperref}
\hypersetup{colorlinks, linkcolor=linkcol, citecolor=citecol, urlcolor=urlcol}

\allowdisplaybreaks

\DeclareFontFamily{U}{BOONDOX-calo}{\skewchar\font=45 }
\DeclareFontShape{U}{BOONDOX-calo}{m}{n}{
  <-> s*[1.05] BOONDOX-r-calo}{}
\DeclareFontShape{U}{BOONDOX-calo}{b}{n}{
  <-> s*[1.05] BOONDOX-b-calo}{}
\DeclareMathAlphabet{\link}{U}{BOONDOX-calo}{m}{n}
\DeclareMathAlphabet{\blink}{U}{BOONDOX-calo}{b}{n}

\usepackage[utf8]{inputenc}

\usepackage[
backend=biber,
style=alphabetic,
citestyle=alphabetic,
maxalphanames=4,
maxcitenames=99,
mincitenames=98,
maxbibnames=99,
giveninits=true,
]{biblatex}

\addbibresource{lit.bib}

\usepackage[nameinlink]{cleveref}

\usepackage{lmodern}

\usepackage{bbm}
\usepackage{thm-restate}

\newtheoremstyle{light} % name
    {\topsep}                    % Space above
    {\topsep}                    % Space below
    {\itshape}                   % Body font
    {}                           % Indent amount
    {\scshape}                   % Theorem head font
    {.}                          % Punctuation after theorem head
    {.5em}                       % Space after theorem head
    {}  % Theorem head spec (can be left empty, meaning ‘normal’)

\newtheorem{theorem}{Theorem}[section]
\newtheorem{lemma}[theorem]{Lemma}

\newtheorem{definition}[theorem]{Definition}

\newtheorem{observation}[theorem]{Observation}

\theoremstyle{light}

\makeatletter
\if@cref@capitalise
\crefname{claiminproof}{Claim}{Claims}
\else
\crefname{claiminproof}{claim}{claims}
\fi

\if@cref@capitalise
\crefname{algocf}{Algorithm}{Algorithms}
\else
\crefname{algocf}{algorithm}{algorithms}
\fi

\if@cref@capitalise
\crefname{conjecture}{Conjecture}{Conjectures}
\else
\crefname{conjecture}{conjecture}{conjectures}
\fi

\if@cref@capitalise
\crefname{thm}{Theorem}{Theorems}
\else
\crefname{thm}{theorem}{theorems}
\fi

\if@cref@capitalise
\crefname{lem}{Lemma}{Lemmas}
\else
\crefname{lem}{lemma}{lemmas}
\fi

\makeatother

\usepackage{url}
\urlstyle{same}

\usepackage{array}

\usepackage{setspace}

\usepackage{wrapfig}

\makeatletter
\newcommand{\labeltarget}[1]{\Hy@raisedlink{\hypertarget{#1}{}}}
\makeatother

\usepackage{upgreek}

\usepackage{complexity}

\usepackage[inline]{enumitem}
\usepackage{moreenum}

\usepackage[super]{nth}

\usepackage{bm}

\usepackage{xspace}

\usepackage{ifthen}

\usepackage[immediate]{silence}
\WarningFilter[tmp]{latex}{Command}
\usepackage{sectsty}
\DeactivateWarningFilters[tmp]
\allsectionsfont{\boldmath}

\setlist[enumerate]{nosep,topsep=0.1em}
\setlist[enumerate,1]{label=(\roman*), leftmargin=2.2em}
\setlist[itemize]{nosep,topsep=0.3em}

\usepackage[textsize=footnotesize, color=blue!30!white]{todonotes}
\setlength{\marginparwidth}{2cm}

\usepackage{xfrac}

\usepackage{tikz}
\usetikzlibrary{calc}
\usetikzlibrary{math}
\usetikzlibrary{shapes.geometric}
\usetikzlibrary{arrows}
\usetikzlibrary{decorations.pathreplacing, decorations.pathmorphing}

\usepackage{tcolorbox}
\tcbuselibrary{skins}

\usepackage{float}
\usepackage{graphicx}
\graphicspath{{../graphics/}}
\makeatletter
\newcommand\appendtographicspath[1]{%
  \g@addto@macro\Ginput@path{#1}%
}
\makeatother
\usepackage[margin=10pt,font=small,labelfont=bf,skip=-5pt]{caption}
\usepackage{subcaption}

\usepackage[linesnumbered,vlined,algo2e]{algorithm2e}
\SetAlgoSkip{bigskip}
\SetAlgoInsideSkip{smallskip}

\usepackage{mdframed}

\let\truehypersetup\hypersetup
\renewcommand\hypersetup[1]{}
\usepackage{bigfoot}
\let\hypersetup\truehypersetup
\interfootnotelinepenalty=10000

% ------------------------------------------

\DeclareMathOperator{\argmax}{argmax}

\DeclareMathOperator{\spn}{span}
\DeclareMathOperator{\rank}{rank}
\csundef{poly}

\renewcommand{\epsilon}{\varepsilon}

\newcommand{\GreedySelectionRule}{\textsc{GreedySelectionRule}\xspace}
\newcommand{\MinimalLink}{\textsc{MinimalLinkConstruction}\xspace}
\newcommand{\SingleOcrsLink}{\textsc{SingleOcrsLink}\xspace}

% ------------------------------------------

\definecolor{green}{rgb}{0.4,0.85,0.6}

\newbool{sodasubm}
\setbool{sodasubm}{false}

\title{Nearly Tight Sample Complexity for Matroid\\Online Contention Resolution}

\ifbool{sodasubm}{
  \author{}
  \date{}
}{
\author{
Moran Feldman\thanks{
Department of Computer Science, University of Haifa, Haifa, Israel.
Email: \href{mailto:moranfe@cs.haifa.ac.il}%
{moranfe@cs.haifa.ac.il}.
}
\and
Ola Svensson\thanks{
School of Computer and Communication Sciences, EPFL, Lausanne, Switzerland.
Email: \href{mailto:ola.svensson@epfl.ch}%
{ola.svensson@epfl.ch}.
}
\and
Rico Zenklusen\thanks{
Department of Mathematics, ETH Zurich, Zurich, Switzerland.
Email: \href{mailto:ricoz@ethz.ch}%
{ricoz@ethz.ch}.}
}
\date{}
}

\begin{document}

\maketitle
\thispagestyle{empty}
\addtocounter{page}{-1}

\begin{abstract}
Due to their numerous applications, in particular in Mechanism Design, Prophet Inequalities have experienced a surge of interest.
They describe competitive ratios for basic stopping time problems where random variables get revealed sequentially.
A key drawback in the classical setting is the assumption of full distributional knowledge of the involved random variables, which is often unrealistic.
A natural way to address this is via sample-based approaches, where only a limited number of samples from the distribution of each random variable is available.
Recently, Fu, Lu, Gavin Tang, Wu, Wu, and Zhang (2024) showed that sample-based Online Contention Resolution Schemes (OCRS) are a powerful tool to obtain sample-based Prophet Inequalities.
They presented the first sample-based OCRS for matroid constraints, which is a heavily studied constraint family in this context, as it captures many interesting settings.
This allowed them to get the first sample-based Matroid Prophet Inequality, using $O(\log^4 n)$ many samples (per random variable), where $n$ is the number of random variables, while obtaining a constant competitiveness of $\sfrac{1}{4}-\varepsilon$.

We present a nearly optimal sample-based OCRS for matroid constraints, which uses only $O(\log \rho \cdot \log^2\log\rho)$ many samples, almost matching a known lower bound of $\Omega(\log \rho)$, where $\rho \leq n$ is the rank of the matroid.
Through the above-mentioned connection to Prophet Inequalities, this yields a sample-based Matroid Prophet Inequality using only $O(\log n + \log\rho \cdot \log^2\log\rho)$ many samples, and matching the competitiveness of $\sfrac{1}{4}-\varepsilon$, which is the best known competitiveness for the considered almighty adversary setting even when the distributions are fully known.
\end{abstract}

\newpage

\section{Introduction}

The classical prophet inequality, introduced and solved by Krengel, Sucheston, and Garling~\cite{krengelSemiamartsFiniteValues1977,krengelSemiamartsAmartsProcesses1978}, is a cornerstone of optimal stopping theory.%
\footnote{Garling, although not a co-author of these seminal papers, is credited by Krengel and Sucheston for providing a result that leads to the tight $\nicefrac{1}{2}$-competitiveness of the prophet inequalities.
This is why Garling is typically credited in this context.}
In this problem, an agent observes a sequence of values drawn from independent known distributions, and must irrevocably decide at each step whether to accept the current value and stop, or discard this value and continue. The goal is to maximize the expectation of the single accepted value. The seminal result of~\cite{krengelSemiamartsAmartsProcesses1978} shows that a simple threshold-based strategy can achieve an expected value of at least $\nicefrac{1}{2}$ of what a ``prophet'', who knows all values in advance, could obtain. They also showed that this ratio is optimal.

%\todomf{In the footnote, I have distinguished between $\rank$, which is used for the rank of a matroid, and $r$, which is used for the rank of a set. This is necessary for consistency with Section~\ref{sec:progress}.}

In recent years, the deep connections between the prophet inequality problem and Algorithmic Game Theory have motivated generalizations beyond the single accepted value setting.
For example, applications in online auctions and pricing, such as the simple and practical posted-price auctions, often necessitate the selection of a \emph{set} of values (also known as items in this context).
This has led to the study of prophet inequalities under complex combinatorial constraints (see~\cite{hajiaghayiAutomatedOnlineMechanism2007,chawlaMultiparameterMechanismDesign2010a,correaPricingProphetsBack2019,lucierEconomicViewProphet2017} and references therein).
A powerful and widely studied example of such a prophet inequality generalization is the Matroid Prophet Inequality, introduced by \textcite{kleinbergMatroidProphetInequalities2012a}, where the set of chosen items must satisfy a matroid constraint. %
Matroids are a rich combinatorial constraint family that capture many interesting setting.\footnote{Formally, a matroid $M=(N,\mathcal{I})$ consists of a finite set $N$, called \emph{ground set}, and a nonempty family $\mathcal{I}\subseteq 2^N$ of subsets of $N$, called \emph{independent sets}, that satisfy
\begin{enumerate*}
  \item if $I\in \mathcal{I}$ and $J\subseteq I$, then $J\in \mathcal{I}$, and
  \item if $I,J\in \mathcal{I}$ with $|J|>|I|$, then there exists $e\in J\setminus I$ such that $I\cup \{e\}\in \mathcal{I}$
\end{enumerate*}.
We refer the interested reader to~\cite{oxleyMatroidTheory1992,welshMatroidTheory2010} and \cite[Volume B]{schrijverCombinatorialOptimizationPolyhedra2003} for further information on matroids.
}
Impressively, Kleinberg and Weinberg showed that the optimal $\nicefrac{1}{2}$-competitive ratio extends to all matroids.
However, a major caveat of all generalizations along this line is the strong assumption that the underlying value distributions are perfectly known.
To address this limitation, a newer sample-based model has been suggested (see \cite{azarProphetInequalitiesLimited2014}).
In this model, an algorithm does not know upfront the value distributions, and can access them only by sampling from them.
When the number of samples is large, this model recovers the classical setting in which the value distributions are known.
Thus, the focus in the sample-based model is on designing algorithms that use a small number of samples.

Prophet inequality problems are related to Online Contention Resolution Schemes (OCRS), which are a versatile tool introduced by \textcite{FeldmanSZ21} to model stochastic selection processes.
The OCRS setting for a matroid $M=(N,\mathcal{I})$ is defined as follows: given an input vector $x\in [0,1]^N$ in the matroid polytope (i.e., $x$ represents the marginals of some distribution over sets in $\mathcal{I}$), nature draws a random \emph{active} set $R(x) \subseteq N$ that includes each element $e \in N$ independently with probability $x_e$ (notice that $R(x)$ is drawn from a product distribution; in the following, we refer to this distribution as $\mathcal{D}(x)$).
After $R(x)$ is drawn by nature, the elements of $R(x)$ arrive one-by-one in an adversarial order, and, upon each arrival, an algorithm must make an irrevocable decision to either select the element or discard it, while keeping the set of accepted elements independent in the matroid $M$ at all times. The performance of an OCRS is measured by its \emph{$c$-selectability}, which is the minimum probability that an element $e$ is selected, conditioned on it being active (i.e., $e \in R(x)$).

OCRSs have found numerous applications, and have also been considered for various other combinatorial constraints beyond matroids, for other objectives, non-independent distributions, and also for weaker adversaries (see~\cite{%
FeldmanSZ21,%
adamczykRandomOrderContention2018,%
leeOptimalOnlineContention2018a,%
pollnerImprovedOnlineContention2024,%
guptaPairwiseIndependentContentionResolution2024,%
macruryRandomOrderOnlineContention2025,%
yoshinagaOnlineContentionResolution2025%
} and references therein).
A typical way to link them to prophet inequalities is by designing threshold-based algorithms that select for each element $e \in N$ a threshold weight $\tau_e \in \mathbb{R}_{\geq 0}$ above which an element is considered for selection.
For a given set of thresholds (over which one needs to optimize), the problem can be translated to the one of designing an OCRS, where an element is considered to be active if its revealed value is above the threshold.

\textcite{FuFLLWWZ24} showed that also sample-based prophet inequalities can be constructed through the framework of OCRSs, namely by designing sample-based OCRSs.
In a sample-based OCRS, the vector $x$ is unknown, and the algorithm is only given access to $k$ independent samples from $\mathcal{D}(x)$ (where $k$ should be chosen to be the smallest value that still allows for obtaining the desired selectability).
In this problem, we assume the strongest type of adversary, known as the \emph{almighty adversary}, who can choose the order of the elements as a function of both the realization of $R(x)$ and the randomness of the algorithm.
The reduction provided in \cite{FuFLLWWZ24} shows that a $c$-selectable sample-based OCRS can be transformed into a sample-based Matroid Prophet Inequality that is $(c-\varepsilon)$-competitive.
This works for any $\varepsilon >0$, as long as one has at least $O_{\varepsilon}(\log n)$ samples (per element), where $n\coloneqq |N|$ is the size of the ground set $N$.
The $O_{\varepsilon}(\log n)$ samples are needed for the reduction, and if the sample-based OCRS itself uses $k$ samples, then the resulting sample-based Matroid Prophet Inequality algorithm uses $O_{\varepsilon}(\log n) + k$ many samples altogether.
Leveraging this reduction, \cite{FuFLLWWZ24} presented the first constant-competitive sample-based Matroid Prophet Inequality, by developing a $(\nicefrac{1}{4}-\varepsilon)$-selectable OCRS using $O(\log^4 n)$ samples.

In this work, we significantly improve the sample complexity for sample-based matroid OCRS, obtaining the following bound through a randomized algorithm. (We use $\rank(M)$ to denote the rank of the matroid $M$, i.e., the size of the largest independent set in it.)
\begin{theorem}\label{thm:main}
    For any $\varepsilon>0$, there is an polynomial-time sample-based randomized online contention resolution scheme for any matroid $M$ with a selectability of $\nicefrac{1}{4}-\varepsilon$ against the almighty adversary. This algorithm uses $O(\log \rank(M) \cdot \log^2 \log\rank(M))$ many samples.
\end{theorem}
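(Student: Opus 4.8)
The plan is to start from a full-information $\nicefrac{1}{4}$-selectable matroid OCRS — concretely, the greedy-type scheme underlying the classical result of Feldman, Svensson and Zenklusen — and turn it into a sample-based scheme. The first ingredient is to rewrite that scheme so that it touches the unknown marginal vector $x$ only through a small, reconstructible collection of scalar statistics. In the greedy scheme, when an active element $e$ arrives in a given state, it is tentatively added to the current independent set $I$ if $I+e$ stays independent and an independent coin of some bias comes up heads, and the only role of $x$ is to fix these biases so as to keep the probability that $I$ spans $e$ under control. Writing $\rho \coloneqq \rank(M)$, one shows — and this is where the governing parameter becomes $\rho$ rather than $n$ — that the biases needed over the whole run are determined by only $\poly(\rho)$ many quantities, each of which is an expectation under the product distribution $\mathcal{D}(x)$ (a probability that a structured sub-event occurs in a random active set $R(x)$), and that it suffices to know each such quantity up to a multiplicative factor $1\pm\varepsilon'$ with $\varepsilon'=\Theta(\varepsilon)$. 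A routine robustness lemma then shows that feeding the greedy scheme $(1\pm\varepsilon')$-accurate statistics costs only an additive $O(\varepsilon)$ in selectability, so it remains to reconstruct all relevant statistics to within $1\pm\varepsilon'$, simultaneously, from few samples.

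Each relevant statistic is an expectation under $\mathcal{D}(x)$ and each of the $k$ samples is an independent draw from $\mathcal{D}(x)$, so the natural estimator is the empirical average over the samples. If every relevant statistic were bounded below by a constant, a Chernoff bound together with a union bound over the $\poly(\rho)$ of them would already give the desired accuracy from $k=O(\log\rho)$ samples, which is the origin of the $\log\rho$ term and is, up to the $\log^2\log\rho$ factor, forced by the known $\Omega(\log\rho)$ lower bound. The reason for the extra $\log^2\log\rho$ factor is that some of the relevant statistics may be polynomially small in $\rho$, and for those the empirical average from $O(\log\rho)$ samples is only additively, not multiplicatively, accurate. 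This is repaired by an $O(\log\log\rho)$-round refinement: one first uses a batch of $O(\log\rho\cdot\log\log\rho)$ samples to estimate the larger statistics accurately and to ``peel off'' — by contraction and deletion, operations that do not increase the rank — the corresponding part of the instance, then recurses on the residual instance, on which the range of magnitudes that still has to be resolved has shrunk (roughly, the effective rank governing the smallest relevant scale drops from $r$ to $\sqrt{r}$). After $O(\log\log\rho)$ rounds nothing small is left, and summing $O(\log\rho\cdot\log\log\rho)$ samples over the $O(\log\log\rho)$ rounds yields the claimed $O(\log\rho\cdot\log^2\log\rho)$.

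Two points carry the real work. The first is the almighty adversary: the final algorithm must commit to its whole policy — the estimated statistics together with all the coins of the greedy scheme — as a deterministic function of the $k$ samples and of internal randomness that is independent of the true active set $R(x)$. Once this is arranged, conditioning on ``$e\in R(x)$'' changes neither the distribution of the samples nor the activity of the other elements, so the full-information analysis of the greedy scheme, which already holds against the almighty adversary, applies verbatim to the estimated scheme, and the adversary's adaptivity to the realization and to our coins is immaterial. The second, which I expect to be the main obstacle, is error propagation in the refinement: the statistics estimated in round $j$ define the residual instance handed to round $j+1$, so errors could compound over the $O(\log\log\rho)$ rounds. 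The way around this is to show that the statistics relevant in round $j+1$ depend on the round-$j$ estimates in a Lipschitz fashion with a bounded constant, so that running each round with accuracy $\varepsilon/\poly(\log\rho)$ — which only affects the hidden constant in the sample bound — keeps the total drift below $\varepsilon$. Polynomial running time is then immediate: the greedy scheme runs in polynomial time once its statistics are given, and all estimation, bucketing and peeling steps are polynomial in $n$ and in the number of samples.
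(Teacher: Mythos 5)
Your proposal misses the paper's central obstacle and proposes a repair mechanism that would not work as described. You correctly identify that the starting point is the chain-decomposition OCRS of Feldman, Svensson and Zenklusen, that one should estimate spanning probabilities from samples, and that the $\log\rho$ factor relates to the $O(\log\rho)$-length chain. But you misdiagnose the source of the extra $\log^2\log\rho$ factor. You attribute it to some spanning probabilities being polynomially small and thus requiring multiplicative accuracy; this is not an issue, because the greedy chain construction compares spanning probabilities against a \emph{constant} threshold $\tau$, so additive accuracy $O(\varepsilon\tau)$ is all one ever needs, and additive accuracy at scale $1/\mathrm{polylog}(\rho)$ is achievable with $O(\log\log\rho)$ samples by Chernoff. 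The real obstacle, which your proposal never names, is that constructing a \emph{single} link of the chain requires iterating the set-update rule $A_{h-1}\mapsto A_h = \{e : \Pr[e\in\spn(A_{h-1}\cup R(x))]>\tau\}$ until it stabilizes, and stabilization can take $\Omega(\rank(M))$ iterations. Naively estimating a fresh spanning probability in each of these iterations blows the sample budget. The paper's key idea is to truncate each link's iteration after a random number of steps $\bar h$ drawn from a tailored geometric-like distribution on $\{1,\dots,O(\log\log\rho)\}$, chosen so that $\Pr[\bar h=h]=\varepsilon\cdot\Pr[\bar h<h]$ for $h\geq 2$; this is precisely what makes the probability of leaving an element ``bad'' (still spannable with probability $>\tau$) controllable by $\varepsilon$ times the probability of leaving it ``good'', up to a small additive term. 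None of this appears in your argument.

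Your proposed fix — an $O(\log\log\rho)$-round refinement where the residual matroid's ``effective rank governing the smallest relevant scale drops from $r$ to $\sqrt r$'' after peeling — is not justified and I do not see how it could be. Contraction and deletion give no such quantitative rank decay, and the chain construction itself only shrinks rank by a constant factor $(1-\varepsilon)$ per link, which is why the chain has $\Theta(\log\rho)$ links, not $O(\log\log\rho)$ rounds. Likewise, your two claimed lemmas — that only $\mathrm{poly}(\rho)$ scalar statistics determine the scheme, and that feeding $(1\pm\varepsilon')$-accurate statistics costs only $O(\varepsilon)$ selectability — are asserted as routine but are in fact the crux of the difficulty: because each link's construction is iterative and the next link depends on the previous, estimation errors can compound multiplicatively across iterations, and the paper has to introduce an auxiliary dominating sequence $B_0\subseteq B_1\subseteq\cdots$ built via the operator $T_\alpha$ to bound the accumulated error in rank (Lemma~\ref{lem:rankReductionInOverlap} and Lemma~\ref{lem:BNotGrowingMuch}). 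Your Lipschitz-continuity hand-wave does not substitute for this. In short, the proposal correctly identifies the high-level framework but lacks both the diagnosis (link-level stabilization) and the cure (randomized truncation plus the dominating-sequence error analysis) that constitute the actual content of the theorem.
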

 Through the reduction in~\cite{FuFLLWWZ24}, our work immediately yields a $(\nicefrac{1}{4}-\varepsilon)$-competitive Matroid Prophet Inequality algorithm with the above sample complexity plus $O(\log n)$ samples that are incurred by the reduction itself. 

Our result is tight in two dimensions up to small factors. First, the sample complexity is nearly optimal. As was shown in~\cite{FuFLLWWZ24}, even an \emph{offline} contention resolution scheme requires $\Omega(\log n)$ samples  (for a matroid $M$ with $\log n \approx \log \rank(M)$) to achieve any constant selectability, leaving only a small gap of $\log^2 \log \rank(M)$ compared to our upper bound. Second, the selectability of $\nicefrac{1}{4}$ is the best known factor against an almighty adversary, even in the classic setting where the probability vector $x$ is fully known~\cite{FeldmanSZ21}.
One should note, however, that while the number of samples used by our algorithm is nearly tight for sample-based matroid OCRSs, no similar lower bounds are known for sample-based Matroid Prophet Inequality.
Indeed, the sample complexity for the last problem is highly related to the notorious Matroid Secretary Problem: a lower bound showing that more than a single sample is necessary to get a constant competitive ratio for sample-based Matroid Prophet Inequality would immediately rule out the possibility of getting such a competitive ratio for the Matroid Secretary Problem using a class of algorithms known as ``order-oblivious algorithms'' (introduced by~\cite{azarProphetInequalitiesLimited2014}).
Such a result would be a major step towards refuting the famous Matroid Secretary Conjecture~\cite{BabaioffIKK18}, which has been open for over $15$ years, and claims that the Matroid Secretary Problem enjoys an algorithm with a constant competitive ratio.
For more information on the Matroid Secretary Problem, and its connections to Contention Resolutions Schemes, we refer the interested reader to \cite{%
BabaioffIKK18,%
dughmiMatroidSecretaryEquivalent2022,%
lachishOlogLogMathrmRank2014,%
feldmanSimpleOlogLogrankCompetitive2015%
} and references therein.

\paragraph{Outline.} The remainder of this paper is organized as follows.
In \Cref{sec:definitionOverview}, we introduce key OCRS concepts of \cite{FeldmanSZ21} for the known $x$ setting, and how OCRSs can be constructed by decomposing the matroid through a chain.
\Cref{sec:formalAlgorithm} details our main contribution: the novel sample-based OCRS algorithm, which also uses a (sample-based) chain decomposition.
We first present our core subroutine, which shows how to construct a single link of the chain using samples and a carefully randomized number of steps.
Then, we show how this procedure can be iterated to build a full spanning chain, which then defines our algorithm.
The core technical analysis of our algorithm is presented in \Cref{sec:progress,sec:inlink_loss}, where we formally bound the error probabilities and prove that our randomized procedure makes sufficient progress in each step, which ensures that the construction succeeds with high probability.

\section{OCRS for Known $x$ via Spanning Chains}
\label{sec:definitionOverview}

In this section, we introduce key concepts from the OCRS of~\cite{FeldmanSZ21} for the setting of a known vector $x$. 
This provides context for our definitions, and sets the stage for the modifications detailed in \Cref{sec:formalAlgorithm}, where we formally describe our algorithms.
Let us begin with a central combinatorial object termed \emph{spanning chain}, which is a laminar family of sets that includes both $N$ and $\emptyset$ as members of the family.
\begin{definition}[spanning chain]
  A \emph{spanning chain} of a finite set $N$ is a tuple $\mathcal{C}=(C_0, \dots, C_k)$ with $N=C_0 \supseteq C_1 \supseteq \dots \supseteq C_k = \emptyset$.
The sets $C_i$ in a spanning chain are called \emph{links}.
\end{definition}

The number $k$ of links in the spanning chain can vary, and we refer to it as the \emph{length} of the spanning chain. A spanning chain is used to guide the selection decisions of the OCRS of~\cite{FeldmanSZ21} (but \cite{FeldmanSZ21} refer to this chain as ``chain decomposition''). The spanning chain used by~\cite{FeldmanSZ21} arises naturally from an iterative procedure that gets as input a matroid $M=(N,\mathcal{I})$, a vector $x\in P_M$, where $P_M\subseteq [0,1]^N$ is the matroid polytope of $M$,
and a parameter $\tau \in (0,1)$. Intutively, the parameter $\tau$ upper bounds the maximum probability that an element is ``spanned'' (i.e., cannot be picked due to the selection of other elements).

The iterative procedure starts by setting $C_0 = N$. Subsequent links of the spanning chain are constructed by identifying elements that are likely to be spanned. For example, the logic used to construct the link $C_1$ from $C_0$ is specified by the following frame, where $\spn(S)$ denotes the span of a set $S$ in the matroid $M$, i.e., $\spn(S) = \{e\in N \colon r(S \cup \{e\}) = r(S)\}$, where we denote by $r(S)=\max\{|I|\colon I\in \mathcal{I}, I\subseteq S\}$ the rank function of $M$.
\begin{center}
\begin{minipage}{0.95\textwidth}
\begin{mdframed}[hidealllines=true, backgroundcolor=gray!20]
\MinimalLink to find $C_1$ from $C_0=N$:
    \begin{itemize}
         \item Initialize $A_0 = \emptyset$.
    \item Assuming we have defined $A_0, A_1, \dots, A_{i-1}$, let
    \[ A_i = \{e \in C_0 : \Pr[e \in \text{span}((R(x) \cup A_{i-1}) \setminus \{e\})] > \tau \} . \]
    That is, $A_i$ contains those elements that are likely to be spanned assuming that the elements of $A_{i-1}$ are contracted.\footnotemark
\end{itemize}
Notice that $A_{i-1} \subseteq A_i$ for every $i \ge 1$, and $A_i = A_{i-1}$ implies $A_j = A_i$ for every $j > i$. Thus, the sequence of sets is guaranteed to stabilize after at most $|C_0|$ iterations. We define the second link $C_1$ of the spanning chain to be this final (stabilized) set.
\end{mdframed}
\end{minipage}
\end{center}
\footnotetext{Intuitively, contracting elements in a matroid means that one views these elements as implicitly being part of every set. More formally, contracting a set $A$ in the matroid $M$ produces a new matroid over the ground set $N\setminus A$ whose independent sets are $\{S \subseteq N \setminus A : r(S \cup A) = r(S) + r(A)\}$.}

The construction of the next links of the spanning chain is done similarly. Specifically, $C_2$ is generated from $C_1$ by applying the same logic to the matroid obtained by restricting $M$ to the set $C_1$,\footnote{Restricting a matroid $M=(N,\mathcal{I})$ to a set $C\subseteq N$ means removing all elements that do not belong to $C$ from the ground set of $M$. The resulting matroid is denoted by $M|_C$.} $C_3$ is generated from $C_2$ by applying the same logic to the matroid obtained by restricting $M$ to $C_2$, and so on. This construction process continues until no element is likely to be spanned, i.e., an empty link $C_k = \emptyset$ is generated. The main technical observation of~\cite{FeldmanSZ21} in this context was that this iterative construction process is guaranteed to eventually generate such an empty link when $x\in \lambda \cdot P_M$ for some $\lambda \in [0,1)$ and $\tau > \lambda$.

The spanning chain $\mathcal{C} = (C_0, C_1, \ldots, C_k)$ resulting from the above construction defines a natural OCRS that is based on the following greedy selection rule.
\begin{center}
\begin{minipage}{0.95\textwidth} 
\begin{mdframed}[hidealllines=true, backgroundcolor=gray!20]
   \hypertarget{alg:greedySelectionRule}{\GreedySelectionRule:}
\begin{itemize}
    \item Upon arrival of an active element $e \in N$, let $i$ be the unique index with $e\in C_{i}\setminus C_{i+1}$. 
		\item Let $M_i$ be the matroid obtained from $M$ by restricted to $C_{i}$ and contracting $C_{i+1}$.
    \item Accept $e$ if and only if adding $e$ to the set of elements of $C_i\setminus C_{i+1}$ already accepted preserves independence in $M_i$. 
\end{itemize}
\end{mdframed}
\end{minipage}
\end{center}
\newcommand{\GreedySelectionRuleLink}[1][]{\hyperlink{alg:greedySelectionRule}{\GreedySelectionRule}\xspace}

It is important to note that the greedy selection rule always produces an independent set in the matroid because the spanning chain decomposes the ground sets into parts that can be considered separately (see, e.g., Theorem 5.1 in~\cite{doi:10.1137/110852061}).

%It will be convenient to define the spanning probability of an element $e\in N$ with respect to a spanning chain $\mathcal{C}=(C_i)_{i=0}^k$ and input $x\in P_M$ as follows:
To analyze the selectability of the OCRS obtained in this way, it is convenient to have the following definition.

\begin{definition}[$\mathcal{C}$-freeness of an element]
  Let $M=(N,\mathcal{I})$ be a matroid, $x\in P_M$, and  $\mathcal{C} = (C_i)_{i=0}^k$ be a spanning chain of $N$.
  Consider an element $e\in N$, and let $i$ be the unique index for which $e\in  C_i \setminus C_{i+1}$. 
  Then, the \emph{$\mathcal{C}$-freeness of $e$} (with respect to $M$ and $x\in P_M$) is
  \begin{equation*}
    \Pr[e\not\in \spn(((R(x) \setminus \{e\})\cap C_{i})\cup C_{i+1})].
  \end{equation*}
\end{definition}
Note that the $\mathcal{C}$-freeness of $e$ can be interpreted as the probability that \GreedySelectionRuleLink accepts $e$ when the active elements are the elements of $R(x) \cup \{e\}$, and $e$ is the last of them to arrive. Since the probability of $e$ to be accepted by \GreedySelectionRuleLink can only increase when $e$ arrives earlier, the $\mathcal{C}$-freeness of $e$ is also a lower bound on the probability that $e$ is accepted by \GreedySelectionRuleLink when the active elements are the elements of $R(x) \cup \{e\}$, regardless of the order in which these elements arrive.

For the spanning chain $\mathcal{C}$ constructed by the above iterative process, the construction guarantees that the $\mathcal{C}$-freeness of each element $e\in N$ is at least $\tau$. The following definition gives an easy way to state such observations.
\begin{definition}[$c$-balanced spanning chain for $x$]
  Let $M=(N,\mathcal{I})$ be a matroid and let $x\in P_M$.
  A spanning chain $\mathcal{C}$ of $N$ is called \emph{$c$-balanced for $x$} if, for each element $e\in N$, its $\mathcal{C}$-freeness with respect to $M$ and $x$ is at least $c$.
\end{definition}

For the purpose of the OCRS we construct in this paper, which only has sample access to $x$, we will need to consider distributions over spanning chains that are $c$-balanced for some $x\in P_M$.
\begin{definition}[$c$-balanced spanning chain distribution for $x$]
  Let $M=(N,\mathcal{I})$ be a matroid and let $x\in P_M$.
  A distribution $\mathcal{S}$ over spanning chains is called \emph{$c$-balanced} for $x$ if, for each element $e\in N$,
	\[
		\E_{\mathcal{C} \sim \mathcal{S}}[\text{$\mathcal{C}$-freeness of $e$ with respect to $M$ and $x\in P_M$}] \geq c,
	\]
	i.e., the expected $\mathcal{C}$-freeness of $e$ with respect to $M$ and $x$, over a random spanning chain $\mathcal{C}$ drawn from $\mathcal{S}$, is at least $c$.
\end{definition} 

We end this section by formally connecting the $c$-balancedness of a spanning chain distribution to the selectiveness of the OCRS that can be derived from it via \GreedySelectionRuleLink.
\begin{theorem}
\label{thm:chain-to-ocrs}
Recall that the OCRS setting consists of a matroid $M=(N,\mathcal{I})$ and a vector $x$ in the matroid polytope $P_M$. Given an oracle that samples a single spanning chain from a distribution $\cS$ that is $c$-balanced for $\lambda x$ (for some $c,\lambda \in [0,1]$), one can implement a $\lambda \cdot c$-selectable OCRS for $M$ without any additional access to $x$.
\end{theorem}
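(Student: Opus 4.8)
The plan is to turn the spanning-chain oracle into an OCRS by composing an independent sub-sampling step with the greedy rule \GreedySelectionRuleLink, and to charge the resulting selectability to the $c$-balancedness hypothesis via the $\mathcal{C}$-freeness quantity. Concretely, before any arrivals the algorithm queries the oracle once to obtain a spanning chain $\mathcal{C}=(C_0,\dots,C_k)\sim\mathcal{S}$ and draws an independent $\mathrm{Bernoulli}(\lambda)$ variable $b_f$ for every $f\in N$; it uses no other information about $x$. When an active element $e\in R(x)$ arrives, if $b_e=0$ it discards $e$ immediately, and if $b_e=1$ (we then say $e$ \emph{survives}) it applies the acceptance test of \GreedySelectionRuleLink for $M$ and $\mathcal{C}$ to $e$ together with the surviving active elements seen so far. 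Writing $R'\coloneqq\{f\in R(x): b_f=1\}$ for the set of surviving active elements, the output is exactly the set \GreedySelectionRuleLink would produce if $R'$ were the active set, so it is independent in $M$ by the decomposition property of spanning chains (Theorem~5.1 in~\cite{doi:10.1137/110852061}); moreover $R'$ has the same law as $R(\lambda x)\sim\mathcal{D}(\lambda x)$, since each $f\in N$ lies in $R'$ independently with probability $\lambda x_f$, and an element discarded by the sub-sampling is invisible to all later decisions of \GreedySelectionRuleLink.

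For the selectability bound I would fix $e\in N$, condition on $e\in R(x)$, and split on the coin $b_e$. With probability $\lambda$ the element survives, i.e., $e\in R'$, and it remains to show that conditioned on this event $e$ is accepted with probability at least $c$. Let $i$ be the (chain-dependent) index with $e\in C_i\setminus C_{i+1}$. Whether \GreedySelectionRuleLink accepts $e$ depends only on which surviving active elements of $C_i\setminus C_{i+1}$ arrived before $e$, and since matroid span is set-monotone, the worst arrival order for $e$ puts all other elements of $R'\cap(C_i\setminus C_{i+1})$ before it; in that case $e$ is accepted exactly when $e\notin\spn(((R'\setminus\{e\})\cap C_i)\cup C_{i+1})$, which (using $C_{i+1}\subseteq C_i$) is precisely the event appearing in the definition of $\mathcal{C}$-freeness. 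Conditioned on $e\in R'$, the remaining elements of $R'$ still have the same law as $R(\lambda x)\setminus\{e\}$ for $R(\lambda x)\sim\mathcal{D}(\lambda x)$, and $\mathcal{C}$ is drawn independently of them, so
\[
  \Pr[\,e\text{ accepted}\mid e\in R'\,]\ \ge\ \E_{\mathcal{C}\sim\mathcal{S}}\bigl[\text{$\mathcal{C}$-freeness of $e$ with respect to $M$ and $\lambda x$}\bigr]\ \ge\ c ,
\]
the last inequality being exactly the hypothesis that $\mathcal{S}$ is $c$-balanced for $\lambda x$. Multiplying by the survival probability $\lambda$ gives $\Pr[\,e\text{ accepted}\mid e\in R(x)\,]\ge\lambda c$, which is the desired selectability.

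The step I expect to require the most care is making the worst-order argument airtight against the \emph{almighty} adversary. One must verify that disclosing to the adversary the drawn chain $\mathcal{C}$ and all the coins $b_f$ does not help it beyond the ``all of $C_i\setminus C_{i+1}$ before $e$'' order, that active elements discarded by the sub-sampling (which \GreedySelectionRuleLink ignores) give the adversary nothing more, and that at the level of the single link $C_i\setminus C_{i+1}$ the acceptance of $e$ depends on the elements arriving before it only through their span inside the matroid $M_i$ obtained from $M$ by restricting to $C_i$ and contracting $C_{i+1}$, which is monotone in that set. This is precisely the content of the monotonicity remark recorded right after the definition of $\mathcal{C}$-freeness, but turning it into the displayed inequality also requires carefully invoking the decomposition property so that \GreedySelectionRuleLink run on $R'$ yields an independent set of $M$ under every arrival order.
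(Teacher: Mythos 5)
Your proposal is correct and takes essentially the same route as the paper: sample one chain from the oracle, independently sub-sample each active element with probability $\lambda$ so that the surviving active set is distributed as $R(\lambda x)$, run \GreedySelectionRuleLink on the survivors, and charge acceptance of a surviving $e$ to its expected $\mathcal{C}$-freeness via the worst-order/monotonicity observation. You merely spell out more explicitly the two ingredients the paper cites in passing (the chain-decomposition feasibility guarantee and the worst-arrival-order bound from the remark following the $\mathcal{C}$-freeness definition), so there is no substantive difference.
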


\begin{proof}
The OCRS works as follows. Let $\mathcal{C} = (C_0, C_1, \ldots, C_k)$ be the spanning chain sampled from the distribution. Every active element $e$ is immediately and independently discarded with probability $1 - \lambda$. Otherwise, $e$ is fed to the greedy selection rule \GreedySelectionRuleLink, and is accepted if the rule accepts it. Notice that from the point of view of the rule, the set of active elements that it receives is distributed like $R(\lambda x)$. Thus, by the $c$-balancedness of the distribution from which $\mathcal{C}$ is drawn, each element $e$ is accepted with probability at least $c$ conditioned on it being fed to \GreedySelectionRuleLink. The theorem now follows by recalling that each active element is fed to \GreedySelectionRuleLink with probability exactly $\lambda$.
\end{proof}

\newcommand{\algbuildocrs}[1][]{\hyperref[algo:buildOCRS]{\textup{\textsc{OcrsChain}\ensuremath{#1}}}\xspace}
\section{OCRS with Sample Access to $x$}
\label{sec:formalAlgorithm}

Theorem~\ref{thm:chain-to-ocrs} shows that to construct a sample-based OCRS for a matroid $M$, it suffices to describe an oracle that has only sample access to the distribution $\mathcal{D}(x)$, but still manages to sample from some distribution over spanning chains that is $c$-balanced for the vector $\lambda x$. Since sample access to $\mathcal{D}(x)$ implies sample access also to $\mathcal{D}(\lambda x)$, one can rephrase the task of the oracle as follows. Given only sample access to $\mathcal{D}(x)$ for some vector $x \in \lambda \cdot P_M$, the oracle should sample from some distribution over spanning chains that is $c$-balanced for the vector $x$. The rest of this paper is devoted to the implementation of such an oracle.

%We adapt the greedy OCRS framework from~\cite{FeldmanSZ21}, overviewed in Section~\ref{sec:definitionOverview}, to the setting where the input vector $x$ is accessible only through samples from the product distribution $\mathcal{D}(x)$. 
%This framework constructs a spanning chain of nested sets, which in turn guides a greedy selection algorithm. 
%Our main task is to construct such a chain without full knowledge of $x$.

As explained in \Cref{sec:definitionOverview}, in the known $x$ setting, the oracle can return a chain whose links are constructed using the \MinimalLink procedure. This procedure relies on repeatedly testing conditions of the form $\Pr[e\in \spn(\cdot)]>\tau$. 
When $x$ is unknown, these spanning probabilities cannot be computed exactly. The natural alternative is to estimate these probabilities using samples drawn from $\mathcal{D}(x)$.
This sampling-based approach, however, introduces the risk of estimation errors. 
To bound the overall error probability, we must control the total number of estimations performed. 
Fortunately, the method of~\cite{FeldmanSZ21} provides a crucial starting point for such a bound: the spanning chain obtained by this method has a length of only $O(\log \rank(M))$ under the assumption $\lambda + \epsilon \leq \tau$ for any constant $\varepsilon > 0$.

This logarithmic chain length suggests a path toward a construction using a small number of samples. 
Specifically, if constructing each link required only a few probability estimations, then we would achieve a low total sample complexity for the construction of the entire chain.
For example, if each link demanded only a constant number of estimations (per element), a total of $O(\log \rank(M))$ estimations would be needed for constructing the entire chain. 
By standard Chernoff bounds, the error probability for each estimation can then be made as small as $1/\text{polylog}(\rank(M))$ using $O(\log \log \rank(M))$ samples per estimation, leading to a desirable total sample complexity of $O(\log \rank(M) \cdot \log \log \rank(M))$ for the chain construction.

Unfortunately, this reasoning hits a major obstacle. 
The \MinimalLink procedure iteratively defines sets $A_0, A_1, \ldots$ until they stabilize, which could potentially take up to $\Omega(\rank(M))$ iterations per link.
Even $O(\log \rank(M))$ iterations would be too much, as it will require \MinimalLink to estimate $\Omega(\log \rank(M))$ quantities for each of the $O(\log \rank(M))$ links, leading to a suboptimal total sample complexity of $\Omega(\log^2 \rank(M))$.

One of our key technical insights is that the stabilization problem can be circumvented by truncating the iterative construction process of the links after a random number $\overline{h}$ of steps drawn from a carefully designed distribution over $\{1, 2, \ldots, O(\log\log \rank(M))\}$. 
By deliberately stopping the process after $\overline{h}$ iterations, rather then letting it continue until stabilization, we gain explicit control over the number of estimations used per link, which in turn allows us to manage the total sample complexity. More specifically, 
%
%Indeed, this randomized truncation yields an efficient sampling algorithm. 
recall that the spanning chain consists of $O(\log \rank(M))$ links. 
For each link, our procedure needs $O(\log \log \rank(M))$ estimations by design, 
and following our earlier reasoning, each of these estimations can be performed with high accuracy using $O(\log \log \rank(M))$ samples. 
The total sample complexity of our approach for constructing the spanning chain is the product of these terms, i.e.,
\[
\underbrace{O(\log \rank(M))}_{\text{links in chain}} \times \underbrace{O(\log \log \rank(M))}_{\text{estimations per link}} \times \underbrace{O(\log \log \rank(M))}_{\text{samples per estimation}} = O(\log \rank(M) \cdot \log^2 \log \rank(M)).
\]
Notice that this matches the near-optimal sample complexity guaranteed by \Cref{thm:main}.
We note, however, that a key challenge is to ensure that the randomized truncation does not significantly affect the expected $\mathcal{C}$-freeness of the elements. This challenge is the reason why the distribution of $\overline{h}$ has to be carefully chosen.

We now proceed to formally define the procedure for constructing a single link (\Cref{sec:singleLinkconstruction}), which is then used in \Cref{sec:chainConstruction} to sample from a distribution over spanning chains. This sampling procedure implements the oracle from Theorem~\ref{thm:chain-to-ocrs}, and thus, yields our OCRS (\Cref{thm:main}).

\subsection{Procedure for Constructing a Single Link}
\label{sec:singleLinkconstruction}

\newcommand{\algsinglelink}[1][]{\hyperref[algo:buildSingleLink]{\textup{\textsc{SingleOcrsLink}\ensuremath{#1}}}\xspace}

The input to the \algsinglelink[] algorithm is a matroid $M=(N,\mathcal{I})$, a vector $x\in P_M$, and parameters $\rho, \tau, \varepsilon$. Here, $\tau$ is the spanning probability threshold and $\varepsilon > 0$ is an error parameter.%
\footnote{In \algsinglelink[], as well as later on, we index the span and rank function, $\spn$ and $r$, respectively, by the corresponding matroid when we want to be explicit about the underlying matroid.}

\begin{center}
\begin{minipage}{0.95\textwidth}
\begin{mdframed}[hidealllines=true, backgroundcolor=gray!20]
   % \begin{algorithm2e*}[H]
 % \caption{\textsc{SingleOcrsLink}$(M,\rho,\mathcal{D}(x),\tau,\varepsilon)$.\\
  %Construction of one link of our sample-based OCRS.} \label{algo:buildSingleLink}
  $\SingleOcrsLink(M, x, \rho, \tau,\varepsilon)$: %, which constructs one link of our sample-based OCRS. 
  %label{algo:buildSingleLink}

\begin{algorithm2e}[H]  \label{algo:buildSingleLink}
  Let $\eta = \left\lceil 1 + \log_{1+\varepsilon} \frac{\ln \rho}{\varepsilon^3} \right\rceil$, and choose
  $\overline{h}$ at random from $\{1,\dots,  \eta \}$ according to the distribution defined by
  \begin{equation*}
    \Pr[\overline{h}=1] =  \frac{1}{(1+\varepsilon)^{\eta - 1}} \quad \mbox{and} \quad \Pr[\overline{h}=h] = \varepsilon \cdot \Pr[\overline{h} < h] \text{ for } h \in \{2, \dots, \eta\}.
 \end{equation*}
 \label{algline:chooseH}

  Let $A_0 = \emptyset$.

  Let $q = \left\lceil \frac{6}{\tau \varepsilon^2} \ln\left(\frac{\ln \rho}{\varepsilon}\right) \right\rceil$. \label{algline:chooseQ}

  \For{$h \leftarrow 1$ \KwTo $\overline{h}$\label{algline:loopInLink}}{
    Let $S_1, S_2, \ldots, S_q$ be $q$ independent samples from $\mathcal{D}(x)$.\label{algline:getSamples}
   % 
    %\sim \mathcal{D}^q(x)$. \hfill \tcp{We take $q$ independent samples per element.}

      $A_h =  \{e\in N : \widehat{\Pr}_{S\sim (S_1,\ldots, S_q)} [e\in \spn_M(A_{h-1} \cup S)] > \tau\}$.
      \label{algline:defineAh}
    }

  \Return $A_{\overline{h}}$. 
\end{algorithm2e}
\end{mdframed}
\end{minipage}
\end{center}
The \algsinglelink[] procedure is used iteratively by \algbuildocrs[] (see \Cref{sec:chainConstruction}) to construct the spanning chain. In each iteration, it is called on a matroid obtained by restricting the matroid to the previous link of the chain. However, the parameter $\rho$ is always set to be roughly the rank of the \emph{original} matroid, and it remains fixed across all calls. In Step~\ref{algline:defineAh} of the algorithm, we use $\widehat{\Pr}[\cdot]$ to denote the probability of an event, estimated from $q$ samples $S_1, \dots, S_q$ of $\mathcal{D}(x)$.
Formally, for any event $\mathcal{E}$ depending on a set $S$, we have $\widehat{\Pr}_{S\sim (S_1,\dots,S_q)}[\mathcal{E}] = |\{p\in [q] \colon \text{$\mathcal{E}$ is satisfied when $S = S_p$}\}|/q$.

To see that Step~\ref{algline:chooseH} defines a valid probability distribution over $\{1, \ldots, \eta\}$, we note that its definition implies that the equality $\Pr[\overline{h} \leq h] = (1+\varepsilon) \Pr[\overline{h}\leq h-1]$ holds for every $h\in \{2, \ldots, \eta\}$. Thus,
\[
	\sum_{h = 1}^\eta \Pr[\overline{h} = h]
	=
	\Pr[\overline{h} \leq \eta] = (1+\varepsilon)^{\eta-1} \Pr[\overline{h}\leq 1] = (1+\varepsilon)^{\eta-1} \cdot \frac{1}{(1+\varepsilon)^{\eta-1}} = 1.
\]
The key properties of this distribution, which motivate its use, are that the probability of terminating after a single iteration is small ($\Pr[\overline{h}=1] \leq \frac{\varepsilon^3}{\ln \rho}$), and for every larger number $h \in \{2, \dots, \eta\}$ of iterations, the probability of stopping after $h$ iterations is small compared to the probability of stopping in an earlier iteration (i.e., $\Pr[\overline{h}=h] = \varepsilon \cdot \Pr[\overline{h}< h]$). To analyze the effect of the randomized truncation after $\bar{h}$, the concepts of $(A, \tau)$-good and $(A, \tau)$-bad elements are helpful.
\begin{definition}[$(A,\tau)$-bad, $(A,\tau)$-good]
  Consider a matroid $M=(N,\mathcal{I})$, $x\in P_M$, a number $\tau\in [0,1]$, and a set $A\subseteq N$.
  For an element $e\in N$, we say that 
  \begin{itemize}
    \item $e$ is $(A,\tau)$-bad (with respect to $M$ and $x$) if $e\not\in A$ and $\Pr[e \in \spn(A \cup R(x))] > \tau $, and
    \item $e$ is $(A,\tau)$-good (with respect to $M$ and $x$) if $e\not\in A$ and $\Pr[e \in \spn(A \cup R(x))] \leq \tau $.
  \end{itemize}
	Notice that elements of $A$ itself are neither $(A,\tau)$-bad nor $(A,\tau)$-good.
\end{definition}

To build intuition, we analyze an idealized setting in which the estimates $\widehat{\Pr}[\cdot]$ used by \algsinglelink are identical to the true probabilities they estimate. In this case, the sequence of sets $A_0, A_1, \ldots$ becomes deterministic, and the only source of randomness in the output $A=A_{\overline{h}}$ is from the stopping time $\overline{h}$.
Consider an arbitrary element $e \in N$. By the construction rule, if $e$ is $(A_{h-1}, \tau)$-bad, it is immediately included in the set $A_h$ (assuming $h \geq \bar{h}$). Thus, since an element must be outside a set to be considered bad with respect to it, $e$ can be $(A_{h-1}, \tau)$-bad for at most one index $h$. Let this unique index be $h_e$.
By the above discussion, $e$ is $(A, \tau)$-bad with respect to the final output $A=A_{\overline{h}}$ of \algsinglelink only if this procedure stops after iteration $h_e$. The probability of this event depends on the value of $h_e$.
\begin{itemize}
    \item \textbf{Case 1: $h_e > 2$.} The ``bad'' outcome for $e$ occurs if $\overline{h} = h_e - 1$. The ``good'' outcomes, where $e$ is left out of $A$ but is not bad, occur whenever $\overline{h} < h_e-1$. Since $h_e-1 \ge 2$, the distribution of $\overline{h}$ obeys the relation $\Pr[\overline{h}=h_e-1] = \varepsilon \cdot \Pr[\overline{h}<h_e-1]$, which directly translates to $\Pr[\text{$e$ is $(A, \tau)$-bad}] = \varepsilon \cdot \Pr[\text{$e$ is $(A, \tau)$-good}]$.

    \item \textbf{Case 2: $h_e = 2$.} The ``bad'' outcome occurs in this case if $\overline{h} = 1$. It never holds that $\overline{h} < h_e - 1 = 1$, so the probability of a ``good'' outcome is 0. Thus, in this special case, a multiplicative relation does not hold between the probabilities of ``good'' and ``bad'' outcomes. Instead, we rely on the fact that the distribution was designed to make the probability of the bad outcome of this case small in absolute terms ($\Pr[\overline{h}=1] \le \frac{\varepsilon^3}{\ln\rho}$).

    \item \textbf{(No Bad Outcome)} If $h_e=1$, then $e$ is in every set $A_h$, and thus, is never bad with respect to the output set $A$. Otherwise, if no $h_e$ value exists, $e$ is always good when it is not part of the output set.
\end{itemize}
Combining these cases, the probability that an element $e$ is $(A, \tau)$-bad is either zero, a small additive quantity (if $h_e=2$), or $\varepsilon$ times the probability that it is $(A, \tau)$-good (if $h_e > 2$). This provides the precise intuition for the general bound in the proposition below, which has the form $\Pr[\text{bad}] \leq \varepsilon \cdot \Pr[\text{good}] + \text{small-term}$. The fact that an element is likely to be $(A, \tau)$-good is desirable, as this property contributes towards a $\tau$-balancedness for the final chain.

The above informal argument is made rigorous in Section~\ref{sec:inlink_loss}, where we account for estimation errors to prove the following proposition. (The need to handle these sample errors is the reason that this proposition uses $(1-\varepsilon)\tau$, rather than $\tau$, as the spanning probability threshold in the invocation of \algsinglelink[].)
\begin{restatable}{proposition}{propInLinkLose} \label{prop:inLinkLose}
  Consider a matroid $M = (N, \mathcal{I})$, a vector $x\in P_M$, and numbers $\rho\in \mathbb{Z}_{\geq 3}$, $\tau\in (0,1]$, and $\varepsilon \in (0,\tau]$.
  Let $A$ be the (random) output of \algsinglelink{$(M,x, \rho,(1-\varepsilon)\tau, \varepsilon)$}.
  Then,
  \begin{equation*}
    \Pr[\text{$e$ is $(A,\tau)$-bad}] \leq \varepsilon \cdot \Pr[\text{$e$ is $(A,\tau)$-good}] + \frac{2\varepsilon^3}{\ln \rho}
    \quad\forall e\in N,
  \end{equation*}
  where being $(A,\tau)$-bad or $(A,\tau)$-good is with respect to the matroid $M$ and the vector $x$.
\end{restatable}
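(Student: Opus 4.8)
The plan is to couple \algsinglelink with an equivalent process in which all $\eta$ sample batches are drawn up front. This produces a (random) nested sequence $A_0\subseteq A_1\subseteq\dots\subseteq A_\eta$ — nestedness holds because any element already in $A_{h-1}$ passes the threshold test of \algsinglelink at step $h$ with probability $1$ (its estimated spanning probability is exactly $1>(1-\varepsilon)\tau$) — after which $\overline h$ is drawn independently and $A_{\overline h}$ is returned. Fixing $e\in N$ and conditioning on the entire sample sequence, I would write $p_h\coloneqq\Pr_{R\sim\mathcal D(x)}[e\in\spn_M(A_h\cup R)]$, which is nondecreasing in $h$ by nestedness, let $h^\ast$ be the first index with $e\in A_{h^\ast}$ (the ``entrance time'', possibly exceeding $\eta$), and let $b$ be the first index with $p_h>\tau$. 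Then, as events in the remaining randomness of $\overline h$ only, ``$e$ is $(A_{\overline h},\tau)$-bad'' is exactly $\{b\le\overline h<h^\ast\}$, while ``$e$ is $(A_{\overline h},\tau)$-good'' is exactly $\{\overline h<\min(b,h^\ast)\}$. The whole analysis then reduces to bounding $\E_{\mathcal A}\bigl[\Pr_{\overline h}[b\le\overline h<h^\ast]\bigr]$, where $\mathcal A$ denotes the sample sequence.

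I would split the bad event $\{b\le\overline h<h^\ast\}$ into its \emph{boundary} value $\overline h=\max(b,1)$ and its \emph{interior} range $\max(b,1)<\overline h<h^\ast$. For the boundary value, if $b\ge 2$ then on $\{b<h^\ast\}$ — the only regime where the bad event is nonempty — the good event in $\overline h$ equals $\{\overline h\le b-1\}$, so the defining relation $\Pr[\overline h=b]=\varepsilon\,\Pr[\overline h\le b-1]$ of the stopping-time distribution gives $\Pr_{\overline h}[\overline h=b]=\varepsilon\cdot\Pr_{\overline h}[e\text{ is }(A_{\overline h},\tau)\text{-good}\mid\mathcal A]$; taking expectation over $\mathcal A$ yields a contribution of at most $\varepsilon\cdot\Pr[e\text{ is }(A,\tau)\text{-good}]$. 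If instead $b\le 1$ the good event in $\overline h$ is empty and this multiplicative identity degenerates; here I would instead use that $\Pr[\overline h=1]=(1+\varepsilon)^{-(\eta-1)}\le\varepsilon^3/\ln\rho$ by the choice of $\eta$, which bounds the boundary contribution by $\varepsilon^3/\ln\rho$ in that case.

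For the interior range, note it is nonempty only when $h^\ast>b+1$, i.e.\ $e\notin A_{b+1}$; since $p_b>\tau$ and $e\notin A_b$, this is an \emph{underestimation} event at step $b+1$. A Chernoff bound on the $q$-sample estimate — exploiting that \algsinglelink runs with internal threshold $(1-\varepsilon)\tau$, so there is a multiplicative $\varepsilon$-gap between the true value ($>\tau$) and the threshold, together with $q=\bigl\lceil\tfrac{6}{(1-\varepsilon)\tau\varepsilon^2}\ln\tfrac{\ln\rho}{\varepsilon}\bigr\rceil$ — shows that for any fixed set $A$ with $e\notin A$ and $\Pr[e\in\spn_M(A\cup R(x))]>\tau$, the chance that $e$ fails to enter at the next step is at most $\delta\coloneqq(\varepsilon/\ln\rho)^3\le\varepsilon^3/\ln\rho$. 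The crucial point is that the relevant step is the single, element-specific ``critical'' level $b+1$: conditioning on the batches through step $b$, applying this bound only at step $b+1$, and summing over the possible values of $b$ (whose probabilities sum to at most $1$) keeps the interior contribution at most $\delta\le\varepsilon^3/\ln\rho$, rather than the prohibitive $\eta\cdot\delta$ that a naive union bound over all iterations would incur. (The residual sub-case $p_0>\tau$, where $b=0$ deterministically, is negligible because being bad with respect to the output then forces two consecutive estimation failures.) Combining the boundary and interior bounds — a routine calculation with the chosen $\eta$ and $q$ — yields $\Pr[e\text{ is }(A,\tau)\text{-bad}]\le\varepsilon\cdot\Pr[e\text{ is }(A,\tau)\text{-good}]+2\varepsilon^3/\ln\rho$.

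I expect the main obstacle to be exactly this error accounting: controlling the estimation failures while sacrificing only an $O(\varepsilon^3/\ln\rho)$ additive loss, even though the inner loop has $\eta=\Theta(\varepsilon^{-1}\log\log\rho)$ iterations each of which could individually err. The resolution — that only the estimation at the element-specific critical level $b+1$ can convert a ``good'' stopping outcome into a ``bad'' one, so after conditioning on $b$ the $\eta$ iterations collapse to a single charged step — is the technical crux. A secondary nuisance is the degeneracy of the multiplicative stopping-time identity at the first two levels, which is precisely why the distribution of $\overline h$ is designed so that $\Pr[\overline h=1]$ is small in absolute terms.
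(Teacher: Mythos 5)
Your proposal is correct and is essentially the same argument as the paper's, just phrased in a slightly different decomposition. The paper partitions the sample randomness by the number of critical sets ($\mathcal{E}_\emptyset$, exactly one critical set $\mathcal{E}_h$, or at least two $\mathcal{E}_{\mathrm{mult}}$), whereas you split the bad event $\{b\le\overline h<h^\ast\}$ into its boundary value $\overline h=\max(b,1)$ and its interior; but these are dual views of the same observation that the critical indices form a contiguous block $\{b,\dots,h^\ast-1\}$. Both proofs rest on the identical two ingredients: the tailored relation $\Pr[\overline h=h]\le\varepsilon\Pr[\overline h<h]+\varepsilon^3/\ln\rho$, and a Chernoff bound showing that once $A_h$ is critical it is unlikely (probability $\le(\varepsilon/\ln\rho)^3$) that $A_{h+1}$ is still critical — which is exactly how the paper bounds $\Pr[\mathcal{E}_{\mathrm{mult}}]$ by charging only the step after the first critical index, matching your ``single charged step at level $b+1$'' observation.
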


\subsection{Constructing the Spanning Chain}
\label{sec:chainConstruction}

The \algbuildocrs[] procedure constructs a full spanning chain by iteratively applying the \algsinglelink[] subroutine. The length $\zeta$ of the chain is determined by the rank $\rho$ of the matroid\footnote{If the rank of the matroid is a small constant, $\rho$ is chosen to be a bit larger than it for technical reasons.} and the error parameter $\varepsilon$. After determining $\zeta$, the procedure initializes the first link of the chain to be the entire ground set $C_0 = N$, and the last link of the chain to be the empty set $C_{\zeta+1} = \emptyset$. 
To construct the remaining links of the spanning chain, \algbuildocrs[] iterates from $i=1$ to $\zeta$. In each iteration $i$, it generates the link $C_i$ by invoking \algsinglelink[]. Crucially, the input to the last subroutine is $M\vert_{C_{i-1}}$, i.e., the matroid $M$ restricted to the elements of the previous link $C_{i-1}$. This restriction naturally ensures that the resulting sequence of links is nested, i.e., $C_0 \supseteq C_1 \supseteq \dots \supseteq C_{\zeta+1}$. %The final output is this complete spanning chain.

\begin{center}
\begin{minipage}{0.95\textwidth}
\begin{mdframed}[hidealllines=true, backgroundcolor=gray!20]
  \textsc{OcrsChain}$(M, x, \tau, \varepsilon)$:\\
\begin{algorithm2e}[H] \label{algo:buildOCRS}
  Let $\rho = \max\{\rank(M), 3\}$.

  Let $\zeta = \big\lceil \frac{1}{\varepsilon}\ln\left(\frac{\rho}{\varepsilon}\right) \big\rceil$.

  %Let $i=0$.

  Let $C_0 = N$ and $C_{\zeta+1} = \emptyset$.

  %Let $n = |N|$.

  \For{$i \leftarrow 1$ \KwTo $\zeta$}{
    Let $C_{i}$ be the output of \algsinglelink{$(M\vert_{C_{i-1}},  x\vert_{C_{i-1}},\rho, (1-\varepsilon)\tau, \varepsilon)$}.
  }

  \Return $(C_0, C_1, \dots, C_{\zeta+1})$.
\end{algorithm2e}
\end{mdframed}
\end{minipage}
\end{center}

The following is a corollary of Proposition~\ref{prop:inLinkLose}. It shows that elements that are unlikely to be part of the last link of the spanning chain constructed by \algbuildocrs[] enjoy a good $\mathcal{C}$-freeness.

%\todomf{The previous statement of the corollary was not well defined since $e \not \in C_{\zeta}$ is a random event on its own right. I also had to change the proof since $i$-bad is not well-defined when $i \not \in C_i$. Finally, I had to change the coefficient of $\varepsilon$ in the freeness from $3$ to $4$.}

%\todorz{I changed the range of $\varepsilon$ in the statement below from $(0,1/2]$ to $(0,1/4)$. For otherwise, there is no legal value of $\lambda$ in $(0,1-4\varepsilon]$.}

\begin{restatable}{corollary}{corFreenessLikely}\label{cor:freenessLikely}
  Let $M$ be a matroid, $\varepsilon \in (0,\sfrac{1}{4})$, $\lambda \in (0,1-4\varepsilon]$, and $x\in \lambda \cdot P_M$.
  Let $\mathcal{C}=(C_0,\dots,C_{\zeta+1})$ be the (random) output of \algbuildocrs{$(M,\mathcal{D}(x),\lambda+4\varepsilon, \varepsilon)$}.
  Then, for every element $e \in N$,
	\[
		\Pr[\text{$\mathcal{C}$-freeness of $e$ is at least $1-\lambda-4\varepsilon$} \mid e \not \in C_{\zeta}]
		\geq
		1 - \varepsilon - \frac{2\varepsilon}{\Pr[e\not\in C_\zeta]}.
	\]
\end{restatable}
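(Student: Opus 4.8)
The plan is to reduce the corollary to $\zeta$ applications of \Cref{prop:inLinkLose}, one per invocation of \algsinglelink[] inside \algbuildocrs[], glued together by the tower rule. We may assume $\Pr[e\notin C_\zeta]>0$, as otherwise the statement is vacuous. Fix $e\in N$, write $\tau:=\lambda+4\varepsilon$, and recall that the link $C_{i+1}$ produced in iteration $i+1$ of \algbuildocrs{$(M,\mathcal{D}(x),\tau,\varepsilon)$} is the output of \algsinglelink{$(M\vert_{C_i},x\vert_{C_i},\rho,(1-\varepsilon)\tau,\varepsilon)$}. A preliminary point is that the hypotheses of \Cref{prop:inLinkLose} hold for every such call: $\rho=\max\{\rank(M),3\}\ge 3$; since $\lambda\in(0,1-4\varepsilon]$ we get $\tau\in(4\varepsilon,1]$ and hence $\varepsilon\in(0,\tau]$; and restricting $x\in\lambda\cdot P_M\subseteq P_M$ to $C_i$ yields a point of the matroid polytope of $M\vert_{C_i}$.

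The conceptual heart of the argument is a bridge between being ``$(A,\tau)$-good'' and having large $\mathcal{C}$-freeness. Suppose $e\notin C_\zeta$, and let $i\in\{0,\dots,\zeta-1\}$ be the unique index with $e\in C_i\setminus C_{i+1}$; this is exactly the index that defines the $\mathcal{C}$-freeness of $e$. I claim that if $e$ is $(C_{i+1},\tau)$-good with respect to $M\vert_{C_i}$ and $x\vert_{C_i}$, then the $\mathcal{C}$-freeness of $e$ is at least $1-\tau=1-\lambda-4\varepsilon$. Indeed, using that $\spn_{M\vert_{C_i}}(S)=\spn_M(S)\cap C_i$ for $S\subseteq C_i$, that $R(x\vert_{C_i})$ has the same law as $R(x)\cap C_i$, and that $e\in C_i$, the goodness condition becomes $\Pr[e\in\spn_M(C_{i+1}\cup(R(x)\cap C_i))]\le\tau$; and since $((R(x)\setminus\{e\})\cap C_i)\cup C_{i+1}\subseteq(R(x)\cap C_i)\cup C_{i+1}$ while $\spn_M$ is monotone, the $\mathcal{C}$-freeness of $e$, which equals $\Pr[e\notin\spn_M(((R(x)\setminus\{e\})\cap C_i)\cup C_{i+1})]$, is at least $\Pr[e\notin\spn_M((R(x)\cap C_i)\cup C_{i+1})]\ge 1-\tau$. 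Contrapositively: on $\{e\notin C_\zeta\}$, the failure event ``$\mathcal{C}$-freeness of $e$ is below $1-\lambda-4\varepsilon$'' forces $e$ to be $(C_{i+1},\tau)$-\emph{bad} with respect to $M\vert_{C_i}$ and $x\vert_{C_i}$, since an element outside $C_{i+1}$ is either $(C_{i+1},\tau)$-good or $(C_{i+1},\tau)$-bad.

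With this bridge in hand, I aggregate over levels. For $i\in\{0,\dots,\zeta-1\}$ let $B_i$ (resp.\ $G_i$) be the event that $e\in C_i$ and $e$ is $(C_{i+1},\tau)$-bad (resp.\ $(C_{i+1},\tau)$-good) with respect to $M\vert_{C_i}$ and $x\vert_{C_i}$; by the previous paragraph, the failure event intersected with $\{e\notin C_\zeta\}$ is contained in $\bigcup_{i=0}^{\zeta-1}B_i$. Conditioning on $C_0,\dots,C_i$, the link $C_{i+1}$ is distributed exactly as the output of \algsinglelink{$(M\vert_{C_i},x\vert_{C_i},\rho,(1-\varepsilon)\tau,\varepsilon)$} run with fresh randomness; hence, applying \Cref{prop:inLinkLose} when $e\in C_i$ and using $\Pr[B_i\mid C_0,\dots,C_i]=0$ otherwise, we get $\Pr[B_i\mid C_0,\dots,C_i]\le\varepsilon\,\Pr[G_i\mid C_0,\dots,C_i]+\tfrac{2\varepsilon^3}{\ln\rho}$ in all cases. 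Taking expectations and summing over $i$ yields $\sum_i\Pr[B_i]\le\varepsilon\sum_i\Pr[G_i]+\tfrac{2\varepsilon^3\zeta}{\ln\rho}$. Now the events $G_0,\dots,G_{\zeta-1}$ are pairwise disjoint (each forces $e$ into the single slice $C_i\setminus C_{i+1}$) and each is contained in $\{e\notin C_\zeta\}$, so $\sum_i\Pr[G_i]\le\Pr[e\notin C_\zeta]$; and a short calculation from $\rho\ge 3$, $\varepsilon<\sfrac{1}{4}$, and $\zeta=\lceil\tfrac1\varepsilon\ln(\tfrac\rho\varepsilon)\rceil$ gives $\varepsilon^2\zeta\le\ln\rho$, i.e.\ $\tfrac{2\varepsilon^3\zeta}{\ln\rho}\le 2\varepsilon$. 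Combining, $\Pr[\text{failure event},\ e\notin C_\zeta]\le\varepsilon\,\Pr[e\notin C_\zeta]+2\varepsilon$, and dividing by $\Pr[e\notin C_\zeta]$ and passing to the complementary event gives the claimed bound.

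I expect the one genuine obstacle to be the bridge claim in the second paragraph: matching the restricted quantities $\spn_{M\vert_{C_i}}$ and $R(x\vert_{C_i})$ (which live in the definition of $(A,\tau)$-good) against the original-matroid quantities $\spn_M$ and $R(x)$ (which live in the definition of $\mathcal{C}$-freeness), and handling the ``$\setminus\{e\}$'' correctly via monotonicity of span. Everything else --- \Cref{prop:inLinkLose}, the tower rule, disjointness of the slices $C_i\setminus C_{i+1}$, and the arithmetic of $\zeta$ --- is routine bookkeeping.
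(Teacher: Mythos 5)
Your proof is correct and follows essentially the same route as the paper's: reduce to per-link applications of \Cref{prop:inLinkLose} (one per invocation of \algsinglelink[]), exploit the disjointness of the slices $C_i\setminus C_{i+1}$ to bound the aggregated ``good'' probability by $\Pr[e\notin C_\zeta]$, and finish with the arithmetic $\zeta\cdot 2\varepsilon^3/\ln\rho\leq 2\varepsilon$. A small bonus is that you are slightly more careful than the paper in the bridge step: you prove only the implication ``$(C_{i+1},\tau)$-good $\Rightarrow$ $\mathcal{C}$-freeness $\geq 1-\tau$,'' whereas the paper asserts an equivalence; the converse is in fact false in general when $x_e>0$ (because the definition of $\mathcal{C}$-freeness removes $e$ from $R(x)$ while the definition of goodness does not), but only the direction you establish is actually used, so the paper's argument goes through anyway.
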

\begin{proof}
  %Let $e\in N\setminus C_{\zeta}$.
  Let $C_j\in \mathcal{C}$ be the set with highest index that contains $e$.
  In other words, $C_j$ is a smallest set containing $e$ in the sequence $C_0,\dots, C_{\zeta}$; and if this smallest set appears multiple times, we take the one with the highest index. Notice that conditioning on $e \not \in C_{\zeta}$ is equivalent to conditioning on the event $j\in \{0,\dots, \zeta-1\}$.

	Observe that $x|_{C_i}$ is a vector in the matroid polytope of $M\vert_{C_{i}}$ because, for every set $S \subseteq C_i$, $\sum_{e \in S} x_e \leq r_M(S) = r_{M|_{C_i}}(S)$.
	Thus, it makes sense to talk about an element being good or bad with respect to the matroid $M\vert_{C_{i}}$ and the vector $x\vert_{C_{i}}$.  To simplify the terminology, we say, for $i\in [\zeta]$, that $e$ is \emph{$i$-good} if $e$ is $(C_{i+1}, \lambda+4\varepsilon)$-good with respect to the matroid $M\vert_{C_{i}}$ and the vector $x\vert_{C_{i}}$, and that $e$ is \emph{$i$-bad} if $e$ is $(C_{i+1}, \lambda+4\varepsilon)$-bad with respect to this matroid and vector.
  Then, the $\mathcal{C}$-freeness of $e$ is at least $1-\lambda-4\varepsilon$, if and only if $e$ is $j$-good. Thus,
  \begin{align*}
		\Pr[\mspace{100mu}&\mspace{-100mu}\text{$\mathcal{C}$-freeness of $e$ is at least $1-\lambda-4\varepsilon$} \mid e \not \in C_{\zeta}]\\
		&= \Pr[\text{$e$ is $j$-good} \mid e \not \in C_{\zeta}] \\&= 1-\Pr[\text{$e$ is $j$-bad} \mid e \not \in C_{\zeta}]\\
		&= 1 - \frac{1}{\Pr[e\not\in C_{\zeta}]}\sum_{i = 0}^{\zeta - 1} \Pr[j = i] \cdot \Pr[\text{$e$ is $i$-bad} \mid j = i]\\
		&\geq 1 - \frac{1}{\Pr[e\not\in C_\zeta]}\sum_{i=0}^{\zeta - 1} \left(\Pr[j = i] \cdot \varepsilon \cdot \Pr[\text{$e$ is $i$-good} \mid j = i] + \Pr[j \geq i] \cdot \frac{2\varepsilon^3}{\ln \rho}\right) \\
    %&\geq 1 - \sum_{i=1}^{\zeta} \left( \varepsilon \cdot \Pr[\text{$e$ is $i$-good}] + \frac{2\varepsilon^3}{\ln \rho} \right)\\
     &= 1 - \varepsilon \cdot \Pr[\text{$e$ is $j$-good} \mid e \not \in C_{\zeta}] - \frac{1}{\Pr[e\not\in C_\zeta]}\sum_{i=0}^{\zeta - 1} \Pr[j \geq i] \cdot \frac{2\varepsilon^3}{\ln \rho}\\
		&\geq
		1 - \varepsilon - \frac{1}{\Pr[e\not\in C_\zeta]}\cdot \zeta \cdot \frac{2\varepsilon^3}{\ln \rho}
    ,
  \end{align*}
	where the second equality holds since the fact that $e\not\in C_{j+1}$ implies that $e$ must be either $j$-good or $j$-bad. To see why the first inequality holds as well, observe that $e$ can be either $i$-good or $i$-bad only if $j = i$. Furthermore, the event $j \geq i$ is independent of the result of the random construction of $C_{i + 1}$. Thus, by \Cref{prop:inLinkLose},
	\begin{multline*}
		\Pr[\text{$e$ is $i$-bad} \mid j = i]
		=
		\frac{\Pr[\text{$e$ is $i$-bad} \mid j \geq i]}{\Pr[j = i \mid j \geq i]}\\
		\leq
		\frac{\varepsilon \cdot \Pr[\text{$e$ is $i$-good} \mid j \geq i] + \frac{2\varepsilon^3}{\ln \rho}}{\Pr[j = i \mid j \geq i]}
		=
		\varepsilon \cdot \Pr[\text{$e$ is $i$-good} \mid j = i] + \frac{\Pr[j \geq i]}{\Pr[j = i]} \cdot \frac{2\varepsilon^3}{\ln \rho}
		.
	\end{multline*}
  %We will upper bound the probability of $e$ being $j$-bad by $e$ being $i$-bad for at least one $i\in [\zeta]$.
  %We then obtain the following by applying a union bound over all $i\in [\zeta]$ and using \Cref{prop:inLinkLose}.
  %Note that, in the first inequality above, we upper bound $\Pr[\text{$e$ is $j$-bad}]$ by $\sum_{i=1}^{\zeta} \Pr[\text{$e$ is $i$-bad}]$.
  %This eliminates the random index $j$ and allows us to later apply \Cref{prop:inLinkLose} to each $i$ separately, which can be done because \Cref{prop:inLinkLose} holds for any \emph{fixed} $i$.
%
  %A key observation is that $e$ can only be $i$-good for at most one $i\in [\zeta]$.
  %Hence, these events are disjoint, and we can bound the above sum by $1$, thus obtaining
  %\begin{equation*}
    %\Pr[e\text{ is $j$-good}] \geq 1 - \zeta\cdot\frac{2\varepsilon^3}{\ln \rho} - \varepsilon
    %\geq 1 - 3\varepsilon,
  %\end{equation*}
  %which completes the proof.
	The corollary now follows since $\rho \geq 3$ guarantees that
	\[
		\zeta \cdot \frac{\varepsilon^3}{\ln \rho}
		\leq
		\Big(1 + \frac{1}{\varepsilon}\ln\Big(\frac{\rho}{\varepsilon}\Big)\Big) \cdot \frac{\varepsilon^3}{\ln \rho}
		\leq
		\varepsilon^3 + \varepsilon^2(1 + \ln \varepsilon^{-1})
		\leq
		\varepsilon^3 + \varepsilon
		\leq
		2\varepsilon
		.
		\tag*{\qedhere}
	\]
\end{proof}

The preceding corollary establishes that, with high probability, elements achieve a high level of $\mathcal{C}$-freeness whenever they are likely to be outside the terminal set $C_{\zeta}$. For this to be useful, we need to show that no element is likely to be in $C_{\zeta}$, which we do by showing that $C_\zeta$ is empty with high probability.
The proof of \cite{FeldmanSZ21} for the known $x$ setting can easily be adapted to show that this holds in an idealized world in which the estimates $\widehat{\Pr}[\cdot]$ are all perfect.
Unfortunately, however, our sampling-based approach introduces estimation errors that require a careful analysis.
In particular, because the construction of a link by \algsinglelink[] is done iteratively through the construction of multiple sets $A_h$, we must bound the accumulated effect of the sampling errors in all the iterations.
To this end, to analyze the sequence $A_1, A_2, \dots$ of sets, we introduce in Section~\ref{sec:progress} another sequence $B_1,B_2,\dots$ of sets that dominates the former one in the sense that $A_h \subseteq B_h$ for every $h$, and is easier to analyze.
We provide more details in Section~\ref{sec:progress}, where we prove the following proposition.

\begin{restatable}{proposition}{propProgress} \label{prop:progress}
  Let $M$ be a matroid, $\lambda \in (0,1)$, $x\in \lambda \cdot P_M$, $\tau\in (\lambda,1]$, $\rho\in \mathbb{Z}_{\geq 3}$, and $\varepsilon \in (0,\sfrac{1}{20}]$.
  Let $A$ be the output of \algsinglelink{$(M,\rho,\mathcal{D}(x),(1-\varepsilon)\tau, \varepsilon)$}.
  Then
  \begin{equation*}
    \E[r(A)] \leq \left(1 + \lambda - (1-3\varepsilon)\tau \right)\cdot \rank(M).
  \end{equation*}
  %where $r$ is the rank function of the matroid $M$.
\end{restatable}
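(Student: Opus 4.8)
The plan is to execute the two‑sequence strategy sketched above. Fix $M$ and $x\in\lambda P_M$, put $\theta\coloneqq(1-2\varepsilon)\tau$, and define a \emph{deterministic} nested family $B_0=\emptyset\subseteq B_1\subseteq B_2\subseteq\cdots$ by $B_h=\{e\in N:\Pr[e\in\spn_M(B_{h-1}\cup R(x))]>\theta\}$; nestedness holds because $e\in B_{h-1}$ forces $e\in\spn_M(B_{h-1}\cup R(x))$ with probability $1$. Let $\mathcal G$ be the event that, over all (at most $\eta$) iterations performed by \algsinglelink, every empirical probability $\widehat\Pr[\cdot]$ it computes is within $\varepsilon\tau$ of the true quantity it estimates. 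The first step is a short induction showing that \emph{on $\mathcal G$} one has $A_h\subseteq B_h$ for every $h\le\overline h$: if $e\in A_h$ then $\widehat\Pr[e\in\spn_M(A_{h-1}\cup S)]>(1-\varepsilon)\tau$ (the threshold \algsinglelink is invoked with), hence on $\mathcal G$ the true probability exceeds $(1-\varepsilon)\tau-\varepsilon\tau=\theta$, and since $A_{h-1}\subseteq B_{h-1}$ by the inductive hypothesis and $\spn_M$ is monotone, $\Pr[e\in\spn_M(B_{h-1}\cup R(x))]>\theta$, i.e.\ $e\in B_h$. Since $A=A_{\overline h}\subseteq A_\eta$ and $r(A_{\overline h})\le\rank(M)$ deterministically, this yields
\[
  \E[r(A)]\;\le\;r(B_\eta)\;+\;\Pr[\overline{\mathcal G}]\cdot\rank(M).
\]

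Next I bound $r(B_\eta)$. If $(1-2\varepsilon)\tau\le\lambda$ the asserted bound is at least $\rank(M)$ and there is nothing to prove, so assume $\theta=(1-2\varepsilon)\tau>\lambda$. Then the deterministic sequence $B_\bullet$ is precisely the ``likely‑to‑be‑spanned'' link construction of~\cite{FeldmanSZ21}, started from $\emptyset$, with spanning‑probability threshold $\theta$, on the matroid $M$ with marginals $x\in\lambda P_M$; their analysis of that construction — which combines the polytope bound $\E[r(R(x))]\le\sum_{e\in N}x_e\le\lambda\rank(M)$ with the defining inequality $\Pr[e\in\spn_M(B_{h-1}\cup R(x))]>\theta$ valid for $e\in B_h$, exploiting that $B_h$ has been built up from $\emptyset$ — shows that $r(B_h)\le(1+\lambda-\theta)\rank(M)$ for every $h$, so in particular $r(B_\eta)$ enjoys this bound. (The truncation level $\eta$ is irrelevant here beyond making $B_\eta$ a subset of the stabilized link; it is tuned for \Cref{prop:inLinkLose}.)

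It remains to bound $\Pr[\overline{\mathcal G}]\cdot\rank(M)$ by $\varepsilon\tau\cdot\rank(M)$ — and this is where I expect the real work to be. Conditioned on the samples used in the first $h-1$ iterations, the set $A_{h-1}$ is fixed and $\widehat\Pr[e\in\spn_M(A_{h-1}\cup S)]$ is an average of $q$ i.i.d.\ indicators whose mean is exactly the estimated probability, so by a Bernstein/Chernoff bound the choice $q=\bigl\lceil 6((1-\varepsilon)\tau)^{-1}\varepsilon^{-2}\ln(\ln\rho/\varepsilon)\bigr\rceil$ makes a single estimate of a probability that is $O(\tau)$ miss by more than $\varepsilon\tau$ with probability polynomially small in $\ln\rho$. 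A naive union bound over all of $N$ is not affordable, since $|N|$ may be huge compared with $\rank(M)$; instead one observes that an estimation error can insert $e$ into $A_h\setminus B_h$ only when the true spanning probability of $e$ lies in an $O(\varepsilon\tau)$‑window around the threshold, and one bounds the \emph{rank} contributed by such wrongly‑inserted elements (not their number), taking a union bound only over the $\eta=O(\varepsilon^{-1}\log\log\rho)$ iterations. Granting this, the displayed inequality together with $r(B_\eta)\le(1+\lambda-(1-2\varepsilon)\tau)\rank(M)$ gives
\[
  \E[r(A)]\;\le\;\bigl(1+\lambda-(1-2\varepsilon)\tau\bigr)\rank(M)+\varepsilon\tau\,\rank(M)\;=\;\bigl(1+\lambda-(1-3\varepsilon)\tau\bigr)\rank(M),
\]
which is the claim.

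To summarize, the new content over~\cite{FeldmanSZ21} is (i) the dominating sequence $B_\bullet$, which lets one inherit their rank bound via a one‑line induction using only monotonicity of $\spn_M$, and (ii) the error analysis of the last paragraph; I expect (ii) — propagating the per‑iteration sampling error of \algsinglelink through all $\overline h$ iterations while keeping the sample budget free of any dependence on $|N|$ — to be the main obstacle.
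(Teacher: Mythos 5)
Your high‐level plan --- dominate the random sequence $A_0\subseteq A_1\subseteq\cdots$ by an auxiliary nested sequence $B_0\subseteq B_1\subseteq\cdots$ whose rank is controlled, and then bound the discrepancy --- is also the paper's, but step three of your proposal has a genuine gap, not a missing detail. You write
\[
  \E[r(A)] \;\le\; r(B_\eta) \;+\; \Pr[\overline{\mathcal G}]\cdot\rank(M)
\]
and aim for $\Pr[\overline{\mathcal G}]\le\varepsilon\tau$, where $\mathcal G$ requires \emph{every} empirical estimate the algorithm computes (one per element of $N$ per iteration) to be within $\varepsilon\tau$ of its true value. With $q=\Theta\bigl((\tau\varepsilon^2)^{-1}\ln(\ln\rho/\varepsilon)\bigr)$ samples, a single estimate whose true probability sits near the threshold misses by $\varepsilon\tau$ with probability on the order of $(\varepsilon/\ln\rho)^3$; a union bound over $\eta|N|$ estimates is therefore of order $\eta|N|(\varepsilon/\ln\rho)^3$, which is not small --- and can exceed $1$ --- because $|N|$ is not bounded in terms of $\rho=\rank(M)$. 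You notice this (``a naive union bound over all of $N$ is not affordable'') and propose to bound the \emph{rank} of wrongly inserted elements rather than their number, which is the right instinct, but that idea is incompatible with the decomposition you set up: as soon as $\Pr[\overline{\mathcal G}]$ is not small, the displayed inequality is vacuous, and the rank‐level control is exactly the piece you defer (``Granting this $\dots$''). A global good event is the wrong vehicle; no tuning of $q$ within the allowed budget can make $\Pr[\overline{\mathcal G}]$ small.

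The paper avoids a good event entirely, and its dominating sequence is \emph{not} your deterministic idealization $B_h=\{e:\Pr[e\in\spn(B_{h-1}\cup R(x))]>\theta\}$. It takes the still sample‐dependent sequence $B_h=\{e:\widehat{\Pr}_S[e\in\spn(T_\alpha(B_{h-1})\cup S)]>(1-\varepsilon)\tau\}$, where $T_\alpha(B)\coloneqq\argmax_{\bar T\supseteq B}\bigl\{r(\bar T\mid B)-\tfrac{1}{1-\alpha}\E[r(\bar T\mid B\cup R(x))]\bigr\}$ with $\alpha=(1-2\varepsilon)\tau$. Two variational consequences of this argmax (\Cref{lem:rankReductionInOverlap}) replace both of your ingredients at once: the bound $\E[r(T_\alpha(B)\mid B\cup R(x))]\le(1-\alpha)\,r(T_\alpha(B)\mid B)$ makes the telescoping bound on $\E[r(B_\eta)]$ go through and plays the role of the $(1+\lambda-\theta)\rank(M)$ bound you import from \cite{FeldmanSZ21}, while $\E[r(Q\cap\spn(T_\alpha(B)\cup R(x))\mid T_\alpha(B))]\le\alpha\,r(Q\mid T_\alpha(B))$ for every $Q$ disjoint from $T_\alpha(B)$ is what controls sample error \emph{at the level of rank}. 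This second property is fed, sample by sample, into a Chernoff‐style potential function over the sets $Z_{p,i}$ of elements spanned by at least $i$ of the first $p$ samples, yielding $\E[r(B_h\mid T_\alpha(B_{h-1}))]\le e^{-\varepsilon^2\tau q/2}\,r(N)$ (\Cref{lem:BNotGrowingMuch}) with no dependence on $|N|$. That potential‐function argument is the technical heart of the proposition; your proposal correctly senses that a rank‐level error analysis is required, but does not contain it, and the good‐event framing does not lead to it.
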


\begin{restatable}{corollary}{corChainIsSpanning}\label{cor:chainIsSpanning}
  Let $M$ be a matroid, $\varepsilon \in (0,\sfrac{1}{20}]$, $\lambda \in (0,1-4\varepsilon]$, and $x\in \lambda \cdot P_M$.
  Let $\mathcal{C}=(C_0,\dots,C_{\zeta+1})$ be the (random) output of \algbuildocrs{$(M,\mathcal{D}(x),\lambda+4\varepsilon,\varepsilon)$}.
  Then, with probability at least $1-\varepsilon$, we have $C_{\zeta}=\emptyset$.
\end{restatable}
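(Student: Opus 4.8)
The plan is to combine Proposition~\ref{prop:progress} with Markov's inequality applied along the chain. Since $r_M(C_\zeta)$ is a non-negative integer, and (as we may assume $M$ has no loops, deleting loops changing nothing because they are never active) $r_M(C_\zeta)=0$ is equivalent to $C_\zeta=\emptyset$, it suffices to show $\E[r_M(C_\zeta)]\le\varepsilon$; Markov then gives $\Pr[r_M(C_\zeta)\ge 1]\le\varepsilon$, i.e.\ $\Pr[C_\zeta=\emptyset]\ge 1-\varepsilon$. To bound this expectation I would track how the rank shrinks link by link. Recall that $C_i$ is the output of $\algsinglelink{$(M\vert_{C_{i-1}},x\vert_{C_{i-1}},\rho,(1-\varepsilon)(\lambda+4\varepsilon),\varepsilon)$}$, run with fresh samples independent of everything used to build $C_0,\dots,C_{i-1}$; hence, conditioning on the realization of $C_{i-1}$, Proposition~\ref{prop:progress} applies to the matroid $M\vert_{C_{i-1}}$, with the role of its ``$\tau$'' played by $\lambda+4\varepsilon$.

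Before invoking it, I would check its hypotheses for this restricted matroid. As already observed in the proof of Corollary~\ref{cor:freenessLikely}, $x\vert_{C_{i-1}}\in\lambda\cdot P_{M\vert_{C_{i-1}}}$ and $r_{M\vert_{C_{i-1}}}(S)=r_M(S)$ for every $S\subseteq C_{i-1}$, so in particular $\rank(M\vert_{C_{i-1}})=r_M(C_{i-1})$ and $r_M(C_i)=r_{M\vert_{C_{i-1}}}(C_i)$. Moreover $\lambda\in(0,1)$, $\lambda+4\varepsilon\in(\lambda,1]$ (the upper bound uses $\lambda\le 1-4\varepsilon$), $\rho=\max\{\rank(M),3\}$ is an integer at least $3$, and $\varepsilon\le\sfrac1{20}$, so all hypotheses hold. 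Proposition~\ref{prop:progress} then yields
\[
  \E[r_M(C_i)\mid C_{i-1}]\ \le\ \gamma\cdot r_M(C_{i-1}),\qquad \gamma:=1+\lambda-(1-3\varepsilon)(\lambda+4\varepsilon).
\]
Iterating with the tower rule, and using $r_M(C_0)=\rank(M)\le\rho$, gives $\E[r_M(C_\zeta)]\le\gamma^\zeta\cdot\rho$.

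It then remains to verify the arithmetic $\gamma^\zeta\rho\le\varepsilon$. Writing $\gamma=1-\delta$ with $\delta=(1-3\varepsilon)(\lambda+4\varepsilon)-\lambda=\varepsilon(4-3\lambda-12\varepsilon)$, the bound $\lambda\le 1-4\varepsilon$ gives $4-3\lambda-12\varepsilon\ge 1$, so $\varepsilon\le\delta<4\varepsilon\le\sfrac15$ and hence $\gamma\in(0,1)$. Therefore $\gamma^\zeta\le e^{-\delta\zeta}\le e^{-\varepsilon\zeta}$, and since $\zeta=\lceil\frac1\varepsilon\ln(\rho/\varepsilon)\rceil\ge\frac1\varepsilon\ln(\rho/\varepsilon)$ we get $\gamma^\zeta\le e^{-\ln(\rho/\varepsilon)}=\varepsilon/\rho$, whence $\E[r_M(C_\zeta)]\le\gamma^\zeta\rho\le\varepsilon$, as needed. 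I expect the only genuinely delicate point to be exactly this last estimate: the inequality $\delta\ge\varepsilon$ is tight precisely at $\lambda=1-4\varepsilon$, so the argument hinges on the specific choice of $\lambda+4\varepsilon$ as the spanning-probability parameter fed to \algbuildocrs and on $\zeta$ being set to $\lceil\frac1\varepsilon\ln(\rho/\varepsilon)\rceil$ rather than anything smaller. Everything else — transferring the hypotheses of Proposition~\ref{prop:progress} to the restricted matroids $M\vert_{C_{i-1}}$, and the independence of the samples across iterations that legitimizes the conditioning — is routine and largely already carried out in the proof of Corollary~\ref{cor:freenessLikely}.
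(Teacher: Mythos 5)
Your proof is correct and follows essentially the same path as the paper's: apply Proposition~\ref{prop:progress} to $M\vert_{C_{i-1}}$ with $\tau=\lambda+4\varepsilon$, chain the resulting $(1-\varepsilon)$-factor rank decrease via the tower rule, and finish with Markov. Your explicit note that $r_M(C_\zeta)=0 \Leftrightarrow C_\zeta=\emptyset$ requires $M$ to be loopless (which can be assumed without loss of generality since $x_e=0$ for any loop $e$) is a subtlety the paper's proof glosses over, so that aside is a welcome addition rather than a deviation.
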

\begin{proof}
  %Note that \Cref{prop:progress} implies that, in each iteration $i\in [\zeta]$ of the for loop, independently of the previous links $C_0,\dots, C_{i-1}$, the expected rank of the link $C_i$ is at most a $(1-\varepsilon)$-fraction of the rank of the matroid $M\vert_{C_{i-1}}$, which is $r(C_{i-1})$.
  %Indeed, because $\tau = \lambda + 4\varepsilon$ in our call to \algbuildocrs[], we have
  %\begin{equation}\label{eq:expRankCiConditioned}
    %\begin{aligned}
      %\E[r(C_i) \mid C_{i-1}] &\leq \left(1 + \lambda - (1-3\varepsilon)(\lambda + 4\varepsilon)\right) \cdot r(C_{i-1})\\
                              %&=\left(1 + 3\varepsilon\lambda - 4\varepsilon + 12\varepsilon^2\right) \cdot r(C_{i-1})\\
                              %&\leq (1 - \varepsilon) \cdot r(C_{i-1}),
    %\end{aligned}
  %\end{equation}
  %where the last inequality uses $\lambda \leq 1 - 4\varepsilon$.
%
  %As we discuss in the following, this implies the statement below, which we prove by induction on $i$:
  %\begin{equation}\label{eq:expRankCiLow}
    %\E[r(C_{i})] \leq (1-\varepsilon)^i \cdot r(N)  \qquad \forall i\in \{0,\dots, \zeta\}.
  %\end{equation}
  %The base case $i=0$ is trivial, because $C_0 = N$ and thus $\E[r(C_0)] = r(N)$.
  %Moreover, assume the statement holds for some $i\in \{0,\dots, \zeta-1\}$.
  %Then, by the inductive hypothesis and \Cref{eq:expRankCiConditioned}, we obtain as desired
	Consider an arbitrary integer $1 \leq i \leq \zeta$, and observe that $x|_{C_{i - 1}} \in \lambda \cdot P_{M|_{C_{i - 1}}}$ because, for every $S \subseteq C_{i - 1}$, it holds that $\sum_{e \in S} x_e \leq \lambda \cdot r_M(S) = \lambda \cdot r_{M|_{C_{i - 1}}}(S)$. Thus, by \Cref{prop:progress},
	\begin{align*}
	\E[r(C_i) \mid C_{i-1}] &\leq \left(1 + \lambda - (1-3\varepsilon)(\lambda + 4\varepsilon)\right) \cdot \rank(M|_{C_{i-1}})\\
                              &=\left(1 + 3\varepsilon\lambda - 4\varepsilon + 12\varepsilon^2\right) \cdot r(C_{i-1})\\
                              &\leq (1 - \varepsilon) \cdot r(C_{i-1}),
	\end{align*}
	where the last inequality uses $\lambda \leq 1 - 4\varepsilon$. By the law of total expectation, the last inequality implies $\E[r(C_i)] \leq (1 - \varepsilon) \cdot \E[r(C_{i-1})]$, and combining this inequality for all values of $i$ yields
  %\begin{align*}
    %\E[r(C_{i+1})] &= \E[\E[r(C_{i+1}) \ | C_i]]
    %\leq \E[(1-\varepsilon) r(C_i)]
    %\leq (1-\varepsilon) \cdot \E[r(C_i)]
    %\leq (1-\varepsilon)^{i+1} r(N),
  %\end{align*}
  %thus showing \Cref{eq:expRankCiLow}.
%
  %Thus, by using $i=\zeta$ in \Cref{eq:expRankCiLow}, we have
  \begin{equation*}
    \E[r(C_{\zeta})] \leq (1-\varepsilon)^{\zeta} \cdot \E[r(C_0)]
			=
			(1-\varepsilon)^{\zeta} \cdot r(N)
      \leq e^{-\varepsilon \cdot \zeta} \cdot \rho
      \leq \varepsilon,
  \end{equation*}
  where we use in the penultimate inequality that $\rho$ is set to be at least $r(N)$ by \algbuildocrs{}.
By Markov's inequality, the last bound on $\E[r(C_{\zeta})]$ implies that
  %\begin{equation*}
    %\Pr[r(C_{\zeta}) \geq 1] \leq \varepsilon.
  %\end{equation*}
  %The result now follows from
  \begin{equation*}
    \Pr[C_{\zeta} = \emptyset] = \Pr[r(C_{\zeta}) = 0] = 1 - \Pr[r(C_{\zeta}) \geq 1] \geq 1 - \varepsilon.
		\tag*{\qedhere}
  \end{equation*}
\end{proof}

Combining the above results, we get the next theorem, which summarizes the properties of the spanning chain produced by \algbuildocrs[].
\begin{theorem} \label{thm:balance}
  Let $M$ be a matroid, $\varepsilon \in (0,\sfrac{1}{20}]$, $\lambda \in (0,1-4\varepsilon]$, and $x\in \lambda \cdot P_M$.
  Let $\mathcal{C}=(C_0,\dots,C_{\zeta+1})$ be the (random) output of \algbuildocrs{$(M,x,\lambda+4\varepsilon, \varepsilon)$}.
  Then, the distribution of $\mathcal{C}$ is a $(1-\lambda - 8\varepsilon)$-balanced spanning chain distribution for $x$ with respect to $M$.
\end{theorem}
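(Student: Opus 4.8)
The plan is to derive \Cref{thm:balance} from \Cref{cor:freenessLikely} and \Cref{cor:chainIsSpanning} by a short application of the law of total expectation. Fix an arbitrary element $e \in N$; by the definition of a $c$-balanced spanning chain distribution, it suffices to show $\E_{\mathcal{C}}[\text{$\mathcal{C}$-freeness of $e$ w.r.t.\ $M$ and $x$}] \geq 1 - \lambda - 8\varepsilon$. Throughout I would exploit the fact that the $\mathcal{C}$-freeness of $e$ is a probability, hence always lies in $[0,1]$; in particular it is nonnegative, which lets me discard unwanted events from expectations and probabilities at will. Since $\varepsilon \in (0,\sfrac1{20}]$ and $\lambda \in (0,1-4\varepsilon]$, the hypotheses of both corollaries are met (note $\sfrac1{20} < \sfrac14$).

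First I would control $\Pr[e \notin C_\zeta]$. By \Cref{cor:chainIsSpanning}, $C_\zeta = \emptyset$ with probability at least $1 - \varepsilon$, so $\Pr[e \notin C_\zeta] \geq \Pr[C_\zeta = \emptyset] \geq 1 - \varepsilon$; in particular this probability is positive, so conditioning on $e \notin C_\zeta$ and the term $\tfrac{2\varepsilon}{\Pr[e\notin C_\zeta]}$ appearing in \Cref{cor:freenessLikely} are well defined. Next, splitting the expectation on whether $e \in C_\zeta$ and dropping the (nonnegative) contribution of $e \in C_\zeta$ gives
\[
  \E_{\mathcal{C}}[\text{$\mathcal{C}$-freeness of $e$}] \geq \Pr[e \notin C_\zeta]\cdot\E_{\mathcal{C}}[\text{$\mathcal{C}$-freeness of $e$} \mid e \notin C_\zeta].
\]
To bound the conditional expectation, I would apply \Cref{cor:freenessLikely}: conditioned on $e \notin C_\zeta$, the $\mathcal{C}$-freeness of $e$ is at least $1-\lambda-4\varepsilon$ with probability at least $1 - \varepsilon - \tfrac{2\varepsilon}{\Pr[e\notin C_\zeta]}$, so by nonnegativity of the $\mathcal{C}$-freeness (to ignore the remaining outcomes),
\[
  \E_{\mathcal{C}}[\text{$\mathcal{C}$-freeness of $e$} \mid e \notin C_\zeta] \geq (1-\lambda-4\varepsilon)\Bigl(1 - \varepsilon - \tfrac{2\varepsilon}{\Pr[e\notin C_\zeta]}\Bigr),
\]
where this holds trivially if the right-hand side is negative since the left-hand side is nonnegative.

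Multiplying the two displays and using $1 - \lambda - 4\varepsilon \geq 0$ (from $\lambda \leq 1-4\varepsilon$) together with $\Pr[e\notin C_\zeta] \geq 1-\varepsilon$, I get
\[
  \E_{\mathcal{C}}[\text{$\mathcal{C}$-freeness of $e$}] \geq (1-\lambda-4\varepsilon)\bigl((1-\varepsilon)\Pr[e\notin C_\zeta] - 2\varepsilon\bigr) \geq (1-\lambda-4\varepsilon)\bigl((1-\varepsilon)^2 - 2\varepsilon\bigr).
\]
It then remains to verify by elementary algebra that this is at least $1-\lambda-8\varepsilon$: one has $(1-\varepsilon)^2 - 2\varepsilon = 1 - 4\varepsilon + \varepsilon^2 \geq 1 - 4\varepsilon$, and expanding $(1-\lambda-4\varepsilon)(1-4\varepsilon) = 1 - \lambda - 8\varepsilon + 4\varepsilon\lambda + 16\varepsilon^2$, the extra terms $4\varepsilon\lambda + 16\varepsilon^2$ are nonnegative, which finishes the argument.

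\textbf{Main obstacle.} There is essentially no hard step here: the proof is pure bookkeeping with conditional expectations, and the only thing to be careful about is keeping the conditioning on $\{e \notin C_\zeta\}$ consistent and invoking nonnegativity of the $\mathcal{C}$-freeness exactly at the points where an event or an outcome is discarded. All the substantive work — the inductive bound on $\E[r(C_\zeta)]$ via \Cref{prop:progress}, and the link-by-link control of the sampling error via \Cref{prop:inLinkLose} — has already been absorbed into \Cref{cor:chainIsSpanning} and \Cref{cor:freenessLikely}.
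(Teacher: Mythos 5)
Your proposal is correct and follows the same route as the paper: lower-bound the expected $\mathcal{C}$-freeness by the product $\Pr[e\notin C_\zeta]\cdot\Pr[\text{freeness}\geq 1-\lambda-4\varepsilon\mid e\notin C_\zeta]\cdot(1-\lambda-4\varepsilon)$, apply \Cref{cor:freenessLikely} and \Cref{cor:chainIsSpanning}, and close with the same elementary algebra $\bigl((1-\varepsilon)^2-2\varepsilon\bigr)(1-\lambda-4\varepsilon)\geq(1-4\varepsilon)(1-\lambda-4\varepsilon)\geq 1-\lambda-8\varepsilon$. The only cosmetic difference is that you phrase the middle step as a conditional expectation rather than a conditional probability times the threshold value, which is the same bound.
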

\begin{proof}%[Proof of \Cref{thm:balance}]
The expected $\mathcal{C}$-freeness of each element $e \in N$ can be lower bounded by
\begin{align*}
	\Pr[e \not\in\ &C_{\zeta}] \cdot \Pr[\text{$\mathcal{C}$-freeness of $e$ is at least $1 - \lambda - 4\varepsilon$} \mid e \not \in C_{\zeta}] \cdot (1 - \lambda - 4\varepsilon)\\
                 &\geq \left(\Pr[e\not\in C_\zeta] (1-\varepsilon) -2\varepsilon\right)(1 - \lambda - 4\varepsilon)\\
                 &\geq \left(\Pr[C_\zeta = \emptyset] (1-\varepsilon) -2\varepsilon\right)(1 - \lambda - 4\varepsilon)\\
                 &\geq \left((1-\varepsilon)^2 - 2\varepsilon\right)(1 - \lambda - 4\varepsilon)\\
                 &\geq (1-4\varepsilon)(1 - \lambda - 4\varepsilon)\\
                 &\geq 1 - \lambda - 8\varepsilon
	,
\end{align*}
where the first inequality holds by \Cref{cor:freenessLikely} and the third one by \Cref{cor:chainIsSpanning}.
\end{proof}

Since sample access to $R(\lambda x)$ can be simulated via sample access to $R(x)$, as was mentioned in the beginning of this section, \Cref{thm:balance} can be used to implement the oracle required by \Cref{thm:chain-to-ocrs}. For the choice of $\lambda = 1/2$, this proves \Cref{thm:main}.\footnote{Technically, we get in this way a weaker version of Theorem~\ref{thm:main} in which $\varepsilon$ is replaced with $O(\varepsilon)$. However, this can be fixed by simply scaling down $\varepsilon$ by an appropriate constant.}

\section{Guaranteeing Progress in Link} \label{sec:progress}

In this section, we prove \Cref{prop:progress}, which we repeat below for convenience.

\propProgress*

Note that the procedure \algsinglelink{$(M,n,\mathcal{D}(x), (1-\varepsilon)\tau, \varepsilon)$} has two sources of randomness.
The first one is the choice of $\overline{h}$, and the second one is the samples taken in line~\ref{algline:getSamples} of the procedure.
To keep these separate, we would like the set $A_i$ to be defined, for every integer $0 \leq i \leq  \eta$, regardless of the value of $\overline{h}$. This can be achieved by pretending (just for analysis purposes) that the upper bound of the loop on line~\ref{algline:loopInLink} of \algsinglelink is $\eta$ rather than $\overline{h}$. 
This does not affect the output set $A \coloneqq A_{\overline{h}}$ of this procedure.
As a consequence of the fact that the sets $A_0, A_1, \dots, A_{\eta}$ are all always defined now, we get the following observation.

\begin{observation} \label{obs:rank_max_at_eta}
It holds that $\E[r(A)] \leq \E[r(A_{\eta})]$.
\end{observation}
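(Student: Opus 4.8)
The plan is to combine two monotonicity facts: that the sequence $A_0,A_1,\dots,A_\eta$ is non-decreasing with respect to inclusion, and that the rank function of $M$ is monotone. First I would establish the inclusion $A_{h-1}\subseteq A_h$ for every $h\in\{1,\dots,\eta\}$, and crucially this holds for \emph{every} realization of the samples drawn in line~\ref{algline:getSamples}. Indeed, if $e\in A_{h-1}$, then for each of the $q$ samples $S_p$ we have $e\in A_{h-1}\cup S_p\subseteq\spn_M(A_{h-1}\cup S_p)$, so the empirical estimate $\widehat{\Pr}_{S\sim(S_1,\dots,S_q)}[e\in\spn_M(A_{h-1}\cup S)]$ is exactly $1$. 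Since the spanning-probability threshold used inside this invocation of \algsinglelink is $(1-\varepsilon)\tau$, which is strictly below $1$ because $\varepsilon>0$ and $\tau\le1$, the element $e$ satisfies the defining condition of $A_h$, i.e., $e\in A_h$.

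Chaining these inclusions yields $A_0\subseteq A_1\subseteq\dots\subseteq A_\eta$ on every outcome of the randomness. Since $\overline{h}\in\{1,\dots,\eta\}$, in particular $A=A_{\overline{h}}\subseteq A_\eta$ pointwise (over both the choice of $\overline{h}$ and the samples). Monotonicity of the matroid rank function $r$ then gives $r(A)\le r(A_\eta)$ pointwise, and taking expectations over all the randomness yields $\E[r(A)]\le\E[r(A_\eta)]$, which is the claim.

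I do not expect a genuine obstacle here; the two points deserving a word of care are (i) that we are, for analysis purposes, running the loop on line~\ref{algline:loopInLink} of \algsinglelink all the way to $\eta$ using fresh independent samples in every iteration, so that the sets $A_0,\dots,A_\eta$ live on a common probability space and $A_\eta$ is meaningful, without changing the output $A=A_{\overline{h}}$; and (ii) that the threshold $(1-\varepsilon)\tau$ is strictly less than $1$, which is precisely what prevents an element already in $A_{h-1}$ from being excluded from $A_h$ under an unlucky draw of the $q$ samples.
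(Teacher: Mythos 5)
Your proof is correct and takes the same route as the paper: $A=A_{\overline h}$ is one of the sets in the nested chain $A_1\subseteq\dots\subseteq A_\eta$, so $r(A)\le r(A_\eta)$ pointwise and the bound follows in expectation. The only difference is that you spell out why the inclusion $A_{h-1}\subseteq A_h$ holds on every sample realization (the empirical estimate is $1>(1-\varepsilon)\tau$ for elements of $A_{h-1}$), a step the paper states without justification; this extra care is accurate and harmless.
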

\begin{proof}
The observation holds since $A$ is one of the sets $A_1, A_2, \dots, A_{\eta}$, and these sets obey $A_1 \subseteq A_2 \subseteq \dots \subseteq A_{\eta}$.
\end{proof}

In light of \Cref{obs:rank_max_at_eta}, to prove \Cref{prop:progress}, it suffices to upper bound $\E[r(A_{\eta})]$, which is our goal in the rest of the section. A natural way to get such an upper bound would be to upper bound how quickly $\E[r(A_i)]$ increases when viewed as a function of $i$. However, we do not know of an easy way to do that directly. Instead, we define a new series of sets $B_0, B_1, \dotsc, B_\eta$, such that $A_i \subseteq B_i$ for every integer $0 \leq i \leq \eta$, and then upper bound the rate in which $\E[r(B_i)]$ increases as a function of $i$. To define this new series, we need the following tool (\Cref{lem:rankReductionInOverlap}), a formal proof of which can be found in \Cref{ssc:rankReductionInOverlap}. In this tool, and in the rest of this section, given two sets $S,T \subseteq M$, we use the standard shorthand $r(S \mid T) \coloneq r(S \cup T) - r(T)$. We also refer to this expression as the \emph{marginal contribution of the set $S$ with respect to the set $T$}.

\begin{restatable}{lemma}{lemRankReductionInOverlap}\label{lem:rankReductionInOverlap}
  For every set $B\subseteq N$ and value $\alpha \in [0,1)$, let us define
  \begin{equation*}
    T_\alpha(B) \coloneqq \argmax_{\substack{\bar{T} \subseteq N:\\B \subseteq \bar{T}}} \Big\{ r(\bar{T}\ |\ B) - \frac{1}{1-\alpha}\E[r(\bar{T}\ |\ B\cup R(x))] \Big\}.
  \end{equation*}
  Then, $B \subseteq T_\alpha(B)$. Furthermore, 
	\begin{itemize}
		\item $\E[r(T_\alpha(B) \ |\ B\cup R(x))] \leq (1-\alpha)\cdot r(T_\alpha(B)\ |\ B)$; and
    \item  $\E[r(Q\cap \spn(T_\alpha(B)\cup R(x)) \ |\ T_\alpha(B) )] \leq \alpha\cdot r(Q \ |\ T_\alpha(B))$
      for every set $Q \subseteq N \setminus T_\alpha(B)$.
	\end{itemize}
\end{restatable}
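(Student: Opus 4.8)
The plan is to prove Lemma~\ref{lem:rankReductionInOverlap} by exploiting the structure of the optimizer $T_\alpha(B)$ via a standard exchange-type argument on the submodular rank function. First, $B \subseteq T_\alpha(B)$ is immediate from the definition, since the argmax is taken over sets $\bar T$ with $B \subseteq \bar T$ (and the empty choice $\bar T = B$ attains objective value $0$, so the maximum is at least $0$ and is attained by some valid $\bar T$). For notational convenience, write $T \coloneqq T_\alpha(B)$ and let $f(\bar T) \coloneqq r(\bar T \mid B) - \frac{1}{1-\alpha}\E[r(\bar T \mid B \cup R(x))]$ be the objective, so $f(T) \ge f(B) = 0$.

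**First bullet.** The inequality $\E[r(T \mid B \cup R(x))] \le (1-\alpha)\, r(T \mid B)$ is just a rearrangement of $f(T) \ge 0$: indeed $f(T) \ge 0$ means $r(T \mid B) \ge \frac{1}{1-\alpha}\E[r(T \mid B \cup R(x))]$, which gives the claim after multiplying by $1-\alpha$.

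**Second bullet — the main step.** Fix a set $Q \subseteq N \setminus T$. The goal is $\E[r(Q \cap \spn(T \cup R(x)) \mid T)] \le \alpha \cdot r(Q \mid T)$. The key idea is to compare $f(T)$ with $f(T \cup Q)$, using optimality of $T$, i.e. $f(T) \ge f(T \cup Q)$, equivalently $f(T \cup Q) - f(T) \le 0$. Expanding this difference,
\begin{align*}
f(T \cup Q) - f(T) = \big(r(T \cup Q \mid B) - r(T \mid B)\big) - \tfrac{1}{1-\alpha}\E\big[r(T \cup Q \mid B \cup R(x)) - r(T \mid B \cup R(x))\big].
\end{align*}
By the telescoping/chain identity for the rank function, $r(T \cup Q \mid B) - r(T \mid B) = r(Q \mid T)$ (since $B \subseteq T$), and similarly $r(T \cup Q \mid B \cup R(x)) - r(T \mid B \cup R(x)) = r(Q \mid T \cup R(x))$ (using $B \subseteq T$, so $B \cup R(x) \cup T = T \cup R(x)$). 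Hence optimality gives
\begin{align*}
r(Q \mid T) \le \tfrac{1}{1-\alpha}\,\E[r(Q \mid T \cup R(x))].
\end{align*}
Now I relate $\E[r(Q \mid T \cup R(x))]$ back to the span. The elements of $Q$ lying in $\spn(T \cup R(x))$ contribute nothing to $r(Q \mid T \cup R(x))$; more precisely, writing $Q' \coloneqq Q \setminus \spn(T \cup R(x))$, one has $r(Q \mid T \cup R(x)) = r(Q' \mid T \cup R(x)) \le r(Q' \mid T) = r(Q \mid T) - r\big(Q \cap \spn(T \cup R(x)) \mid T\big)$, where the last equality needs a short argument: deleting from $Q$ exactly the elements spanned by $T$ (within the matroid $M / T$) drops the marginal rank by the rank of that spanned piece, because in the contracted matroid $M/T$ those elements are loops. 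Taking expectations and substituting into the displayed inequality yields
\begin{align*}
r(Q \mid T) \le \tfrac{1}{1-\alpha}\Big(r(Q \mid T) - \E[r(Q \cap \spn(T \cup R(x)) \mid T)]\Big),
\end{align*}
which rearranges to exactly $\E[r(Q \cap \spn(T \cup R(x)) \mid T)] \le \alpha \cdot r(Q \mid T)$, as desired.

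**Main obstacle.** The step I expect to require the most care is the matroid identity $r(Q' \mid T) = r(Q \mid T) - r(Q \cap \spn(T \cup R(x)) \mid T)$ where $Q' = Q \setminus \spn(T \cup R(x))$ — in particular, verifying that removing the $\spn$-part of $Q$ reduces the marginal rank over $T$ by exactly the rank of that part, and that this behaves correctly under the expectation over $R(x)$ (the set $Q'$ and the spanned piece are both random). This is essentially the statement that, in the contracted matroid $M/T$, the elements of $Q$ that become spanned by the image of $R(x)$ form a "rank-removable" set; it should follow from repeated application of submodularity of $r$ together with the defining property of $\spn$, but one must be careful that $Q \cap \spn(T\cup R(x))$ need not itself be spanned by $T$ alone. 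I would handle this by working entirely inside $M/T$: there, $Q \cap \spn(T \cup R(x))$ maps to a set contained in $\spn_{M/T}(\bar R)$ for $\bar R$ the image of $R(x)$, and a clean submodularity inequality gives $r_{M/T}(Q) \le r_{M/T}(Q \setminus \spn_{M/T}(\bar R)) + r_{M/T}(Q \cap \spn_{M/T}(\bar R) \mid \text{rest})$ with the last term controllable; the remaining bookkeeping is routine and deferred to \Cref{ssc:rankReductionInOverlap}.
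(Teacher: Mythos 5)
Your argument for the inclusion $B \subseteq T_\alpha(B)$, the first bullet (rearranging $f(T_\alpha(B)) \geq f(B) = 0$), and the optimality comparison $f(T_\alpha(B)) \geq f(T_\alpha(B) \cup Q)$, which after expanding the marginals yields (writing $T \coloneqq T_\alpha(B)$)
\[
r(Q \mid T) - \E[r(Q \mid T \cup R(x))] \leq \alpha\, r(Q \mid T),
\]
all match the paper's proof and are correct. The gap is precisely the step you flag. Write $Q'' \coloneqq Q \cap \spn(T \cup R(x))$ and $Q' \coloneqq Q \setminus \spn(T \cup R(x))$. The equality you need,
\[
r(Q' \mid T) = r(Q \mid T) - r(Q'' \mid T),
\]
is false in general: submodularity applied to $Q' \cup T$ and $Q'' \cup T$ (whose union is $Q \cup T$ and whose intersection is $T$, as $Q', Q'' \subseteq N \setminus T$ are disjoint) gives $r(Q' \mid T) \geq r(Q \mid T) - r(Q'' \mid T)$, which is the \emph{reverse} direction, and it can be strict. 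For a counterexample, take a graphic matroid with $T = \emptyset$, two parallel edges $a, b$ between vertices $1$ and $2$, an edge $c$ between $2$ and $3$, and an edge $d$ between $1$ and $3$; with the realization $R(x) = \{b\}$ and $Q = \{a, c, d\}$, one has $Q'' = \{a\}$, $Q' = \{c, d\}$, and $r(Q') = 2$ while $r(Q) - r(Q'') = 1$. The $M/T$ repair you sketch with the undetermined ``rest'' does not close the gap: if ``rest'' is $Q'$, the displayed relation is just the chain rule with equality, and $r_{M/T}(Q'' \mid Q')$ is not bounded above by $r_{M/T}(Q'')$ in general.

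The fix, which is the route the paper takes, is to avoid the detour through $Q'$ entirely and prove $r(Q \mid T \cup R(x)) \leq r(Q \mid T) - r(Q'' \mid T)$ directly by a single application of submodularity to $U = Q \cup T$ and $W = \spn(T \cup R(x))$. Since $T \subseteq W$ one has $U \cap W = T \cup Q''$; moreover $r(W) = r(T \cup R(x))$ and $r(U \cup W) = r(Q \cup T \cup R(x))$. Then $r(U) + r(W) \geq r(U \cup W) + r(U \cap W)$ rearranges exactly to the desired inequality, holding pointwise in the realization of $R(x)$; taking expectations and plugging into your optimality bound completes the second bullet.
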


Intuitively, $T_\alpha(B)$ is an extension of $B$ obtained by adding to $B$ sets whose expected marginal contribution with respect to $B \cup R(x)$ is much smaller than their expected marginal contribution with respect to $B$ itself. This naturally implies that the expected marginal contribution of the entire set $T_\alpha(B)$ with respect to $B\cup R(x)$ is much smaller than the marginal contribution of $T_\alpha(B)$ with respect to $B$, which is the first bullet point of \Cref{lem:rankReductionInOverlap}. Another consequence of the definition of $T_\alpha(B)$ is that the span of any sample, which has the same distribution as $R(x)$, is unlikely to substantially overlap (in terms of span) with any set $Q$ that is disjoint from $T_\alpha(B)$. This is formally stated by the second bullet point of \Cref{lem:rankReductionInOverlap}, which uses the marginal contribution with respect to $T_\alpha(B)$ as a way to measure the ``overlap.''

The last property is very helpful for bounding sample errors when computing sets of the type
\begin{equation*}
  \{e\in N \colon \widehat{\Pr}_{S \sim (S_1, S_2, \dotsc, S_q)} [e\in \spn(T_\alpha(B) \cup S)]>\tau\}
\end{equation*}
for some set $B$. Notice that in \algsinglelink we compute similarly looking sets, but these sets have $A_{h-1}$ in their definitions instead of $T_\alpha(B)$.
%We will expand more formally on this later on.
This discrepancy is, intuitively, the reason that we have to define the above-mentioned auxiliary sequence $B_0,B_1,\dots,B_\eta$. To formally define this sequence, let us denote by $S_p^h$ the sample set $S_p$ in iteration number $h$ of the loop of the procedure \algsinglelink. Recall that $A_0 = \emptyset$, and for every $h \in [\eta]$,
\begin{equation*}
  A_h = \{e\in N \colon \widehat{\Pr}_{S \sim (S_1^h, S_2^h, \dotsc, S_q^h)} [e\in \spn(A_{h-1} \cup S)] > (1-\varepsilon)\tau\}. 
\end{equation*}
Similarly, we define $B_0 \coloneq \emptyset$, and for every $h \in [\eta]$,
\begin{equation}\label{eq:BhDefinition}
	B_h \coloneq \{e\in N \colon \widehat{\Pr}_{S \sim (S_1^h, S_2^h, \dotsc, S_q^h)} [e \in \spn(T_\alpha(B_{h - 1}) \cup S)] > (1-\varepsilon)\tau\},
\end{equation}
where $\alpha$ is chosen to be 
\begin{equation}
\alpha = \tau(1 - 2\varepsilon)\,.
\label{eq:definitionalpha}
\end{equation}

We now observe that, as was promised above, the sequence $B_0,\dots, B_\eta$ dominates the sequence $A_0,\dots, A_\eta$ in the sense that $A_h \subseteq B_h$ for every $h \in [\eta]$. 
Notice that this is useful since it implies that we can upper bound $\E[r(A_\eta)]$ by upper bounding $\E[r(B_\eta)]$.

\begin{lemma}\label{lem:AhSubseteqBh}
For every $h\in \{0,\dots, \eta\}$, $A_h \subseteq B_h$.
\end{lemma}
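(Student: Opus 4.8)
The plan is to prove the claim by induction on $h$. The base case $h = 0$ is immediate since $A_0 = B_0 = \emptyset$. For the inductive step, suppose $A_{h-1} \subseteq B_{h-1}$ for some $h \in [\eta]$; I want to deduce $A_h \subseteq B_h$. The key point is that both $A_h$ and $B_h$ are defined from the \emph{same} sample sets $S_1^h, \dots, S_q^h$ of iteration $h$, so the comparison reduces to a deterministic, sample-by-sample argument: it suffices to show that for each fixed sample $S = S_p^h$ and each element $e \in N$, if $e \in \spn(A_{h-1} \cup S)$ then $e \in \spn(T_\alpha(B_{h-1}) \cup S)$. Indeed, if that implication holds, then for every $e$ the count of samples witnessing $e \in \spn(A_{h-1} \cup S)$ is at most the count witnessing $e \in \spn(T_\alpha(B_{h-1}) \cup S)$, hence $\widehat{\Pr}[e \in \spn(A_{h-1}\cup S)] \le \widehat{\Pr}[e \in \spn(T_\alpha(B_{h-1})\cup S)]$, so passing the threshold $(1-\varepsilon)\tau$ for the former forces it for the latter, giving $A_h \subseteq B_h$.

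The deterministic implication follows from monotonicity of the span operator together with two containments. First, by the inductive hypothesis $A_{h-1} \subseteq B_{h-1}$, and by \Cref{lem:rankReductionInOverlap} we have $B_{h-1} \subseteq T_\alpha(B_{h-1})$; chaining these gives $A_{h-1} \subseteq T_\alpha(B_{h-1})$. Consequently $A_{h-1} \cup S \subseteq T_\alpha(B_{h-1}) \cup S$, and since $\spn$ is monotone ($X \subseteq Y$ implies $\spn(X) \subseteq \spn(Y)$), we get $\spn(A_{h-1} \cup S) \subseteq \spn(T_\alpha(B_{h-1}) \cup S)$. In particular $e \in \spn(A_{h-1}\cup S)$ implies $e \in \spn(T_\alpha(B_{h-1})\cup S)$, as needed. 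This completes the inductive step and the proof.

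I do not expect any real obstacle here: the statement is essentially a formal bookkeeping step, and the only facts used are the trivial monotonicity of the matroid span operator, the inclusion $B \subseteq T_\alpha(B)$ supplied by \Cref{lem:rankReductionInOverlap}, and the observation that $A_h$ and $B_h$ are built from the same samples so that their defining empirical probabilities can be compared pointwise. The mild subtlety worth stating explicitly is precisely this last point — that the coupling is immediate because $S_1^h,\dots,S_q^h$ appear verbatim in both \eqref{eq:BhDefinition} and the definition of $A_h$ — so the inequality between empirical probabilities is not merely in distribution but holds sample-path-wise.
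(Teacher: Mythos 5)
Your proof is correct and follows essentially the same route as the paper: induction on $h$, chaining $A_{h-1} \subseteq B_{h-1} \subseteq T_\alpha(B_{h-1})$ via the inductive hypothesis and \Cref{lem:rankReductionInOverlap}, then invoking monotonicity of the span operator and the shared samples $S_1^h,\dots,S_q^h$ to compare the defining empirical probabilities pointwise. You are slightly more explicit than the paper about the sample-path coupling, but that point is implicit in the paper's remark that the two definitions are identical apart from the span expression.
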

\begin{proof}
  We prove the statement by induction.
  The base case $h=0$ is trivial since $A_0=B_0=\emptyset$.
  For the induction step, the induction hypothesis is that $A_{h-1} \subseteq B_{h-1}$ for some $h\in \{1,\dots, \eta\}$, and we need to show $A_h \subseteq B_h$.
  %By definition of $T_h$, we have $B_{h-1}\subseteq T_h$.
  Since \Cref{lem:rankReductionInOverlap} guarantees that $B_{h - 1} \subseteq T(B_{h - 1})$, by the monotonicity of the span function, we have $\spn(A_{h-1}\cup S) \subseteq \spn(B_{h-1}\cup S) \subseteq \spn(T(B_{h - 1})\cup S)$.
  Recall now that $A_h$ and $B_h$ are defined identically, with the only difference being that the expression $\spn(A_{h-1}\cup S)$ in the definition of $A_h$ is replaced with $\spn(T(B_{h - 1}) \cup S)$ in the definition of $B_h$. Since we have already seen that $\spn(A_{h-1}\cup S) \subseteq \spn(T(B_{h - 1})\cup S)$, this implies $A_h \subseteq B_h$, as desired.
\end{proof}

To bound the expected rank of the sets in $B_0,\dots, B_\eta$, it is helpful to think of the construction of $B_h$ from $B_{h-1}$ as happening in two steps. 
In a first step, we compute the set $T_\alpha(B_{h - 1})$, which is a superset of $B_{h - 1}$, and in a second step we compute $B_h$ from $T_\alpha(B_{h-1})$ as shown in \Cref{eq:BhDefinition}.
The following lemma shows that in expectation the rank $r(B_h)$ is not much larger than the rank of $T_\alpha(B_{h - 1})$.
Hence, the second step of this two-step process does not add much to the expected rank of $B_h$.

\begin{restatable}{lemma}{lemBNotGrowingMuch} \label{lem:BNotGrowingMuch}
  For every $h \in [\eta]$, $\E[r(B_h\ |\ T_\alpha(B_{h - 1}))] \leq e^{-\frac{1}{2} \varepsilon^2 \tau q} \cdot r(N) \leq \left(\frac{\varepsilon}{\ln \rho}\right)^3\cdot r(N)$.
\end{restatable}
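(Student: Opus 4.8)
The plan is to condition throughout on the first $h-1$ iterations of \algsinglelink, so that $B_{h-1}$, and hence $T \coloneqq T_\alpha(B_{h-1})$, are fixed and the only remaining randomness is the batch $S_1^h, \dots, S_q^h$ of $q$ i.i.d.\ samples drawn in iteration $h$. Because $\spn_M(T) \subseteq \spn_M(T \cup S_p^h)$ for every $p$, every element of $\spn_M(T)$ belongs to $B_h$, so $r(B_h \mid T)$ equals $r'(B_h)$, where $r'$ is the rank function of the contraction $M/T$ (in which the elements of $\spn_M(T) \setminus T$ are loops). Hence it suffices to bound $\E[r'(B_h)]$ by $e^{-\varepsilon^2 \tau q / 2}\cdot \rank(M)$; the stated numerical consequence then follows by substituting $q = \big\lceil \tfrac{6}{\tau\varepsilon^2}\ln\tfrac{\ln \rho}{\varepsilon}\big\rceil$, which makes $\tfrac{1}{2}\varepsilon^2 \tau q \geq 3\ln\tfrac{\ln \rho}{\varepsilon}$.

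The first ingredient is a per-element tail bound. For a non-loop element $e$ of $M/T$, i.e., $e \notin \spn_M(T)$, the second property of \Cref{lem:rankReductionInOverlap} applied with the singleton $Q = \{e\}$ gives $\Pr_{S \sim \mathcal{D}(x)}[e \in \spn_M(T \cup S)] = \E\big[r'(\{e\} \cap \spn_M(T \cup S))\big] \leq \alpha\cdot r'(\{e\}) = \alpha = (1-2\varepsilon)\tau$. Therefore $e \in B_h$ exactly when $X_e \coloneqq |\{p \in [q] : e \in \spn_M(T \cup S_p^h)\}|$, a sum of $q$ independent Bernoulli variables each of mean at most $(1-2\varepsilon)\tau$, exceeds the threshold $(1-\varepsilon)\tau q$; a multiplicative Chernoff bound then yields $\Pr[e \in B_h] \leq e^{-\varepsilon^2 \tau q / 2}$ (the worst case being $\Pr[e\in\spn_M(T\cup S)] = (1-2\varepsilon)\tau$). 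Loops of $M/T$ always belong to $B_h$ but contribute nothing to $r'$, so they may be disregarded.

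It remains to pass from this per-element estimate to the rank bound, and this is the step I expect to be the main obstacle: summing $\Pr[e \in B_h]$ over all of $N$ only gives $|N|\cdot e^{-\varepsilon^2\tau q/2}$, which is useless since $|N|$ can be exponentially larger than $\rank(M)$, so the matroid structure must genuinely be exploited. A natural starting point is a deterministic inequality: taking a basis $I$ of $B_h$ in $M/T$ (so $|I| = r'(B_h)$), every $e \in I$ lies in $\spn_M(T \cup S_p^h)$ for more than $(1-\varepsilon)\tau q$ indices $p$, and summing over $e \in I$ while using that $I$ is independent in $M/T$ gives
\[
  (1-\varepsilon)\tau q \cdot r'(B_h) \;<\; \sum_{p=1}^{q} r'\big(B_h \cap \spn_M(T \cup S_p^h)\big).
\]
Since $B_h$ depends on $S_p^h$, the second property of \Cref{lem:rankReductionInOverlap} cannot be applied to the summands verbatim; I would circumvent this with a leave-one-out decomposition, writing, for each $p$ and conditioned on the other $q-1$ samples, $B_h = G^{(p)} \cup \big(H^{(p)} \cap \spn_M(T \cup S_p^h)\big)$, where $G^{(p)}$ collects the elements already past the count threshold after the other $q-1$ samples and $H^{(p)}$ those exactly one step below it, so that $G^{(p)}$ and $H^{(p)}$ are independent of $S_p^h$; then the second property of \Cref{lem:rankReductionInOverlap}, applied to the fixed sets $G^{(p)}$ and $H^{(p)}$, bounds $\E_{S_p^h}[r'(\,\cdot \cap \spn_M(T\cup S_p^h))]$ in terms of $r'(G^{(p)}) \leq r'(B_h)$ and $r'(H^{(p)})$. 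This reduces the whole estimate to controlling the expected rank of the ``near-threshold'' sets $H^{(p)}$, which are themselves of the same ``elements lying in many sampled spans'' type; closing this recursion while keeping the bound at $e^{-\varepsilon^2\tau q/2}\cdot \rank(M)$ — rather than accumulating spurious factors — is the delicate core of the argument, and carefully choosing how the recursion is organized and how the parameters are tracked is exactly where the work lies.
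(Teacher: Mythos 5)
Your setup and initial steps are correct: conditioning on the first $h-1$ rounds to fix $T := T_\alpha(B_{h-1})$, identifying $r(B_h\ |\ T)$ with the rank of $B_h$ in $M/T$, the per-element bound $\Pr[e\in\spn_M(T\cup S)]\le\alpha=(1-2\varepsilon)\tau$ for non-loops of $M/T$ via the second bullet of \Cref{lem:rankReductionInOverlap} with $Q=\{e\}$, and the observation that a union bound over $N$ is useless. The deterministic inequality $(1-\varepsilon)\tau q\cdot r(B_h\ |\ T)<\sum_{p}r(B_h\cap\spn_M(T\cup S_p^h)\ |\ T)$ obtained from a basis of $B_h$ in $M/T$, and the leave-one-out split $B_h=G^{(p)}\cup\bigl(H^{(p)}\cap\spn_M(T\cup S_p^h)\bigr)$ to decouple $B_h$ from $S_p^h$, are also sound.

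However, as you flag yourself, you never close the resulting recursion, and this is a genuine gap rather than a routine computation. Applying the second bullet of \Cref{lem:rankReductionInOverlap} with $Q=G^{(p)}\cup H^{(p)}$ gives, roughly, $(1-\varepsilon)\tau\cdot\E[r(B_h\ |\ T)]\le\alpha\bigl(\E[r(G^{(p)}\ |\ T)]+\E[r(H^{(p)}\ |\ T)]\bigr)$; since $\E[r(G^{(p)}\ |\ T)]$ is of the same order as $\E[r(B_h\ |\ T)]$, this merely trades the target for $\E[r(H^{(p)}\ |\ T)]$, which is again a ``near-threshold'' quantity of the same kind, with no manifest contraction. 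The paper avoids this by tracking not just the two slices $G^{(p)}$ and $H^{(p)}$, but the expected rank $\gamma_{p,i}=\E[r(Z_{p,i}\ |\ T)]$ at \emph{every} hit-count level $i$, where $Z_{p,i}$ is the set of elements outside $T$ spanned by $T$ together with at least $i$ of the first $p$ samples, and then packaging all of these into the moment-generating-function-style potential $\Phi_p(\kappa)=\sum_i(\gamma_{p,i}-\gamma_{p,i+1})e^{\kappa i}$. \Cref{lem:gammaRelation}, a telescoping consequence of exactly the tool you identified (the second bullet of \Cref{lem:rankReductionInOverlap}), yields the clean one-step contraction $\Phi_p(\kappa)\le(1-\alpha+\alpha e^\kappa)\Phi_{p-1}(\kappa)$, after which $\gamma_{q,\lceil(1-\varepsilon)\tau q\rceil}\le e^{-\kappa\lceil(1-\varepsilon)\tau q\rceil}\Phi_q(\kappa)$ with the choice $\kappa=\ln\tfrac{1-\varepsilon}{1-2\varepsilon}$ finishes the argument, mirroring the textbook Chernoff-bound proof. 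Your $G^{(p)}$ and $H^{(p)}$ are two layers of the family $\{Z_{q-1,i}\}_i$; the missing idea is to keep all layers and weight them exponentially, which converts the infinite regress into a single multiplicative recursion per sample.
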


The second inequality of the lemma follows immediately from the definition of $q$.
Proving the first inequality is much more involved, and is quite technically challenging. Thus, we defer the formal proof of this inequality to \Cref{sec:BNotGrowingMuch},
and in the following lines, we explain why it intuitively follows from the properties of $T_\alpha(B_{h-1})$ given by \Cref{lem:rankReductionInOverlap}.
Notice that $B_h$ includes elements that are likely to be spanned by $T_\alpha(B_{h - 1}) \cup S$, and $T_\alpha(B_{h - 1})$ itself already includes every element that is likely to be spanned by $T_\alpha(B_{h - 1}) \cup R(x)$. Since $S$ is a sample from the distribution $\mathcal{D}(x)$ of $R(x)$, up to sampling errors, we expect all the elements of $B_h$ to already be in $T_\alpha(B_{h - 1})$. Thus, in a world of perfect samples, $r(B_h\ |\ T_\alpha(B_{h - 1}))$ would be exactly $0$. 
Bounding the deviation from this ideal due to sample errors is the main challenge of the proof.
To this end, the following is a helpful viewpoint.
When checking which elements $e\in N$ fulfill
%\begin{equation}\label{eq:BhConditionForE}
\[
  \widehat{\Pr}_{S \sim (S_1^h, S_2^h, \dots, S_q^h)} [e \in \spn(T_\alpha(B_{h - 1}) \cup S)] > (1-\varepsilon)\tau,
\]
we consider the samples $S_1^h, S_2^h, \dots, S_q^h$ one by one, starting from $S_1^h$.
Assume that we have considered the samples $S_1^h, S_2^h, \dots$ up to $S_p^h$ so far.
Then we keep track, for each integer $i$, of how big the rank is of the elements
\[
  Z_{p,i} \coloneqq \{e\in N\setminus T_{\alpha}(B_{h-1}) \colon |\{j\in [p]\colon e\in \spn(S^h_j\cup T_\alpha(B_{h-1}))\}|\geq i\}.
\]
In words, $Z_{p,i}$ is the set of elements outside of $T_\alpha(B_{h-1})$ that are spanned by at least $i$ many of the samples $S_1^h, S_2^h, \dotsc, S_p^h$ together with $T_\alpha(B_{h-1})$.
The second bullet point of \Cref{lem:rankReductionInOverlap} guarantees that, when considering the next sample $S_{j+1}^h$, the overlap of $\spn(T_\alpha(B_{h-1})\cup S_j^h)$ with $Z_{p,i}$ (this set takes the role of $Q$ in \Cref{lem:BNotGrowingMuch}) is small when measured in terms of the marginal contribution to the rank (with respect to $T_\alpha(B_{h - 1})$).
This shows that, in some sense, $r(Z_{p,i} \ |\ T_\alpha(B_{h-1}))$ behaves like a binomial random variable, which allows us to bound its expectation. We then get a bound on the desired quantity $\E[r(B_h\ |\ T_\alpha(B_{h-1}))]$ because $B_h\setminus T_\alpha(B_{h-1}) \subseteq Z_{q,\lceil(1-\varepsilon)\tau q \rceil}$.%
\footnote{An equality between $B_h\setminus T_{\alpha}(B_{h-1})$ and $Z_{q,\lceil(1-\varepsilon)\tau q \rceil}$ would have held if the definition of $B_{h}$ had used a non-strict inequality.}

Using \Cref{lem:BNotGrowingMuch}, we are now ready to prove \Cref{prop:progress}.

\begin{proof}[Proof of \Cref{prop:progress}]
The rank function $r$ of a matroid is monotone and submodular. These properties imply that
\begin{align} \label{eq:B_eta_bound}
	\E[r(B_{\eta})] &\leq \E[r(B_{\eta} \cup R(x))] \\\nonumber
                                  &= \E[r(R(x))] + \sum_{h=1}^{\eta} \E[r(B_h\ |\ B_{h-1}\cup R(x)]\\\nonumber
                                  &= \E[r(R(x))] + \sum_{h=1}^{\eta}\Big(
                                    \E[r( T_\alpha(B_{h - 1}) \ |\ B_{h-1}\cup R(x))] + \E[r(B_h \ |\ T_\alpha(B_{h - 1}) \cup R(x))]
                                    \Big)\\\nonumber
                                  &\leq \E[r(R(x))] + \sum_{h=1}^{\eta}\Big(
                                    (1-\alpha)\cdot \E[r( T_\alpha(B_{h - 1})\ |\ B_{h-1})] + \E[r(B_h \ |\  T_\alpha(B_{h - 1}))]
                                    \Big)\\\nonumber
                                  &= \E[r(R(x))] + (1-\alpha)\cdot \sum_{h=1}^{\eta} \E[r(B_h\ |\ B_{h-1})]
                                  + \alpha\cdot \sum_{h=1}^{\eta} \E[r(B_h \ |\ T_\alpha(B_{h - 1}))],
\end{align}
where the equalities exploit the fact that $B_{h-1} \subseteq T_\alpha(B_{h - 1}) \subseteq B_h$ for every $h \in [\eta]$, and the second inequality follows from the first bullet point of \Cref{lem:rankReductionInOverlap} (and the submodularity of $r$).

We now would like to upper bound each one of the terms on the rightmost side of \Cref{eq:B_eta_bound}. First, recall that $x\in \lambda\cdot P_M$, which implies $\E[r(R(x))] \leq \E[|R(x)|] = \|x\|_1 \leq \lambda\cdot r(N)$ because $r(U) \leq |U|$ for every set $U \subseteq N$. Second, by the monotonicity of $r$ and the fact that $r(B_0) = r(\varnothing) = 0$, we get $\sum_{h=1}^{\eta}{r(B_h \ |\ B_{h-1})} = r(B_\eta) - r(B_0) = r(B_\eta) \leq r(N)$. Finally, \Cref{lem:BNotGrowingMuch} guarantees that $\E[r(B_h \ |\ T_\alpha(B_{h - 1}))] \leq e^{-\frac{1}{2} \varepsilon^2 \tau q} \cdot r(N) \leq \left(\frac{\varepsilon}{\ln \rho}\right)^3\cdot r(N)$ for every $h\in [\eta]$. Plugging all these bounds into \Cref{eq:B_eta_bound}, we get
  \begin{align*}
    \E[r(B_{\eta})]
                                  &\leq \lambda\cdot r(N) + (1-\alpha)\cdot r(N) + \eta \alpha e^{-\frac{1}{2}\varepsilon^2 \tau q} \cdot r(N) \\
                                  &\leq \left(1 + \lambda - \alpha + \eta \alpha \bigg(\frac{\varepsilon}{\ln \rho}\bigg)^3 \right)\cdot r(N)\\
                                  &\leq \left(1 + \lambda - (1-2\varepsilon)\tau + \eta \tau \left(\frac{\varepsilon}{\ln \rho}\right)^3 \right)\cdot r(N)\\
                                  &\leq \left(1 + \lambda - (1-3\varepsilon)\tau \right)\cdot r(N),
  \end{align*}
  where the second inequality holds by the definition of $q$ (line~\ref{algline:chooseQ} of \algsinglelink[]), the third inequality follows by recalling that $\alpha = \tau(1 - 2\varepsilon) \leq \tau$,
  and the last inequality holds because $\eta = \lceil 1 + \log_{1+\varepsilon}\frac{\ln \rho}{\varepsilon^3} \rceil$, which implies
  \begin{equation*}
    \eta \leq 2 + \log_{1+\varepsilon}\frac{\ln \rho}{\varepsilon^3} = 2 + \frac{\ln\ln\rho - 3\ln \varepsilon}{\ln (1+\varepsilon)} 
    \leq 2 + \frac{2}{\varepsilon} \left(\ln\ln \rho + 3 \ln \frac{1}{\varepsilon}\right) \leq \frac{\ln \rho}{\varepsilon^2}
  \end{equation*}
  (the two last inequalities hold for $\rho\geq 3$ and $\varepsilon \leq \sfrac{1}{20}$).

	To complete the proof of the proposition, it only remains to recall that, by \Cref{obs:rank_max_at_eta} and \Cref{lem:AhSubseteqBh},
  \[
		\E[r(A)] \leq \E[r(A_{\eta})] \leq
    \E[r(B_{\eta})] \leq \left(1 + \lambda - (1-3\varepsilon)\tau \right)\cdot \rank(M). \tag*{\qedhere}
  \]
\end{proof}

\subsection{Proof of \Cref{lem:rankReductionInOverlap}} \label{ssc:rankReductionInOverlap}

In this section, we prove \Cref{lem:rankReductionInOverlap}, which we repeat here for convenience.

\lemRankReductionInOverlap*

Notice that the inclusion $B \subseteq T_\alpha(B)$ is an immediate consequence of the definition of $T_\alpha(B)$. Thus, we concentrate on proving the two bullet points of the lemma. Since $B$ itself is a possible choice for $\bar{T}$ in the definition of $T_\alpha(B)$, it must hold that
\[
  r(T_\alpha(B)\ |\ B) - \frac{1}{1-\alpha}\E[r(T_\alpha(B)\ |\ B\cup R(x))]
     \geq r(B\ |\ B) - \frac{1}{1-\alpha}\E[r(B\ |\ B\cup R(x))] = 0,
\]
and the first bullet point of \Cref{lem:rankReductionInOverlap} follows by rearranging this inequality.

Proving the second bullet point of \Cref{lem:rankReductionInOverlap} is more involved, and is our goal in the rest of this section. We start by proving the following inequality.
\begin{equation}\label{eq:overlapIsLessThanRankReduction}
  \E[r(Q\cap \spn(T_{\alpha}(B) \cup R(x)) \ |\ T_\alpha(B) )] \leq r(Q\ |\ T_\alpha(B)) - \E[r(Q \ |\ T_\alpha(B) \cup R(x))]
\end{equation}
By expanding the marginal contribution notation, we get that \Cref{eq:overlapIsLessThanRankReduction} is equivalent to
\begin{multline*}
  \E[r((Q \cup T_\alpha(B))\cap \spn(T_\alpha(B) \cup R(x)))] \\\leq r(Q \cup T_\alpha(B)) - \E[r(Q\cup T_\alpha(B) \cup R(x))] + \E[r(T_\alpha(B) \cup R(x))],
\end{multline*}
which holds because
\begin{align*}
  r(Q\cup {}&T_\alpha(B)) + \E[r(T_\alpha(B) \cup R(x))]\\
      &= r(Q\cup T_\alpha(B)) + \E[r(\spn(T_\alpha(B) \cup R(x)))]\\
      &\geq \E[r((Q\cup T_\alpha(B))\cup \spn(T_\alpha(B) \cup R(x)))] + \E[r((Q\cup T_\alpha(B))\cap \spn(T_\alpha(B)\cup R(x)))]\\
      &= \E[r(Q\cup T_\alpha(B) \cup R(x))] + \E[r((Q\cup T_\alpha(B))\cap \spn(T_\alpha(B)\cup R(x)))],
\end{align*}
where the inequality is due to submodularity of the rank function $r$, and the two equalities follow from the facts that $r(U) = r(U')$ for any $U \subseteq U' \subseteq \spn(U)$.
In particular, for any two sets $U,W \subseteq N$, we thus have $r(U\cup \spn(W)) = r(U\cup W)$, which we use in the last equality.

Having proved \Cref{eq:overlapIsLessThanRankReduction}, we now proceed by bounding the right-hand side of this inequality.
To this end, observe that, because $T_\alpha(B) \cup Q$ is one possible choice for the set $\bar{T}$ in the definition of $T_\alpha(B)$, it must hold that
\begin{multline*}
r(T_\alpha(B)\ |\ B) - \frac{1}{1-\alpha}\E[r(T_\alpha(B)\ |\ B\cup R(x))]\\ \geq r(T_\alpha(B)\cup Q \ |\ B) - \frac{1}{1-\alpha}\E[r(T_\alpha(B)\cup Q\ |\ B\cup R(x))].
\end{multline*}
By expanding the marginal contribution notation, and recalling that $B\subseteq T_\alpha(B)$, the last inequality is equivalent to
\begin{equation*}
  r(T_\alpha(B)) - \frac{1}{1-\alpha} \E[r(T_\alpha(B)\cup R(x))] \geq
  r(T_\alpha(B)\cup Q) - \frac{1}{1-\alpha} \E[r(T_\alpha(B)\cup Q\cup R(x))],
\end{equation*}
which can be rewritten as
%\begin{equation*}
  %r(Q \ |\ T_\alpha(B)) \leq \frac{1}{1-\alpha} \E[r(Q \ |\ T_\alpha(B)\cup R(x))].
%\end{equation*}
%Hence,
\begin{equation*}
  r(Q \ |\ T_\alpha(B)) - \E[r(Q \ |\ T_\alpha(B)\cup R(x))] \leq \alpha \cdot r(Q \ |\ T_\alpha(B)).
\end{equation*}
The second bullet point of \Cref{lem:rankReductionInOverlap} now follows by combining this inequality with \Cref{eq:overlapIsLessThanRankReduction}.

\subsection{Proof of \Cref{lem:BNotGrowingMuch}}\label{sec:BNotGrowingMuch}

We now present the proof of \Cref{lem:BNotGrowingMuch}, which we repeat here for convinience.

\lemBNotGrowingMuch*

Fix a particular value for $h \in [\eta]$, and recall that $S_p^h$ is the $p$-th sample set used to construct $A_h$ and $B_h$.
We also recall the definition
\[
	Z_{p,i} \coloneqq \{e\in N\setminus T_{\alpha}(B_{h-1}) \colon |\{j\in [p]\colon e\in \spn(S^h_j\cup T_\alpha(B_{h - 1}))\}|\geq i\}
\]
for every integer $p \in [q]$ and $i\in \mathbb{Z}_{\geq 0}$. In other words, $Z_{p, i}$ contains elements that are spanned by $\spn(S\cup T_{\alpha}(B_{h - 1}))$ for at least $i$ choices $S$ out of $S_1^h, S_2^h, \dotsc, S_p^h$. For consistency, let us define $Z_{p, 0} \coloneq N$. We also need the shorthand $\gamma_{p,i} \coloneqq \E[r(Z_{p,i}\ |\ T_\alpha(B_{h - 1}))]$. Observe that since the definition of $B_h$ implies the inclusion $B_{h}\setminus T_\alpha(B_{h-1}) \subseteq Z_{q,\lceil (1-\varepsilon)\tau q \rceil}$, to prove \Cref{prop:progress}, it suffices to prove
\[
  \gamma_{q,\lceil (1-\varepsilon)\tau q \rceil} \leq e^{\frac{1}{2} \varepsilon^2 \tau q} \cdot r(N).
\]
We start with the following simple observation.

\begin{observation}\label{obs:relationBetweenZs}
For every $p \in [q]$ and integer $i$,
\begin{itemize}
	\item if $i > p$, then $Z_{p,i}=\emptyset$, and therefore, $\gamma_{p,i}=0$.
	\item if $p < q$ and $i \in [p]$, then
  \begin{equation*}
    Z_{p+1,i} = Z_{p,i} \cup (Z_{p,i-1} \cap \spn(S^h_{p+1}\cup T_\alpha(B_{h - 1}))).
  \end{equation*}
\end{itemize}
Additionally, the value of $\gamma_{p, 0}$ is independent of $p$, i.e., $\gamma_{p_1, 0} = \gamma_{p_2, 0}$ for every two values $p_1, p_2 \in [q]$.
\end{observation}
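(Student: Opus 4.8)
The plan is to dispatch the three claims of the observation separately; all of them are elementary set bookkeeping about the families $Z_{p,i}$, so I do not expect a real obstacle — the only point deserving a sentence of care is the degenerate index $i=1$ together with the convention $Z_{p,0}\coloneqq N$. Throughout I would use the shorthand $c_p(e)\coloneqq|\{j\in[p]\colon e\in\spn(S_j^h\cup T_\alpha(B_{h-1}))\}|$ for the number of the first $p$ sample sets whose span (together with $T_\alpha(B_{h-1})$) covers $e$, so that $Z_{p,i}=\{e\in N\setminus T_\alpha(B_{h-1})\colon c_p(e)\geq i\}$ for every $i\geq1$. For the first bullet I would note that $c_p(e)\leq p$ always, since the count ranges over the $p$-element index set $[p]$; hence when $i>p$ the requirement $c_p(e)\geq i$ can never be met, so $Z_{p,i}=\emptyset$, and consequently $\gamma_{p,i}=\E[r(\emptyset\ |\ T_\alpha(B_{h-1}))]=\E[r(T_\alpha(B_{h-1}))-r(T_\alpha(B_{h-1}))]=0$.

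For the recursive identity (with $p<q$ and $i\in[p]$), the one thing to observe is that passing from the first $p$ to the first $p+1$ samples changes the count by either $0$ or $1$, i.e.\ $c_{p+1}(e)=c_p(e)+\mathbbm{1}[e\in\spn(S_{p+1}^h\cup T_\alpha(B_{h-1}))]$. Therefore, for $e\in N\setminus T_\alpha(B_{h-1})$, the event $c_{p+1}(e)\geq i$ occurs exactly when either $c_p(e)\geq i$ already holds, or $c_p(e)=i-1$ and the new sample spans $e$; and since $c_p(e)\geq i$ implies $c_p(e)\geq i-1$, this is the same as ``$c_p(e)\geq i$, or both $c_p(e)\geq i-1$ and $e\in\spn(S_{p+1}^h\cup T_\alpha(B_{h-1}))$''. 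Rewriting in terms of the $Z$-sets and using the nesting $Z_{p,i}\subseteq Z_{p,i-1}$ to fold in the first alternative, this is precisely $Z_{p+1,i}=Z_{p,i}\cup(Z_{p,i-1}\cap\spn(S_{p+1}^h\cup T_\alpha(B_{h-1})))$ for $i\geq2$. For $i=1$ I would just remark that, under the convention $Z_{p,0}=N$, the stated right-hand side exceeds the true $Z_{p+1,1}$ only on elements of $T_\alpha(B_{h-1})\subseteq\spn(S_{p+1}^h\cup T_\alpha(B_{h-1}))$, which is harmless: the identity is invoked only inside the marginal-rank operator $r(\cdot\ |\ T_\alpha(B_{h-1}))$, whose value is unchanged by adding or removing elements of $T_\alpha(B_{h-1})$ (equivalently, the identity may be read modulo $T_\alpha(B_{h-1})$).

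Finally, for the independence of $\gamma_{p,0}$ from $p$: with the stated convention $Z_{p,0}$ is literally the constant set $N$, and $T_\alpha(B_{h-1})$ depends only on the samples drawn in iterations before $h$ and never on $p$, so $\gamma_{p,0}=\E[r(N\ |\ T_\alpha(B_{h-1}))]=r(N)-\E[r(T_\alpha(B_{h-1}))]$ is one fixed number for every $p\in[q]$. In short, no step here is hard — the observation is pure bookkeeping — and the substantive work of the section, namely bounding how $\gamma_{p,i}$ grows with $p$ via the second bullet of \Cref{lem:rankReductionInOverlap}, lies entirely in the arguments that build on this observation.
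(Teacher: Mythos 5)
Your proof is correct and follows essentially the same approach as the paper's: unpack the definition of $Z_{p,i}$ via the count $c_p(e)$, observe that $c_{p+1}(e)-c_p(e)\in\{0,1\}$, and check the three claims directly. You are in fact slightly more careful than the paper: you correctly note that for $i=1$ the stated identity fails as a literal set equality under the convention $Z_{p,0}=N$ (the right-hand side additionally contains all of $T_\alpha(B_{h-1})$, while $Z_{p+1,1}\subseteq N\setminus T_\alpha(B_{h-1})$), and that this is harmless because the identity is only ever consumed through the marginal rank $r(\cdot\ |\ T_\alpha(B_{h-1}))$, which is insensitive to elements of $T_\alpha(B_{h-1})$ — a subtlety the paper's own proof glosses over by only examining elements of $N\setminus T_\alpha(B_{h-1})$.
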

\begin{proof}
The first part of the observation follows immediately from the definition of $Z_{p,i}$, and the last part of the observation holds since
\[
	\gamma_{p_1, 0} = \E[r(Z_{p_1, 0}) \mid T_\alpha(B_{h - 1}))] = \E[r(N\setminus T_{\alpha}(B_{h-1}) \mid T_\alpha(B_{h - 1}))] = \E[r(Z_{p_2, 0}) \mid T_\alpha(B_{h - 1}))] = \gamma_{p_2, 0}
	.
\]

It remains to prove the second part of the observation. The set $Z_{p,i} \cup (Z_{p,i-1} \cap \spn(S^h_{p+1}\cup T_\alpha(B_{h - 1})))$ contains an element $e \in N\setminus T_{\alpha}(B_{h-1})$ if it obeys at least one of two conditions.
\begin{itemize}
	\item The first condition is that $e \in Z_{p,i}$, i.e., it is contained in $\spn(S \cup T_\alpha(B_{h - 1}))$ for at least $i$ many choices of set $S$ among $S^h_1, S^h_2, \dotsc, S^h_{p}$.
	\item The second condition is that $e \in Z_{p,i-1} \cap \spn(S^h_{p+1}\cup T_\alpha(B_{h - 1}))$, i.e., $e$ is contained in $\spn(S^h_{p+1}\cup T_\alpha(B_{h - 1}))$ and in $\spn(S \cup T_\alpha(B_{h - 1}))$ for at least $i - 1$ many choices of set $S$ among $S^h_1, S^h_2, \dotsc, S^h_{p}$.
\end{itemize}
One can verify that the elements that obey at least one of these conditions are exactly the elements that are contained in $\spn(S \cup T_\alpha(B_{h - 1}))$ for at least $i$ many choices of set $S$ among $S^h_1, S^h_2, \dotsc, S^h_{p + 1}$, and thus, they are exactly the elements of $Z_{p + 1,i}$.
\end{proof}

To bound the values $\gamma_{p,i}$, we use the following lemma, which is a consequence of the second bullet point of \Cref{lem:rankReductionInOverlap}, and shows that, when going from $p$ to $p+1$ samples, the total increase in the gammas is small for any postfix $i, i+1, \dotsc, p+1$ of the indices.
\begin{lemma}\label{lem:gammaRelation}
For every $p \in [q-1]$ and $i \in [p + 1]$,
  $
    \sum_{j=i}^{p+1} (\gamma_{p+1,j} - \gamma_{p,j}) \leq \alpha \gamma_{p,i-1}
  $.
\end{lemma}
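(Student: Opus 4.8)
The plan is to prove the inequality pointwise in the fresh sample $S^h_{p+1}$ and then integrate. Fix $h\in[\eta]$ and condition on all the randomness except $S^h_{p+1}$; concretely, treat $B_{h-1}$ (hence the deterministic set $T\coloneqq T_\alpha(B_{h-1})$) and the samples $S^h_1,\dots,S^h_p$ (hence all the sets $Z_{p,j}$, $j\ge 0$) as fixed, and regard $W\coloneqq\spn(S^h_{p+1}\cup T)$ as the only remaining random object, which is distributed like $\spn(T\cup R(x))$ for an independent copy of $R(x)$. Under this conditioning the aim is to prove the deterministic-looking bound
$$
\sum_{j=i}^{p+1}\bigl(r(Z_{p+1,j}\mid T)-r(Z_{p,j}\mid T)\bigr)\;\leq\; r\bigl((Z_{p,i-1}\cap W)\mid T\bigr).
$$
Given this, taking expectation over $S^h_{p+1}$ and applying the second bullet of \Cref{lem:rankReductionInOverlap} with $Q=Z_{p,i-1}\subseteq N\setminus T$ bounds the right-hand side by $\alpha\cdot r(Z_{p,i-1}\mid T)$, and then the outer expectation over $B_{h-1}$ and $S^h_1,\dots,S^h_p$ gives exactly $\sum_{j=i}^{p+1}(\gamma_{p+1,j}-\gamma_{p,j})\leq\alpha\gamma_{p,i-1}$.

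To prove the conditional bound, the first ingredient is the set identity $Z_{p+1,j}=Z_{p,j}\cup(Z_{p,j-1}\cap W)$ for all $j\in[p+1]$: for $j\le p$ this is the second bullet of \Cref{obs:relationBetweenZs}, and for $j=p+1$ it reads $Z_{p+1,p+1}=Z_{p,p}\cap W$ (using $Z_{p,p+1}=\emptyset$), which one checks directly by counting how many of $S^h_1,\dots,S^h_{p+1}$ span a given $e\notin T$ together with $T$. Since $Z_{p,j}\cup T$ already contains $Z_{p,j}$, this identity yields the termwise equality $r(Z_{p+1,j}\mid T)-r(Z_{p,j}\mid T)=r\bigl((Z_{p,j-1}\cap W)\mid Z_{p,j}\cup T\bigr)$, and then submodularity of $r$, applied to the inclusion $(Z_{p,j}\cap W)\cup T\subseteq Z_{p,j}\cup T$ of conditioning sets, gives $r\bigl((Z_{p,j-1}\cap W)\mid Z_{p,j}\cup T\bigr)\leq r\bigl((Z_{p,j-1}\cap W)\mid (Z_{p,j}\cap W)\cup T\bigr)$.

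The second ingredient is a telescoping of the resulting bound. Writing $V_j\coloneqq Z_{p,j}\cap W$, the sets $V_{p+1}\subseteq V_p\subseteq\dots\subseteq V_{i-1}$ are nested with $V_{p+1}=\emptyset$, so $r\bigl((Z_{p,j-1}\cap W)\mid(Z_{p,j}\cap W)\cup T\bigr)=r(V_{j-1}\cup T)-r(V_j\cup T)$, and summing over $j$ from $i$ to $p+1$ collapses the whole sum to $r(V_{i-1}\cup T)-r(V_{p+1}\cup T)=r((Z_{p,i-1}\cap W)\mid T)$, which is precisely the conditional bound displayed above. Taking the two expectations as described then completes the argument.

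There are two genuinely delicate points. First, the bookkeeping around \Cref{lem:rankReductionInOverlap}: one must verify that, after the conditioning, the set $Q=Z_{p,i-1}$ is fixed and disjoint from $T_\alpha(B_{h-1})$ — for the boundary index $i=1$ one reads $Z_{p,0}$ as $N\setminus T_\alpha(B_{h-1})$, consistently with its marginal rank as in \Cref{obs:relationBetweenZs} — and that $S^h_{p+1}$ is independent of everything in the conditioning, so that $\spn(S^h_{p+1}\cup T)$ really does behave like $\spn(T_\alpha(B_{h-1})\cup R(x))$ and the lemma applies verbatim. Second, and this is the one substantive idea, one has to spot the nested chain $Z_{p,j}\subseteq Z_{p+1,j}\subseteq Z_{p,j-1}$ together with the submodular replacement of the conditioning set $Z_{p,j}\cup T$ by $(Z_{p,j}\cap W)\cup T$; without this replacement the sum $\sum_j r\bigl((Z_{p,j-1}\cap W)\mid Z_{p,j}\cup T\bigr)$ does not obviously telescope and could a priori be far larger than the single marginal $r((Z_{p,i-1}\cap W)\mid T)$ that \Cref{lem:rankReductionInOverlap} is able to control.
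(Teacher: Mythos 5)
Your proof is correct and takes essentially the same route as the paper's: rewrite each marginal difference as a marginal conditioned on $Z_{p,j}\cup T$, tighten the conditioning set to $(Z_{p,j}\cap W)\cup T$ via submodularity, telescope down to $r((Z_{p,i-1}\cap W)\mid T)$, and finish with the second bullet of \Cref{lem:rankReductionInOverlap}. The only differences are cosmetic: you make the conditioning on $B_{h-1}$ and $S^h_1,\dots,S^h_p$ explicit before integrating (which the paper keeps implicit inside its expectations), and you observe that $Z_{p+1,j}=Z_{p,j}\cup(Z_{p,j-1}\cap W)$ holds with equality (including the boundary case $j=p+1$), where the paper only states and uses the inclusion.
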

\begin{proof}
  By the definition of $\gamma_{p, i}$,
	  \begin{align} \label{eq:gamma_sum}
    \sum_{j=i}^{p+1} (&\gamma_{p+1,j} - \gamma_{p,j})
       = \sum_{j=i}^{p+1} \E[r(Z_{p+1,j}\ |\ T_\alpha(B_{h - 1})) - r(Z_{p,j}\ |\ T_\alpha(B_{h - 1}))]\\\nonumber
       %&= \sum_{j=i}^{p+1} \E[r(Z_{p+1,j}\ |\ Z_{p,j} \cup T_\alpha(B_{h - 1}))]\\
       &= \sum_{j=i}^{p+1}\E[r(Z_{p+1,j} \setminus Z_{p,j}\ |\ Z_{p,j} \cup T_\alpha(B_{h - 1}))]\\\nonumber
       &\leq \sum_{j=i}^{p+1} \E[r(Z_{p,j-1} \cap \spn(S^h_{p+1}\cup T_\alpha(B_{h - 1}))\ |\ Z_{p,j} \cup T_\alpha(B_{h - 1}))]\\\nonumber
       &\leq \sum_{j=i}^{p+1} \E[r(Z_{p,j-1} \cap \spn(S^h_{p+1}\cup T_\alpha(B_{h - 1}))\ |\ (Z_{p,j} \cap \spn(S^h_{p+1}\cup T_\alpha(B_{h - 1}))) \cup T_\alpha(B_{h - 1}))],
  \end{align}
	where the second equality uses the inclusion $Z_{p,j} \subseteq Z_{p+1,j}$,
  the first inequality hold because \Cref{obs:relationBetweenZs} implies that $Z_{p+1,j} \setminus Z_{p,j} \subseteq Z_{p,j-1} \cap \spn(S^h_{p+1}\cup T_\alpha(B_{h - 1}))$,
  and the second inequality is due to submodularity of the rank function.
	
	Notice that since $Z_{p,j} \subseteq Z_{p,j-1}$, the summand on the rightmost side of \Cref{eq:gamma_sum} is equal to
	\begin{multline*}
		\E[r((Z_{p,j-1} \cap \spn(S^h_{p+1} \cup T_\alpha(B_{h - 1}))) \cup T_\alpha(B_{h - 1}))] \\- \E[r((Z_{p,j} \cap \spn(S^h_{p+1}\cup T_\alpha(B_{h - 1}))) \cup T_\alpha(B_{h - 1}))],
	\end{multline*}
	and thus, the sum on the rightmost side of \Cref{eq:gamma_sum} is telescopic. Collapsing this sum yields
  \begin{align*}
    \sum_{j=i}^{p+1} (\gamma_{p+1,j} - \gamma_{p,j})
       &\leq \E[(r(Z_{p,i-1} \cap \spn(S^h_{p+1}\cup T_\alpha(B_{h - 1}))) \cup T_\alpha(B_{h - 1}))] \\&\mspace{130mu}- \E[r((Z_{p,p+1} \cap \spn(S^h_{p+1}\cup T_\alpha(B_{h - 1}))) \cup T_\alpha(B_{h - 1}))]\\
       &= \E[r(Z_{p,i-1} \cap \spn(S^h_{p+1}\cup T_\alpha(B_{h - 1}))\ |\ T_\alpha(B_{h - 1}))] \\
       & \leq \alpha \cdot \E[r(Z_{p,i-1} \ |\ T_\alpha(B_{h - 1}))]
			 = \alpha \gamma_{p,i-1},
  \end{align*}
  where the second equality holds because \Cref{obs:relationBetweenZs} guarantees that $Z_{p,p+1}=\emptyset$,
  and the inequality is a consequence of the second bullet point of \Cref{lem:rankReductionInOverlap} applied with $Q=Z_{p,i-1}$.
\end{proof}

Recall that our goal is to upper bound $\gamma_{q,\lceil (1-\varepsilon)\tau q \rceil}$. Technically, we derive below such a bound using multiple applications of \Cref{lem:gammaRelation}. However, to make this derivation more structured and easier to understand, we define a potential function. This potential function is inspired by a proof for the well-known Chernoff bound. More precisely, given a parameter $\kappa \in \mathbb{R}_{>0}$ (to be set later), we define the potential function
\begin{equation*}
  \Phi_p(\kappa) \coloneqq \sum_{i=0}^{p} (\gamma_{p,i}-\gamma_{p,i+1})e^{\kappa\cdot i} \qquad \forall p\in \{0,\dots, q\}.
\end{equation*}

\begin{observation} \label{obs:boundThroughPotential}
It holds that $\gamma_{q,\lceil (1-\varepsilon)\tau q\rceil} \leq e^{-\kappa \cdot \lceil (1-\varepsilon)\tau q\rceil} \cdot \Phi_q(\kappa)$.
\end{observation}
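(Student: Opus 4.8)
The plan is to reproduce the elementary ``Markov step'' of the exponential-moment (Chernoff) method, now phrased in terms of the $\gamma_{p,i}$'s rather than a probability generating function. Write $m \coloneqq \lceil (1-\varepsilon)\tau q \rceil$, so the goal is $\gamma_{q,m} \leq e^{-\kappa m}\Phi_q(\kappa)$.

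First I would record two trivial facts about the sequence $i \mapsto \gamma_{q,i}$. Since $Z_{q,i} \supseteq Z_{q,i+1}$ by the definition of the $Z$-sets, monotonicity of the rank function gives that $\gamma_{q,i} = \E[r(Z_{q,i}\mid T_\alpha(B_{h-1}))]$ is non-increasing in $i$; hence every summand $(\gamma_{q,i}-\gamma_{q,i+1})e^{\kappa i}$ of $\Phi_q(\kappa)$ is nonnegative. Second, the first bullet of \Cref{obs:relationBetweenZs} (with $p=q$) gives $Z_{q,q+1}=\emptyset$, so $\gamma_{q,q+1}=0$.

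With these in hand the proof is a two-line telescoping estimate. Because $\gamma_{q,q+1}=0$, the tail sum collapses: $\gamma_{q,m} = \sum_{i=m}^{q} (\gamma_{q,i}-\gamma_{q,i+1})$. For $\kappa > 0$ and every index $i \geq m$ we have $e^{\kappa(i-m)} \geq 1$, and since each summand is nonnegative, multiplying it by this factor can only enlarge it; therefore
\[
  \gamma_{q,m} \;\leq\; \sum_{i=m}^{q} (\gamma_{q,i}-\gamma_{q,i+1})\, e^{\kappa(i-m)} \;=\; e^{-\kappa m}\sum_{i=m}^{q} (\gamma_{q,i}-\gamma_{q,i+1})\, e^{\kappa i}.
\]
Finally, adding back the remaining summands for $i = 0,\dots,m-1$ — which are again nonnegative — only increases the right-hand side and turns the partial sum into $\Phi_q(\kappa)$, yielding $\gamma_{q,m} \leq e^{-\kappa m}\Phi_q(\kappa)$, as claimed.

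There is essentially no obstacle here: the statement is the analogue of $\Pr[X \geq m] = \Pr[e^{\kappa X} \geq e^{\kappa m}] \leq e^{-\kappa m}\E[e^{\kappa X}]$ in the discrete/rank setting, and the only point that needs a moment's care is the boundary term $\gamma_{q,q+1}$, which vanishes by \Cref{obs:relationBetweenZs}. (One may also note $m \leq q$ since $(1-\varepsilon)\tau < 1$, so the telescoping range is nonempty, though this is not strictly needed.)
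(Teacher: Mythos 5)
Your proof is correct and takes essentially the same route as the paper: both arguments telescope $\gamma_{q,m}-\gamma_{q,q+1}$, use $\gamma_{q,q+1}=0$ from \Cref{obs:relationBetweenZs}, and exploit that the differences $\gamma_{q,i}-\gamma_{q,i+1}$ are nonnegative to drop or add terms and compare $e^{\kappa i}$ with $e^{\kappa m}$. The only cosmetic difference is direction (you start from $\gamma_{q,m}$ and bound upward, whereas the paper starts from $\Phi_q(\kappa)$ and bounds downward), and you are slightly more explicit about the monotonicity of $i\mapsto\gamma_{q,i}$, which the paper uses implicitly.
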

\begin{proof}
Note that
\begin{align*}
\Phi_q(\kappa) &= \sum_{i=0}^{q} (\gamma_{q,i}-\gamma_{q,i+1})e^{\kappa\cdot i}\\
                &\geq \sum_{i=\lceil (1-\varepsilon)\tau q\rceil}^{q} (\gamma_{q,i}-\gamma_{q,i+1})e^{\kappa\cdot i}\\
                &\geq e^{\kappa \cdot \lceil (1-\varepsilon)\tau q\rceil} \cdot \sum_{i=\lceil (1-\varepsilon)\tau q\rceil}^{q} (\gamma_{q,i}-\gamma_{q,i+1})\\
                &= e^{\kappa \cdot \lceil (1-\varepsilon)\tau q\rceil} \cdot (\gamma_{q,\lceil (1-\varepsilon)\tau q\rceil} - \gamma_{q,q+1})\\
                &= e^{\kappa \cdot \lceil (1-\varepsilon)\tau q\rceil} \cdot \gamma_{q,\lceil (1-\varepsilon)\tau q\rceil},
\end{align*}
where the last equality holds by \Cref{obs:relationBetweenZs}.
\end{proof}

In light of Observation~\ref{obs:boundThroughPotential}, to upper bound $\gamma_{q,\lceil (1-\varepsilon)\tau q\rceil}$, it suffices to upper bound the potential $\Phi_q(\kappa)$.
We do this using the following lemma, which upper bounds the increase of $\Phi_p(\kappa)$ as a function of $p$.
%\todoos{This is confusing. $\Phi_p(\kappa)$ is increasing as a function of $p$, and the following lemma bounds that increase}
The proof of \Cref{lem:potential_decrease} crucially leverages \Cref{lem:gammaRelation}.
\begin{lemma} \label{lem:potential_decrease}
For every $p \in [q]$ and $\kappa \in \mathbb{R}_{>0}$, we have $\Phi_p(\kappa) \leq (1 - \alpha + \alpha e^\kappa) \Phi_{p-1}(\kappa)$.
\end{lemma}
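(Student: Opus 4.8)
The plan is to view $\Phi_p(\kappa)=\sum_{i=0}^{p}(\gamma_{p,i}-\gamma_{p,i+1})e^{\kappa i}$ as a moment generating function of the ``pmf'' $\gamma_{p,i}-\gamma_{p,i+1}$, and to prove the claimed one-step multiplicative growth bound by combining two summations by parts with the postfix increment bound of \Cref{lem:gammaRelation}. First I would rewrite $\Phi_p$ in terms of the ``cdf'' values $\gamma_{p,i}$ themselves: an Abel summation together with $\gamma_{p,p+1}=0$ (\Cref{obs:relationBetweenZs}) gives $\Phi_p(\kappa)=\gamma_{p,0}+(1-e^{-\kappa})\sum_{i=1}^{p}\gamma_{p,i}e^{\kappa i}$, and likewise $\Phi_{p-1}(\kappa)=\gamma_{p-1,0}+(1-e^{-\kappa})\sum_{i=1}^{p}\gamma_{p-1,i}e^{\kappa i}$, where the extra index-$p$ term is harmless since $\gamma_{p-1,p}=0$ by the same observation. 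Because $\gamma_{p,0}=\gamma_{p-1,0}$ (again by \Cref{obs:relationBetweenZs}), subtracting the two identities leaves $\Phi_p(\kappa)-\Phi_{p-1}(\kappa)=(1-e^{-\kappa})\sum_{i=1}^{p}\delta_i e^{\kappa i}$ with $\delta_i\coloneqq\gamma_{p,i}-\gamma_{p-1,i}\ge 0$ (the inequality holds because $Z_{p-1,i}\subseteq Z_{p,i}$ and $r$ is monotone), so it suffices to bound $\sum_{i=1}^{p}\delta_i e^{\kappa i}$.

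This is exactly where \Cref{lem:gammaRelation} enters. Applying it with its ``$p$'' taken to be $p-1$ gives that the tail sums $T_{i'}\coloneqq\sum_{j=i'}^{p}\delta_j$ satisfy $0\le T_{i'}\le\alpha\,\gamma_{p-1,i'-1}$ for every $i'\in\{1,\dots,p\}$. Writing $\delta_i=T_i-T_{i+1}$ with $T_{p+1}=0$ and performing a second Abel summation, $\sum_{i=1}^{p}\delta_i e^{\kappa i}=T_1 e^{\kappa}+(e^{\kappa}-1)\sum_{i=2}^{p}T_i e^{\kappa(i-1)}$; bounding each $T_i$ by $\alpha\gamma_{p-1,i-1}$ and reindexing $m=i-1$ yields $\sum_{i=1}^{p}\delta_i e^{\kappa i}\le\alpha\gamma_{p-1,0}e^{\kappa}+\alpha(e^{\kappa}-1)\sum_{m=1}^{p-1}\gamma_{p-1,m}e^{\kappa m}$. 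Multiplying through by $(1-e^{-\kappa})$ and using $(1-e^{-\kappa})e^{\kappa}=e^{\kappa}-1$, the two resulting terms reassemble into $\alpha(e^{\kappa}-1)\gamma_{p-1,0}+\alpha(e^{\kappa}-1)\bigl(\Phi_{p-1}(\kappa)-\gamma_{p-1,0}\bigr)=\alpha(e^{\kappa}-1)\Phi_{p-1}(\kappa)$. Hence $\Phi_p(\kappa)\le\Phi_{p-1}(\kappa)+\alpha(e^{\kappa}-1)\Phi_{p-1}(\kappa)=(1-\alpha+\alpha e^{\kappa})\Phi_{p-1}(\kappa)$, which is the desired bound.

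The computations above are routine; I expect the only delicate points to be two pieces of bookkeeping. The first is keeping the summation ranges of the two Abel summations aligned: the clean recombination at the end relies on the vanishing boundary terms $\gamma_{p,p+1}=\gamma_{p-1,p}=T_{p+1}=0$ and on $\gamma_{p,0}$ being $p$-independent, all supplied by \Cref{obs:relationBetweenZs}, and one must be careful not to drop or double-count any index. The second is that \Cref{lem:gammaRelation} is stated only for $p\in[q-1]$, so the substitution ``$p-1$'' is legal only for $p\ge 2$; the base case $p=1$, which reduces to $\gamma_{1,1}\le\alpha\gamma_{0,0}$ (using $\gamma_{0,1}=0$ and $\Phi_0(\kappa)=\gamma_{0,0}$), I would verify directly from the second bullet of \Cref{lem:rankReductionInOverlap} applied with $B=B_{h-1}$ and $Q=N\setminus T_\alpha(B_{h-1})$, together with the fact that the sample $S^h_1$ is distributed as $R(x)$. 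Conceptually, the point worth flagging is that the role of \Cref{lem:gammaRelation} is precisely to control the increments of the ``cdf'' $\gamma_{p,\cdot}$ on every postfix of indices, and this postfix form is exactly what lets an exponential weight be pushed through the recursion — mirroring the textbook proof of the Chernoff bound on which the potential $\Phi_p(\kappa)$ is modeled.
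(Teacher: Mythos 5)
Your proof is correct and follows essentially the same route as the paper's: both rewrite $\Phi_p(\kappa)$ by an Abel summation (the paper's identity \eqref{eq:potentialAsSlices} is exactly your $\Phi_p(\kappa)=\gamma_{p,0}+(1-e^{-\kappa})\sum_{i\geq1}\gamma_{p,i}e^{\kappa i}$) and then invoke \Cref{lem:gammaRelation} on the tail sums of the increments $\gamma_{p,i}-\gamma_{p-1,i}$. The organizational difference is that you subtract $\Phi_{p-1}(\kappa)$ first and then Abel-sum the difference a second time, whereas the paper adds-and-subtracts $\gamma_{p-1,i}$ inside its $\nu_j$-sliced sum and collapses at the end; the algebra is the same, but your arrangement has the pedagogical advantage of making it manifest that every coefficient multiplying a tail sum $T_i$ is strictly positive (namely $(1-e^{-\kappa})e^{\kappa}$ for $i=1$ and $(1-e^{-\kappa})(e^\kappa-1)e^{\kappa(i-1)}$ for $i\geq2$), so that replacing $T_i$ by its upper bound $\alpha\gamma_{p-1,i-1}$ termwise is clearly legitimate — in the paper's write-up this is slightly obscured because the coefficient $\nu_1-\nu_{0}=e^\kappa-2$ appears negative for small $\kappa$, and only after accounting for the extra $+\sum_{i\geq1}(\gamma_{p,i}-\gamma_{p-1,i})$ term does the effective weight on $T_1$ become $\nu_1>0$. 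Your flag about the base case $p=1$ is also well taken: \Cref{lem:gammaRelation} is stated for $p\in[q-1]$, so invoking it with ``$p-1$'' is only literal for $p\geq2$, and both your proof and the paper's tacitly need the $p=0$ instance, i.e., $\gamma_{1,1}\leq\alpha\gamma_{0,0}$, which as you observe follows directly from the second bullet of \Cref{lem:rankReductionInOverlap} with $Q=N\setminus T_\alpha(B_{h-1})$ (equivalently, \Cref{lem:gammaRelation}'s proof goes through verbatim at $p=0$, so its stated range is merely conservative).
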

\begin{proof}
Let us define, for every $i\in \mathbb{Z}_{\geq -1}$,
\begin{equation*}
  \nu_i \coloneqq \begin{cases}
    0 & \text{if } i=-1,\\
    1 & \text{if } i=0,\\
    e^{\kappa \cdot i} - e^{\kappa\cdot(i-1)} & \text{if } i\geq 1.
  \end{cases}
\end{equation*}
Using this notation, we can rewrite $\Phi_p(\kappa)$ as
\begin{multline}\label{eq:potentialAsSlices}
  \Phi_p(\kappa) = \sum_{i=0}^{p} (\gamma_{p,i}-\gamma_{p,i+1})e^{\kappa\cdot i}
                  = \sum_{i=0}^p \gamma_{p,i} \cdot \nu_i\\
                  = \sum_{i=0}^{p} \bigg( \gamma_{p,i} \cdot \sum_{j=0}^i (\nu_j - \nu_{j-1})\bigg)
                  = \sum_{j=0}^p \bigg((\nu_j - \nu_{j-1}) \cdot \sum_{i=j}^p \gamma_{p,i}\bigg). 
\end{multline} 
Notice that we have used in the second equality the fact that $\gamma_{p, p+1} = 0$ by \Cref{obs:relationBetweenZs}.

We can now relate the potential $\Phi_p(\kappa)$ to $\Phi_{p-1}(\kappa)$ as follows (we elaborate on the steps below).
\begin{align*}
  \Phi_p(\kappa) &= \sum_{j=0}^p \bigg((\nu_j - \nu_{j-1}) \cdot \sum_{i=j}^p (\gamma_{p,i} - \gamma_{p-1,i} + \gamma_{p-1,i})\bigg)\\
   &= \sum_{j=0}^{p} \bigg((\nu_j - \nu_{j-1}) \cdot \sum_{i=j}^p \gamma_{p-1,i}\bigg) + \sum_{j=0}^{p} \bigg( (\nu_j - \nu_{j-1}) \cdot \sum_{i=j}^p (\gamma_{p,i} - \gamma_{p-1,i})\bigg)\\
   &= \sum_{j=0}^{p-1} \bigg((\nu_j - \nu_{j-1}) \cdot \sum_{i=j}^{p-1} \gamma_{p-1,i}\bigg)
       + \sum_{j=1}^{p} \bigg((\nu_j - \nu_{j-1}) \cdot \sum_{i=j}^p (\gamma_{p,i} - \gamma_{p-1,i})\bigg)
       + \sum_{i=1}^p(\gamma_{p,i} - \gamma_{p-1,i})\\
   &\leq \Phi_{p-1}(\kappa)
       + \sum_{j=1}^{p} (\nu_j - \nu_{j-1}) \cdot \alpha\gamma_{p-1,j-1}
       + \alpha \gamma_{p-1,0}\\
   &= \Phi_{p-1}(\kappa) - \alpha \cdot \sum_{j=0}^{p-1} \nu_j \gamma_{p-1,j}
   + \alpha \bigg(\gamma_{p-1,0} + \sum_{j=0}^{p-1} \nu_{j+1} \gamma_{p-1,j}\bigg)\\
   &= \Phi_{p-1}(\kappa) - \alpha \cdot\sum_{j=0}^{p-1} \nu_j \gamma_{p-1,j}
   + \alpha e^\kappa \sum_{j=0}^{p-1} \nu_{j} \gamma_{p-1,j}\\
   &= \left(1 - \alpha + \alpha e^\kappa \right) \Phi_{p-1}(\kappa).
\end{align*}
The first equality above follows from \Cref{eq:potentialAsSlices}.
The third equality uses two properties given by \Cref{obs:relationBetweenZs}. First, the equality $\gamma_{p-1,p}=0$ is used to decrease the upper limit of the first two summations from $p$ to $p-1$. Second, the third term after the third equality comes from splitting off the $j=0$ term of the second summation, and then using $\nu_0=1$, $\nu_{-1}=0$, and $\gamma_{p,0}=\gamma_{p-1,0}$ to increase the lower limit of the summation in the third term from $0$ to $1$.
%\todoos{Where does the third term come from?}
The inequality uses \Cref{lem:gammaRelation} and again \Cref{eq:potentialAsSlices}.
%The next, forth equality, is a reformulation of the previous one, where we shifted the index of the second summation by one.
The penultimate equality is based on the equalities $\nu_{j+1}= e^{\kappa} \cdot \nu_j$ for $j\in \mathbb{Z}_{\geq 1}$, and $\nu_1 + 1 = e^\kappa \cdot\nu_0$, which follow from the definition of the $\nu_j$'s.
The final equality follows from the second equality in \Cref{eq:potentialAsSlices}.
\end{proof}

We are now ready to prove \Cref{lem:BNotGrowingMuch}

\begin{proof}[Proof of \Cref{lem:BNotGrowingMuch}]
By applying \Cref{lem:potential_decrease} for every $p \in [q]$, we get
\[
  \Phi_q(\kappa)  \leq \big(1 - \alpha + \alpha e^\kappa \big)^q \cdot \Phi_0(\kappa)
  \leq \big(1 - \alpha + \alpha e^\kappa \big)^q \cdot r(N),
\]
where we used $\Phi_0(\kappa) = \gamma_{0,0} = \E[r(N \setminus T_{\alpha}(B_{h-1}) \ |\ T_{\alpha}(B_{h-1}))] = \E[r(N \ |\ T_{\alpha}(B_{h-1}))] \leq r(N)$ in the last inequality. Combining this inequality with \Cref{obs:boundThroughPotential}, we get the following upper bound on $\E[r(B_h\ |\ T_\alpha(B_{h-1}))]$, which holds for any $\kappa > 0$.
\begin{align*}
  \E[r(B_h\ |\ T_\alpha(B_{h-1}))] &\leq \gamma_{q,\lceil (1-\varepsilon)\tau q\rceil} \leq e^{-\kappa \cdot \lceil (1-\varepsilon)\tau q\rceil} \cdot \Phi_q(\kappa)\\
										 &\leq e^{-\kappa \cdot \lceil (1-\varepsilon)\tau q\rceil} \cdot \big(1 - \alpha + \alpha e^\kappa \big)^q  \cdot r(N)\\
                     &\leq e^{-\kappa \cdot  (1-\varepsilon)\tau q} \cdot e^{(e^\kappa -1) \cdot \alpha q} \cdot r(N)\\
                     &= \big(e^{(1 - 2\varepsilon)(e^\kappa -1) - (1-\varepsilon)\kappa}\big)^{\tau q} \cdot r(N).
\end{align*}
%\todoos{There is an error in notation with $T_h$ and also saying that it holds for any $\lambda > 0$ requires some more explanation. As $\lambda$ plays a major role in the proof of the main proposition in this section to bound the expected rank of $R(x)$,  it doesn't matter in this proof to bound the "error" caused by the samples.}
The last equality uses the fact that $\alpha = (1 - 2\varepsilon)\tau$ by definition~\eqref{eq:definitionalpha}.

We are now free to choose an arbitrary positive value for $\kappa$, and it turns out that the rightmost side of the above inequality is minimized for $\kappa = \ln(\frac{1-\varepsilon}{1-2\varepsilon})$. Plugging this value into the inequality yields the desired bound:
\begin{align*}
  \E[r(B_h\ |\ T_{\alpha}(B_{h-1}))] &\leq \left(\left(\frac{1-2\varepsilon}{1-\varepsilon}\right)^{(1-\varepsilon)} e^{\varepsilon}\right)^{\tau q} \cdot r(N)\\
                     &= \left(\left(1-\frac{\varepsilon}{1-\varepsilon}\right)^{(1-\varepsilon)} e^{\varepsilon}\right)^{\tau q} \cdot r(N)\\
                     &\leq e^{-\frac{1}{2} \frac{\varepsilon^2}{1-\varepsilon} \tau q} \cdot r(N)\\
                     &\leq e^{-\frac{1}{2} \varepsilon^2 \tau q} \cdot r(N),
\end{align*}
where the second inequality uses the inequality $1-z \leq e^{-z - \frac{1}{2}z^2} \;\;\forall z\in \mathbb{R}_{\geq 0}$, which holds due to the following.
The function $f(z) \coloneqq e^{-z - \frac{1}{2}z^2} + z -1$ is non-decreasing on $\mathbb{R}_{\geq 0}$ and fulfills $f(0)=0$.
Hence, for any $z\in \mathbb{R}_{\geq 0}$, we have $f(z) \geq f(0) = 0$, as desired.
\end{proof}

\section{Bounding Bad Events in Link} \label{sec:inlink_loss}

In this section, we prove \Cref{prop:inLinkLose}, which we repeat here for convenience.

\propInLinkLose*

As was explained at the start of Section~\ref{sec:progress}, we may pretend that the upper bound of the loop on line~\ref{algline:loopInLink} of \algsinglelink is $\eta$ rather than $\overline{h}$.
This does not affect the output of \algsinglelink, but keeps the randomness of $\overline{h}$ separate from the randomness of the samples takes in line~\ref{algline:getSamples} of the procedure, and allows us to assume that all the sets $A_0, A_1, \dots, A_{\eta}$ are always defined.

Fix an arbitrary element $e \in N$. We say that a set $A_h$ is \emph{critical} if $e$ is $(A_h,\tau)$-bad; i.e., $A_h$ is critical if $e\not\in A_h$ and
  \begin{equation*}
    \Pr[e\in \spn(A_h \cup R(x))] > \tau.
  \end{equation*}
Recall now that our aim is to prove \Cref{prop:inLinkLose}, which upper bounds the probability that $e$ is $(A, \tau)$-bad. Since $A$ is sampled out of $A_0, A_1, \dotsc, A_{\eta}$, this means that we need to upper bound the probability that the sampled set is critical. To simplify this task, we partition the probability space into the following disjoint events.
  \begin{itemize}
    \item $\mathcal{E}_{\mathrm{mult}}$ is the event that the sequence $A_0, \dots, A_{\eta}$ contains at least two critical sets.
    \item For $h\in \{0,\dots, \eta\}$, $\mathcal{E}_h$ is the event that $A_h$ is the only critical set in the sequence.
    \item $\mathcal{E}_\emptyset$ is the event that there is no critical set in the sequence.
  \end{itemize}
	
The following observation demonstrates why the events $\mathcal{E}_h$ are useful.
\begin{observation}
For every integer $0 \leq h \leq \eta$,
\begin{itemize}
	\item $\Pr[\text{$e$ is $(A, \tau)$-bad} \mid \mathcal{E}_h] = \Pr[\overline{h} = h \mid \mathcal{E}_h] = \Pr[\overline{h} = h]$.
	\item $\Pr[\text{$e$ is $(A, \tau)$-good} \mid \mathcal{E}_h] = \Pr[\overline{h} < h \mid \mathcal{E}_h] = \Pr[\overline{h} < h]$.
\end{itemize}
\end{observation}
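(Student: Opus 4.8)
The two claimed equalities separate cleanly: the outer one, which drops the conditioning on $\mathcal{E}_h$, is an independence statement, while the inner one is a deterministic description, valid pointwise on $\mathcal{E}_h$, of when the output element $e$ is $(A,\tau)$-bad or $(A,\tau)$-good. For the outer equality I would argue that $\mathcal{E}_h$ is independent of the truncation time $\overline{h}$. Adopting the convention of \Cref{sec:progress} that the loop of \algsinglelink runs up to $\eta$, each $A_{h'}$ with $0 \le h' \le \eta$ is a deterministic function of the sample sets alone; and whether $A_{h'}$ is \emph{critical} --- i.e.\ $e \notin A_{h'}$ and $\Pr[e \in \spn(A_{h'} \cup R(x))] > \tau$ --- is a deterministic function of $A_{h'}$ only, since $M$ and $x$ are fixed and the probability in question is the \emph{true} one, not an empirical estimate. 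Hence $\mathcal{E}_h$, being determined by the pattern of critical sets among $A_0,\dots,A_\eta$, is measurable with respect to the samples, whereas $\overline{h}$ is drawn independently of them; this gives $\Pr[\overline{h}=h \mid \mathcal{E}_h] = \Pr[\overline{h}=h]$ and $\Pr[\overline{h}<h \mid \mathcal{E}_h] = \Pr[\overline{h}<h]$.

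For the inner equality, the key structural fact I would establish is that, on $\mathcal{E}_h$, the output $A = A_{\overline{h}}$ fails to contain $e$ exactly when $\overline{h} \le h$, while $p(h') \coloneqq \Pr[e \in \spn(A_{h'} \cup R(x))]$ exceeds $\tau$ exactly when $h' \ge h$. This comes from monotonicity: since $A_0 \subseteq \cdots \subseteq A_\eta$, the sets $\spn(A_{h'} \cup R(x))$ are nested, so $p$ is non-decreasing in $h'$; and $\{h' : e \notin A_{h'}\}$ is a prefix of $\{0,\dots,\eta\}$. Thus the set of critical indices is the intersection of a prefix with a suffix, i.e.\ a contiguous block, and conditioning on $\mathcal{E}_h$ forces this block to be exactly $\{h\}$, which pins the prefix down to $\{0,\dots,h\}$ and the suffix down to $\{h,\dots,\eta\}$. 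Unwinding the definitions with $A = A_{\overline{h}}$ and recalling that $\overline{h} \in \{1,\dots,\eta\}$: $e$ is $(A,\tau)$-bad iff $e \notin A_{\overline{h}}$ and $p(\overline{h}) > \tau$, i.e.\ iff $\overline{h} \le h$ and $\overline{h} \ge h$, i.e.\ iff $\overline{h} = h$; and $e$ is $(A,\tau)$-good iff $e \notin A_{\overline{h}}$ and $p(\overline{h}) \le \tau$, i.e.\ iff $\overline{h} < h$. The boundary case $h=0$ is consistent: there $e \in A_{h'}$ for every $h' \ge 1$, so $e \in A$ with probability $1$ on $\mathcal{E}_0$, and $e$ is then neither bad nor good, matching $\Pr[\overline{h}=0] = \Pr[\overline{h}<0] = 0$ (as $\overline{h} \ge 1$).

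I do not anticipate a real obstacle. The two points needing care are (i) that criticality is a function of $A_{h'}$ alone --- which is exactly what makes $\mathcal{E}_h$ independent of $\overline{h}$, and is where the use of the true (rather than empirical) probability matters --- and (ii) the ``contiguous block of critical indices'' claim, which rests on the true span probability being monotone along the nested chain $A_0 \subseteq \cdots \subseteq A_\eta$ together with the monotonicity of membership of $e$; once these are in place, the rest is bookkeeping, plus the quick check of the $h=0$ edge case and of the fact that $\overline{h}$ is never $0$.
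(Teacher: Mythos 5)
Your proof is correct and uses essentially the same ingredients as the paper's: independence of $\overline{h}$ from the samples for the outer equalities, and the monotonicity chain $A_0\subseteq\cdots\subseteq A_\eta$ (which makes non-membership a prefix and the true spanning probability non-decreasing) for the inner ones. The paper phrases the inner step as a direct two-case check ($\overline{h}<h$ vs.\ $\overline{h}\ge h$) rather than first isolating the prefix/suffix structure of the critical indices, but this is a presentational difference only.
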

\begin{proof}
The second equality in each of the above bullets holds since $\overline{h}$ is chosen independently of the values of the sets $A_0, A_1, \dotsc, A_{\eta}$, and therefore, also independently of $\mathcal{E}_h$. The equality $\Pr[\text{$e$ is $(A, \tau)$-bad} \mid \mathcal{E}_h] = \Pr[\overline{h} = h \mid \mathcal{E}_h]$ holds because the event $\mathcal{E}_h$ implies that $A_h$ is the only critical set among $A_0, A_1, \dotsc, A_{\eta}$. Thus, to prove the observation, it remains to show that the equality $\Pr[\text{$e$ is $(A, \tau)$-good} \mid \mathcal{E}_h] = \Pr[\overline{h} < h \mid \mathcal{E}_h]$ holds as well.

Since $A_h$ is critical under $\mathcal{E}_h$, we know that $e \not \in A_h$ and $\Pr[e \in A_h \cup R(x)] > \tau$. Notice now that it always holds that $A_0 \subseteq A_1 \subseteq \dotso \subseteq A_{\eta}$, and consider two cases.
\begin{itemize}
	\item If $\overline{h} < h$, then $e \not \in A_h \supseteq A_{\overline{h}}$. Since $A_{\overline{h}}$ is not a critical set under $\mathcal{E}_h$, this implies that $e$ must be $(A, \tau)$-good.
	\item If $\overline{h} \geq h$, then $\Pr[e \in A_{\overline{h}} \cup R(x)] \geq \Pr[e \in A_h \cup R(x)] > \tau$, and therefore, $e$ is not $(A, \tau)$-good.
\end{itemize}
Thus, we get the desired equality $\Pr[\text{$e$ is $(A, \tau)$-good} \mid \mathcal{E}_h] = \Pr[\overline{h} < h \mid \mathcal{E}_h]$.
\end{proof}
	
Using the last observation, we get
\begin{equation}\label{eq:critical}
\begin{aligned} 
	\Pr[\text{$e$ is $(A, \tau)$-bad}]
	   &\leq
	\Pr[\mathcal{E}_\emptyset] \cdot 0 + \sum_{h = 0}^{\eta} \Pr[\mathcal{E}_h] \cdot \Pr[\overline{h} = h \mid \mathcal{E}_h] + \Pr[\mathcal{E}_\mathrm{mult}] \cdot 1 \\
	   &= \sum_{h = 0}^{\eta} \Pr[\mathcal{E}_h] \cdot \Pr[\overline{h} = h] + \Pr[\mathcal{E}_\mathrm{mult}]\\
     &\leq \sum_{h=0}^{\eta} \Pr[\mathcal{E}_h] \cdot
        \bigg(\varepsilon \cdot \Pr[\overline{h} < h]  + \frac{\varepsilon^3}{\ln \rho}\bigg) + \Pr[\mathcal{E}_\mathrm{mult}]\\
	   &\leq
	\varepsilon \cdot \sum_{h=0}^{\eta} \Pr[\mathcal{E}_h] \cdot \Pr[\overline{h} < h \mid \mathcal{E}_h] + \frac{\varepsilon^3}{\ln \rho} + \Pr[\mathcal{E}_\mathrm{mult}]\\
     &\leq
	\varepsilon \cdot \Pr[\text{$e$ is $(A,\tau)$-good}] + \frac{\varepsilon^3}{\ln \rho} + \Pr[\mathcal{E}_\mathrm{mult}]
	,
\end{aligned}
\end{equation}
where the first and last inequalities hold by the law of total probability, and the second inequality holds by the next observation.

\begin{observation} \label{obs:bounding_single}
For every integer $0 \leq h \leq \eta$,
  \[
    \Pr[\overline{h} = h]
    \leq \varepsilon \cdot \Pr[\overline{h} < h] + \frac{\varepsilon^3}{\ln \rho}.
  \]
\end{observation}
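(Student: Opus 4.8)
The plan is to prove the inequality by a short case analysis on $h$, using only the explicit description of the distribution of $\overline{h}$ in line~\ref{algline:chooseH} of \algsinglelink[] together with the choice of the parameter $\eta$.

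For the boundary case $h = 0$ the statement is trivial: $\overline{h}$ is supported on $\{1,\dots,\eta\}$, so $\Pr[\overline{h} = 0] = 0$, while the right-hand side of the claimed inequality is nonnegative. For every $h \in \{2,\dots,\eta\}$ the inequality is again immediate, since the distribution was defined precisely so that $\Pr[\overline{h} = h] = \varepsilon \cdot \Pr[\overline{h} < h]$; in this range the additive term $\varepsilon^3/\ln\rho$ is pure slack.

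The only case requiring more than the definition of the distribution is $h = 1$. Here $\Pr[\overline{h} < 1] = 0$ and $\Pr[\overline{h} = 1] = (1+\varepsilon)^{-(\eta - 1)}$, so the claim reduces to showing $(1+\varepsilon)^{\eta - 1} \geq (\ln \rho)/\varepsilon^3$. This follows by taking logarithms to base $1+\varepsilon$ and substituting the defining property $\eta \geq 1 + \log_{1+\varepsilon} \frac{\ln \rho}{\varepsilon^3}$, i.e.\ $\eta - 1 \geq \log_{1+\varepsilon}\frac{\ln\rho}{\varepsilon^3}$, from line~\ref{algline:chooseH}.

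I do not anticipate any genuine difficulty here: the whole statement is a direct verification, and the single ``nontrivial'' case $h = 1$ is really just the design property of the distribution already highlighted in the text, namely that $\eta$ was chosen large enough to force $\Pr[\overline{h} = 1] \leq \varepsilon^3/\ln\rho$.
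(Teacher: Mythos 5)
Your proof is correct and follows essentially the same case analysis as the paper: $h=0$ is trivial, $h=1$ reduces to the design choice $\eta - 1 \geq \log_{1+\varepsilon}\frac{\ln\rho}{\varepsilon^3}$, and $h\geq 2$ uses the defining recurrence $\Pr[\overline{h}=h]=\varepsilon\cdot\Pr[\overline{h}<h]$.
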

\begin{proof}
The observation clearly holds for $h=0$ because $\overline{h}$ is always at least $1$, and thus, $\Pr[\overline{h}=0]=0$. Additionally, for $h = 1$, since $\eta = \big\lceil 1 + \log_{1+\varepsilon} \frac{\ln \rho}{\varepsilon^3} \big\rceil$,
\[
	\Pr[\overline{h} = 1]
	=
	\frac{1}{(1+\varepsilon)^{\eta - 1}}
	\leq
	\frac{\varepsilon^3}{\ln \rho}.
\]
Finally, for $h \geq 2$, by the definition of the distribution of $\bar{h}$,
\[\Pr[\overline{h} = h] = \varepsilon \cdot \Pr[\overline{h} < h] < \varepsilon \cdot \Pr[\overline{h} < h] + \frac{\varepsilon^3}{\ln \rho}.\tag*{\qedhere}\]
\end{proof}

%We note that the proof of \Cref{lem:bounding_single} is quite technical, and is based solely on the distribution used to choose $\bar{h}$. Thus, we defer this lemma to the end of the section, and present first \Cref{lem:bounding_multiple} below, which bounds the likelihood of $\mathcal{E}_\mathrm{mult}$. One can observe that plugging the guarantee of \Cref{lem:bounding_multiple} into \Cref{eq:critical} completes the proof of \Cref{prop:inLinkLose}.

\Cref{prop:inLinkLose} now follows by plugging into \Cref{eq:critical} the bound on $\Pr[\mathcal{E}_\mathrm{mult}]$ proved by the next lemma.

\begin{lemma} \label{lem:bounding_multiple}
It holds that $\Pr[\mathcal{E}_\mathrm{mult}] \leq \frac{\varepsilon^3}{\ln \rho}$.
\end{lemma}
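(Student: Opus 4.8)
The plan is to bound $\mathcal{E}_\mathrm{mult}$ by charging it to the \emph{first} critical set in the sequence $A_0,\dots,A_\eta$ and then arguing that, with overwhelming probability over the fresh samples used in the following iteration, no further critical set can appear. The structural fact that drives this is the nesting $A_0\subseteq A_1\subseteq\dots\subseteq A_\eta$, which holds even though the $A_h$ are built from estimated probabilities: any $e'\in A_h$ lies in $\spn(A_h\cup S)$ for \emph{every} sample set $S$, so $\widehat{\Pr}_{S}[e'\in\spn(A_h\cup S)]=1>(1-\varepsilon)\tau$ and hence $e'\in A_{h+1}$. Consequently, once our fixed element $e$ enters some $A_h$ it belongs to every later set, and since being critical for $e$ requires $e\notin A_h$, none of those later sets is critical. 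Thus, writing $H^\star:=\min\{h:A_h\text{ is critical}\}$ (and $H^\star:=\infty$ if no set is critical), the joint occurrence of $\mathcal{E}_\mathrm{mult}$ and $H^\star=h$ forces $e\notin A_{h+1}$.

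Next I would decompose $\Pr[\mathcal{E}_\mathrm{mult}]=\sum_{h=0}^{\eta}\Pr[H^\star=h]\cdot\Pr[\mathcal{E}_\mathrm{mult}\mid H^\star=h]$, which is legitimate since $\mathcal{E}_\mathrm{mult}$ implies that some critical set exists; the $h=\eta$ summand vanishes because there is no iteration past $\eta$. Fix $h<\eta$. The event $\{H^\star=h\}$ is measurable with respect to the samples drawn in iterations $1,\dots,h$ (line~\ref{algline:getSamples}), hence independent of the fresh samples $S_1^{h+1},\dots,S_q^{h+1}$ used to form $A_{h+1}$. Conditioning also on the realization of $A_h$ — a set with $e\notin A_h$ and true span probability $\mu:=\Pr[e\in\spn(A_h\cup R(x))]>\tau$ — the $q$ indicators $\mathbbm{1}[e\in\spn(A_h\cup S_p^{h+1})]$ are i.i.d.\ Bernoulli with mean $\mu$, and $e\notin A_{h+1}$ exactly when their empirical average is at most $(1-\varepsilon)\tau\le(1-\varepsilon)\mu$. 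A multiplicative Chernoff bound then gives $\Pr[e\notin A_{h+1}\mid A_h]\le e^{-\mu q\varepsilon^2/2}\le e^{-\tau q\varepsilon^2/2}$; since this holds for every critical realization of $A_h$, it holds after conditioning only on $\{H^\star=h\}$, so $\Pr[\mathcal{E}_\mathrm{mult}\mid H^\star=h]\le\Pr[e\notin A_{h+1}\mid H^\star=h]\le e^{-\tau q\varepsilon^2/2}$.

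To conclude, I would invoke the choice of $q$ in line~\ref{algline:chooseQ}: exactly as in the second inequality of \Cref{lem:BNotGrowingMuch}, this value of $q$ ensures $e^{-\tau q\varepsilon^2/2}\le(\varepsilon/\ln\rho)^3\le\varepsilon^3/\ln\rho$, the last step using $\ln\rho\ge1$ (valid since $\rho\ge3$). Summing over $h$ and using $\sum_{h=0}^{\eta-1}\Pr[H^\star=h]\le1$ then yields $\Pr[\mathcal{E}_\mathrm{mult}]\le\varepsilon^3/\ln\rho$, which is the claimed bound; plugging it into \Cref{eq:critical} completes the proof of \Cref{prop:inLinkLose} as well. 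The step I expect to be the main obstacle is the conditioning argument in the middle paragraph: one must verify carefully that conditioning on the event ``$A_h$ is the first critical set'' — which depends only on earlier iterations — does not disturb the distribution of the iteration-$(h+1)$ samples, so that the Chernoff estimate remains valid conditionally. The nesting observation together with the fact that distinct iterations of the loop use independent sample sets (line~\ref{algline:getSamples}) is precisely what makes this conditioning sound.
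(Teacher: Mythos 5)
Your proposal is correct and takes essentially the same route as the paper's proof: both decompose over the index of the first critical set, use that this event depends only on earlier iterations' samples and is therefore independent of the fresh samples used to build $A_{h+1}$, and then apply a multiplicative Chernoff bound together with the choice of $q$. The only cosmetic differences are that you spell out why the $A_h$ are nested and you pass through the intermediate inequality $\Pr[\mathcal{E}_\mathrm{mult} \mid H^\star = h] \le \Pr[e \notin A_{h+1} \mid H^\star = h]$ rather than observing (as the paper does) that, given $H^\star = h$, the event $\mathcal{E}_\mathrm{mult}$ is exactly the event ``$A_{h+1}$ is critical,'' which is in turn equivalent to $e \notin A_{h+1}$; these are the same argument.
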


\begin{proof}
Recall that $A_0 \subseteq A_1 \subseteq \dotso \subseteq A_{\eta}$. This implies that the critical sets form a continuous sub-sequence of the sequence $A_0, A_1, \dotsc, A_{\eta}$.
  Hence, $\mathcal{E}_{\mathrm{mult}}$ happens if and only if there exists an index $h$ such that $A_h$ is the first critical set in the sequence and $A_{h + 1}$ is also critical. 
  However, we (implicitly) show below that if some set $A_h$ is critical, then the next set $A_{h+1}$ is very likely to contain $e$, and thus, very unlikely to be critical.

Let $\mathcal{F}_{h}$ be the event that $A_h$ is the first critical set. By the above discussion,
\[
	\Pr[\mathcal{E}_{\textrm{mult}}]
	=
	\sum_{h = 0}^{\eta - 1} \Pr[\mathcal{F}_h] \cdot \Pr[\text{$A_{h + 1}$ is critical} \mid \mathcal{F}_h]
	.
\]
Thus, to prove the lemma, is suffices to show that $\Pr[\text{$A_{h + 1}$ is critical} \mid \mathcal{F}_h] \leq \frac{\varepsilon^3}{\ln n}$ for every integer $0 \leq h \leq \eta - 1$. In the rest of the proof, we implicitly condition on $\mathcal{F}_h$. Let $R_1, R_2, \dots, R_q$ be the $q$ independent samples from the distribution $\mathcal{D}(x)$ used in the construction of $A_{h+1}$. It is important to notice that the distribution of the samples $R_1, R_2, \dots, R_q$ is unaffected by the conditioning on $\mathcal{F}_h$ because it is possible to determine whether $\mathcal{F}_h$ happened even without knowing the values of these samples.

  For every $j\in [q]$, we denote by $Y_j$ a Bernoulli random variable that is $1$ if $e\in \spn(A_h \cup R_j)$ and $0$ otherwise.
  By the definition of $A_{h+1}$,
  \begin{equation*}
    e \not\in A_{h+1} \iff \frac{1}{q}\sum_{j=1}^q Y_j \leq (1-\varepsilon)\tau.
  \end{equation*}
  In contrast, $A_h$ being critical yields
  \begin{equation*}
    \Pr[Y_j = 1] > \tau \quad \forall j\in [q],
  \end{equation*}
  which, by using a classical Chernoff bound, implies
  \begin{equation*}%\label{eq:multProb}
    \Pr[A_{h+1} \text{ is critical}] \leq \Pr[e\not\in A_{h+1}] = \Pr\left[\frac{1}{q}\sum_{j=1}^q Y_j \leq (1-\varepsilon) \tau \right]
         \leq e^{-\frac{\varepsilon^2 q \tau}{2}}
         \leq \left(\frac{\varepsilon}{\ln \rho}\right)^3 \leq \frac{\varepsilon^3}{\ln \rho}.
  \end{equation*}
(The first inequality holds, in fact, as an equality since $A_h$ being critical implies that $A_{h + 1}$ is critical whenever $e \not \in A_{h + 1}$.)
\end{proof}

\printbibliography

\end{document}